\documentclass[aps,twocolumn,floatfix,nofootinbib]{revtex4-2}

\usepackage{amsmath,amsthm,amsfonts,amssymb,times,bbm,graphicx,bbold,mathrsfs}
\usepackage[hidelinks, colorlinks=true, allcolors=blue]{hyperref}
\usepackage{float}
\graphicspath{{./images/}}

\usepackage{multirow}

\newtheorem{theorem}{Theorem}

\newtheorem*{proposition*}{Proposition}
\newtheorem{lemma}[theorem]{Lemma}
\newtheorem*{lemma*}{Lemma}
\newtheorem{corollary}[theorem]{Corollary}
\newtheorem{definition}[theorem]{Definition}

\newtheorem{remark}[theorem]{Remark}

\newtheorem{condition}{Condition}

\newcommand{\RR}{\mathbb{R}}

\newcommand{\ZZ}{\mathbb{Z}}
\newcommand{\NN}{\mathbb{N}}
\newcommand{\CC}{\mathbb{C}}

\renewcommand{\S}{\mathcal{S}}
\newcommand{\E}{\mathcal{E}}
\renewcommand{\P}{\mathcal{P}}
\newcommand{\D}{\mathcal{D}}

\renewcommand{\L}{\mathcal{L}}

\newcommand{\Z}{\mathcal{Z}}

\newcommand{\K}{\mathcal{K}}
\newcommand{\T}{\mathcal{T}}

\newcommand{\Em}{\mathscr M}

\newcommand{\Cee}{\mathscr C}
\newcommand{\Gee}{\mathscr G}

\newcommand{\Ess}{\mathscr S}

\newcommand{\Oh}{\mathscr O}

\newcommand{\1}{\underline{1}}
\newcommand{\Id}{\mathbb{1}}

\DeclareMathOperator{\tr}{tr}
\DeclareMathOperator{\Span}{span}

\DeclareMathOperator{\img}{img}

\DeclareMathOperator{\len}{len}

\usepackage{xcolor}

\usepackage{algorithm}
\usepackage{algorithmicx}
\usepackage{algpseudocode}
\algrenewcommand\algorithmiccomment[1]{\hfill {#1}}

\usepackage{stmaryrd}
\newcommand{\swap}[3]{\left\llbracket {}_{#1}{}^{#2}{}_{#3} \right\rrbracket}

\begin{document}

% \nocite{*}

% \title{Self testing the sphere}

% Alternate title

\title{An operational characterization of finite-dimensional quantum theory}

\author{Lionel J.\ Dmello${}^1$}
\email{ldmello@thp.uni-koeln.de} 
\author{Xiangling Xu${}^2$}
\author{Marc-Olivier Renou${}^2$}
\author{David Gross${}^1$}

\affiliation{${}^{1}$Institute for Theoretical Physics, University of Cologne, Germany}
\affiliation{${}^{2}$Inria, CPHT, LIX, CNRS, \'Ecole Polytechnique, Institut Polytechnique de Paris, Palaiseau, France}

\date{Sept 21, 2026} 

\begin{abstract}
    A key goal in the foundations of quantum mechanics is to identify operational constraints characterizing physical theories. 
    Bell inequalities do so for classical probability theory, while Tsirelson's bound provides a first step for quantum mechanics.
    Here, we address the dual question:
    Can one certify that all correlations predicted by quantum theory are actually realizable?
    In this work, we construct a finite number of two-body correlations such that the only probabilistic theory that
    (1) realizes them, and
    (2) does so in a way that is stable under iterated teleportation,
    is quantum theory.
    For $n$ quantum systems of dimension $d$ each, Condition~(1) can be verified using $\operatorname{poly}(d,n)$ measurement settings.
    Condition~(2) may be understood as a hierarchy of tests, one for each number $N$ of teleportation steps.
    There is thus a sense in which finite-dimensional quantum theory can be self-tested.
    In particular, one can certify the existence of Bell inequality violations larger than any that have been directly observed.
\end{abstract}

\maketitle 

%================================================================================================
% Introduction
%================================================================================================

\section{Introduction}\label{sec:introduction}

A central goal in the foundations of quantum mechanics is to understand what singles out quantum theory among all conceivable physical theories. 
Historically, this has been studied by postulating physical principles like no-signalling or information causality \cite{winter2009informationcausality, miklin2021infocausality} (or many others \cite{fritz2013localorth, hardy2001quantumtheoryreasonableaxioms, chiribella2017quantum, wilce2018royalroad, brassard2006limit, linden2007quantum, navascues2010glance,clauser1969proposed})
that constrain the set of allowed correlations ``from the outside''.
That is, the usual approach is to provide an explanation for the absence of super-quantum correlations in Nature.

Here, we address the dual question:
Can one certify from observations that all states and effects predicted by quantum theory are actually realizable?
We answer this question for finite-dimensional systems in the affirmative.

For $n$ quantum systems of dimension $d$ each, we construct a two-step protocol. 
The first step is a three-party entanglement swapping experiment. 
The correlations observed in this experiment encode properties of a dynamical map 
in a way that is analogous to the well-known quantum mechanical \emph{gate teleportation protocol}.
The second step consists of a hierarchy of tests,
which at level $N\in\NN$ checks whether these correlations are preserved under $N$ iterated entanglement swapping steps.
We show that if a theory passes every test in this hierarchy,
then the analogy with quantum gate teleportation becomes exact,
and the correlations uniquely specify a dynamical map. 
If the maps characterized this way generate a group isomorphic to the symmetry group $U(d^n)$ of $n$ qudits, 
it follows that the theory must contain all states and effects realizable in quantum mechanics.

The technical argument relies on the analysis of iterated entanglement swapping experiments in probabilistic theories, a method previously developed in Refs.~\cite{weilenmann2020self, mirjam2020achsh, dmello2024entanglement, dmello2026probabilistic}.
The main limitation of our result is that, unlike in Bell's theorem, no single experiment rejects all alternative theories: 
quantum mechanics is characterized as the unique theory that passes a hierarchy of tests labeled by a \emph{level} $N\in \NN$.

%========================================================
% General probabilistic theories
%========================================================

\subsection{General Probabilistic Theories (GPTs)}\label{subsec:GPTs}

In this section we introduce the GPT formalism from our (LD and DG) previous work \cite{dmello2026probabilistic}, which was itself adopted from even earlier work \cite{dmello2024entanglement}.
Here we provide an overview of the formalism and the modifications.
For a more detailed description, please refer to Sec.~I and Sec.~II of \cite{dmello2024entanglement}.
For a general overview of the GPT formalism we refer the reader to \cite{plavala2023general}.

\emph{General probabilistic theories} (GPTs) \cite{segal1947postulates,ludwig1967attempt2,ludwig1968attempt3,dahn1968attempt,stolz1969attempt,stolz1971attempt,davies1970operational,barrett2007information, plavala2023general} is a framework that allows us to describe correlations that are compatible with operationally motivated assumptions (e.g.\ no-signaling), but are not necessarily realized in QM.
The goal of such a framework is to study the observations of QM in a broader setting in order to extract the underlying physical principles governing them.

We assume that a GPT is specified by 
\begin{itemize}
    \item A finite-dimensional real vector space $V$.
    \item A collection of closed, pointed convex cones\footnote{A convex cone is said to be closed if it contains all it's limit points with respect to some topology. 
    In the present case, since we work with finite dimensional real vector spaces, we take the standard topology. 
    A convex cone $P$ is said to be pointed if $P \cap -P = \{ 0 \}$.},
    \begin{align*}
        \forall n \in \NN, \P^{(n)} \subseteq V^{\otimes n},
    \end{align*}
    where $V^{\otimes n}$ is the standard $n$-fold tensor product of $V$.
    \item A distinguished element $\1$ in the relative interior\footnote{The relative interior of a convex cone is the interior of the cone with the ambient space taken to be the affine hull of the cone.} of $\P^{(1)}$.
\end{itemize}
We call $\P^{(n)}$ the \emph{effect cones} and $\1$ the \emph{unit effect}.
The effects model measurement outcomes, in the sense that the effect $e$ encodes the probability of obtaining the outcome labelled by $e$.
The unit effect in particular, models the trivial measurement, which encodes the fact that in every experiment some outcome must occur.
Therefore, a set of effects $\{e_i\}_{i = 1}^m \subseteq \P^{(n)}$ defines a measurement if, 
\begin{align*}
    \sum_{i = 1}^m e_i = \1^{\otimes n}. 
\end{align*}

In this formalism, \emph{states} are viewed as linear functionals on effects.
To this end let $(\P^{(n)})' \subseteq (V^{\otimes n})^*$ be the polar dual\footnote{The polar dual $P'$ of a convex cone $P$ is the set of all non-negative linear functionals on $P$.} of $\P^{(n)}$.
We choose a set of pointed cones $\forall n \in \NN : \D^{(n)} \subseteq (\P^{(n)})'$, that are generating\footnote{A convex cone is said to be generating if it spans the ambient space.} on $\Span(\P^{(n)})^*$, called the \emph{state cones}.
The convex sets
\begin{align*}
    \S^{(n)} := \{ \rho \in \D^{(n)} \ | \ \rho(\1^{\otimes n}) = 1 \},
\end{align*}
are called the \emph{state spaces} of the GPT.
The state spaces are compact sets (App.~\ref{app:compact_state_sapce}), a fact that will be important later (Sec.~\ref{subsec:comp_topo_semigrp}).

Let $e \in \P^{(n)}$, define the \emph{negation} of $e$ as 
\begin{align*}
    \neg e := \1^{\otimes n} - e.
\end{align*}
The (convex) set of all those elements of $\P^{(n)}$ whose negation is also in $\P^{(n)}$ is called the \emph{effect space} $\E^{(n)}$, i.e., 
\begin{align*}
    \E^{(n)} := \P^{(n)} \cap \neg \P^{(n)}.
\end{align*}
This definition implies that 
\begin{align*}
    e \in \E^{(n)} \Leftrightarrow e \in \P^{(n)} \ \text{and} \ \neg e \in \P^{(n)},
\end{align*}
from which it directly follows that 
\begin{align*}
    e \in \E^{(n)} \Leftrightarrow \forall \rho \in \S^{(n)}, 0 \leq \rho(e) \leq 1.
\end{align*}
That is, the pairing between any effect $e$ and state $\rho$ is not only positive, but also bounded above by $1$.
As such it will be interpreted as the probability of obtaining the outcome corresponding $e$ given we prepared the state $\rho$.
Additionally, every element of $\P^{(n)}$ with this property is in $\E^{(n)}$.

%========================================================
% GPT closure
%========================================================

\subsubsection{Constructing a GPT from a set of states and effects}\label{subsubsec:generating_a_GPT}

It is often the case that we seek to determine whether it is \emph{possible} to achieve certain correlations in some experimental scenario.
Addressing this question typically only requires us to specify states and measurements which, if employed in the experiment, reproduce the correlations in question.
Therefore, it is convenient to have an algorithmic procedure that takes a collection of states and effects and turns it into a GPT that contains them.

Algorithm~1 of \cite{dmello2024entanglement} provides a straight-forward ``closure construction'' that addresses exactly this problem.
In this work, in contrast to the treatment in \cite{dmello2024entanglement}, we do no assume that GPTs are invariant under permutation of subsystems.
Thus, when we refer to the \emph{GPT closure} of sets of states and effects, we mean the output of Algorithm~1 of \cite{dmello2024entanglement} with the symmetrization step omitted.

In this work we focus mainly on entanglement swapping experiments, which can be stated by combining only bipartite states and effects.
For this reason, we will typically consider only GPTs generated by taking the GPT closure of bipartite objects.

%========================================================
% Outline
%========================================================

\subsection{Outline}\label{subsec:outline}

In order to have a clear exposition of the ideas involved, we first describe the entire scheme for the simplest case, namely the qubit.
Then, in the later part of the manuscript, we will show how to generalize from one to many qubits, as well as qubits to qudits.
To this end, in Sec.~\ref{sec:group_gen_expt_qubit} we describe the entanglement swapping experiment which encodes the symmetries of the qubit state space.
In Sec.~\ref{sec:teleportation_heirarchy_qubit}, assuming we have access to independent
%and identically distributed (IID) 
copies of a bipartite preparation $\rho$, we describe the hierarchy of tests that check for stability under teleportation, and subsequently derive the constraints implied by it.
In Sec.~\ref{sec:obtaining_the_qubit} we prove the main claim for the qubit.
In Sec.~\ref{sec:relaxing_iid}, we will show that the IID assumption incurs no loss of generality.
And finally, in Sec.~\ref{sec:fin_dim_qm}, we generalize the hierarchy to $n$ qudits.

%================================================================================================
% GROUP GENERATOR EXPERIMENT
%================================================================================================

\section{The group generator experiment for the qubit}\label{sec:group_gen_expt_qubit}

The \emph{group generator experiment} is an entanglement swapping experiment performed by three parties.
Let the three parties be called Alice, Bob, and Erwin.
The experimental setup is described in Fig.~\ref{fig:gg_test}.
Alice and Bob have a three-settings two-outcome measurement machine, whereas Erwin has a three-setting four-outcome measurements machine.
A bipartite preparation is shared between Alice and Erwin, and Erwin and Bob.

\begin{figure}[H]
    \centering
    \includegraphics[trim={0cm 4cm 0cm 4cm}, clip, width = \linewidth, keepaspectratio]{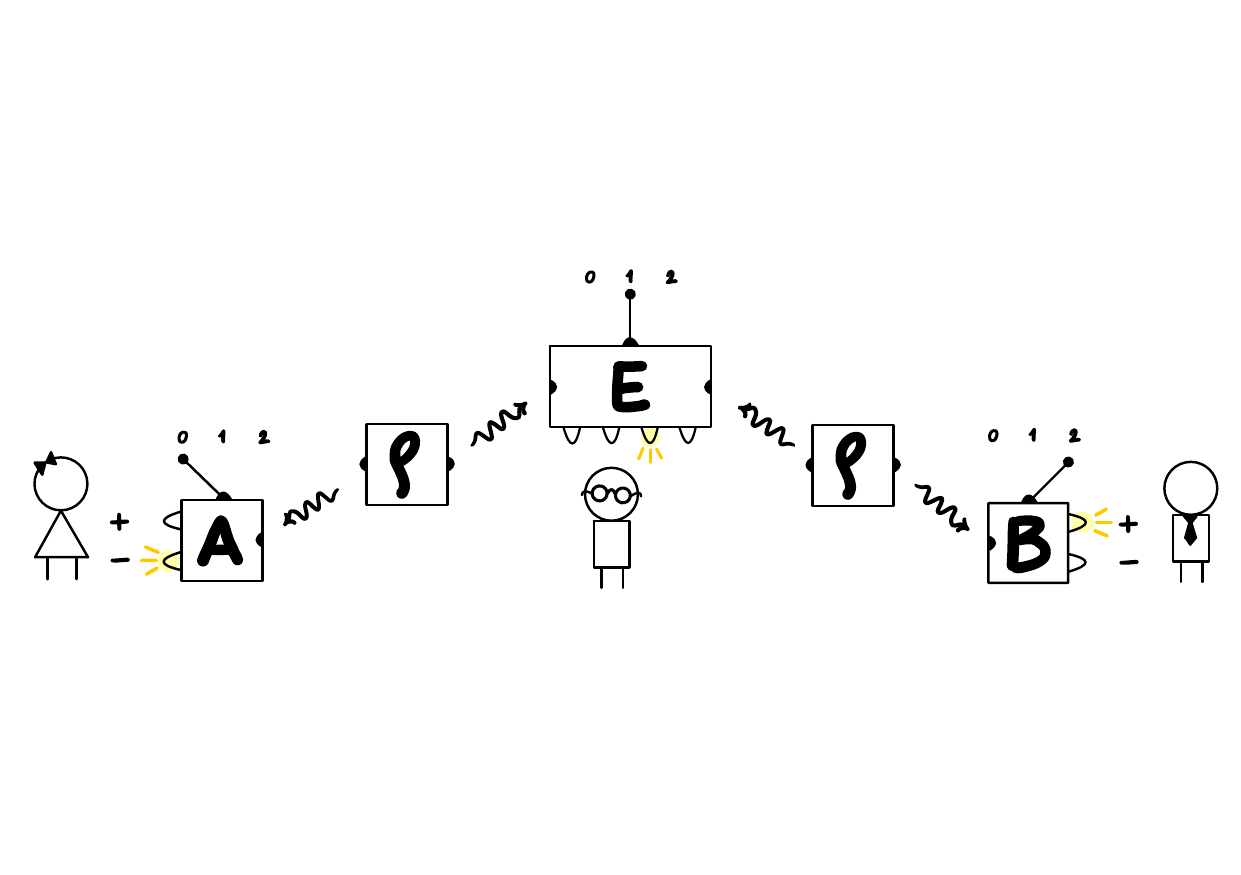}
    \caption{The entanglement swapping experiment for the qubit.}
    \label{fig:gg_test}
\end{figure}

In each round of the experiment, a referee sends questions $(s, r, t) \in \ZZ_3 \times \ZZ_3 \times \ZZ_3$ to Alice, Erwin, and Bob, respectively.
Alice and Bob have to reply with $(a,b) \in \ZZ_2 \times \ZZ_2$. 
In other words, we can model each setting of Alice and Bob with $\pm1$ valued observables $A_s, B_t$ respectively.
Erwin on the other hand as to provide a two-bit (four-outcome) response $\mu \in \ZZ_2^2$.
Let $\phi^{(r)}_{\mu}$ denote Erwin's measurement effect for setting $r$ and outcome $\mu$.
The ``target correlation'' of the experiment is based on the following quantum realization of the experiment.

%========================================================
% Quantum realization
%========================================================

\subsection{Quantum realization}\label{subsec:quantum_realization_for_qubits}

Let the state shared between Alice and Erwin, and Erwin and Bob, be the Bell state
\begin{align*}
    |\Phi^+\rangle = \tfrac1{\sqrt 2} (| 00 \rangle + | 11 \rangle).
\end{align*}
Denote the settings of Alice and Bob by $s, t \in \ZZ_3$.
Using the usual notation for the Pauli operators -- $\sigma_{00} = \Id, \sigma_{01} = \sigma_x, \sigma_{10} = \sigma_y, \sigma_{11} = \sigma_z$, $\pm1$ valued observables of Alice and Bob are 
\begin{itemize}
    \item 
    $A_0 = B_0 = \sigma_{01} = \sigma_x$,
    \item
    $A_1 = B_1 = \sigma_{10} = \sigma_y$,
    \item
    $A_2 = B_2 = \sigma_{11} = \sigma_z$.
\end{itemize}

Let $R^z_{\theta}$ and $R^x_{\theta}$ denote the three-dimensional rotations by angle $\theta$ about the $z$ and $x$-axes respectively.
That is, 
\begin{align*}
    R^z_\theta
    :=
    \begin{pmatrix}
        \cos\theta & \sin\theta & 0\\
        -\sin\theta & \cos\theta & 0\\
        0 & 0 & 1\\
    \end{pmatrix},
    \quad 
    R^x_\theta
    :=
    \begin{pmatrix}
        1 & 0 & 0\\
        0 & \cos\theta & \sin\theta\\
        0 & -\sin\theta & \cos\theta\\
    \end{pmatrix}.
\end{align*}
For any $\theta \in (0, \pi/4]$, these matrices generate a dense subset of the group $SO(3)$ (this is a special case of a much more general fact \cite{breuillard2002densefreesubgroupslie}).
Let $\{ \xi_r \}_{r \in \ZZ_3} \subseteq U(2)$ be such that  
\begin{itemize}
    \item $\xi_0 = \Id$,
    \item $\xi_1(\cdot)\xi_1^\dag$ implements $R^z_{\theta}$ on the Bloch sphere, 
    \item $\xi_2(\cdot)\xi_2^\dag$ implements $R^x_{\theta}$ on the Bloch sphere.
\end{itemize}
For each $r \in \ZZ_3$ Erwin performs the following basis measurement
\begin{align*}
    \{ \sigma_\mu\xi_r^\dag \otimes \Id |\Phi^+\rangle \ | \ \mu \in \ZZ_2^2 \}.
\end{align*}
That is, for setting $r = 0$, it is the usual Bell-basis measurement.
For settings $r = 1,2$, we apply the local rotation $\xi_r$ before performing the Bell-basis measurement.

We will express the correlations between Alice and Bob, conditioned on Erwin choosing setting $r$ and obtaining outcome $\mu$, in the form of a matrix ($\1 = \Id$):
\begin{align*}
    \Cee^{(r)}_\mu 
    :=
    \begin{array}{c|cccc}
        & B_0 & B_1 & B_2 & \1 \\
        \hline
        A_0 & \langle A_0B_0\rangle & \langle A_0B_1\rangle & \langle A_0B_2\rangle & 0 \\
        A_1 & \langle A_1B_0\rangle & \langle A_1B_1\rangle & \langle A_1B_2\rangle & 0 \\
        A_2 & \langle A_2B_0\rangle & \langle A_2B_1\rangle & \langle A_2B_2\rangle & 0 \\ 
        \1 & 0 & 0 & 0 & 1 
    \end{array}
\end{align*}
To simplify formulas we will use the alias $A_3 = B_3 = \1$.
Define 
\begin{gather*}
    \K_{00} := \Id_4, \\
    \K_{01} := 
    \begin{pmatrix}
        1 & 0 & 0 & 0 \\
        0 & -1 & 0 & 0 \\
        0 & 0 & -1 & 0 \\
        0 & 0 & 0 & 1 \\
    \end{pmatrix},
    \quad
    \K_{10} := 
    \begin{pmatrix}
        -1 & 0 & 0 & 0 \\
        0 & 1 & 0 & 0 \\
        0 & 0 & -1 & 0 \\
        0 & 0 & 0 & 1 \\
    \end{pmatrix},
    \\
    \K_{11} := \K_{01}\K_{10}.
\end{gather*}
and
\begin{gather*}
    \Em^{(0)}
    :=
    \begin{pmatrix}
        1 & 0 & 0 & 0 \\
        0 & -1 & 0 & 0 \\
        0 & 0 & 1 & 0 \\
        0 & 0 & 0 & 1 \\
    \end{pmatrix},
    \\
    \Em^{(1)}
    :=
    \begin{pmatrix}
        R^z_\theta & 0 \\
        0 & 1 \\
    \end{pmatrix}
    \Em^{(0)},
    \quad
    \Em^{(2)}
    :=
    \begin{pmatrix}
        R^x_\theta & 0 \\
        0 & 1 \\
    \end{pmatrix}
    \Em^{(0)}.
\end{gather*}
Then, for the aforementioned choices of states and measurements, we get the correlations:
\begin{equation}\label{eqn:Corr_mat_def}
    \Cee^{(r)}_\mu = \K_\mu \Em^{(r)}.
\end{equation}
For the derivation of the above correlation matrices refer to App.~\ref{app:qm_corr_mats}. 

The correlation matrices $\Cee^{(r)}_\mu$ will be used to define the target correlation in the teleportation experiment discussed in the next section.

%================================================================================================
% THE TELEPORTATION HIERARCHY
%================================================================================================

\section{The teleportation hierarchy for the qubit}\label{sec:teleportation_heirarchy_qubit}

In this section we describe a hierarchy of entanglement swapping experiments whose statistics encode the fact that 
the correlations of the group generator experiment (Sec.~\ref{sec:group_gen_expt_qubit}) are stable under iterated teleportation.
Subsequently, we deduce the constraints on the GPT due to this condition.

Just as in the case of self-testing \cite{vsupic2020self}, it is not possible to make states about the parts of the GPT that are ``unreachable'' given some starting data.
Thus, when we say that we deduce constraints on the GPT, we mean the GPT closure (Sec.~\ref{subsubsec:generating_a_GPT}) of the states and measurements that go into the specification of the experiment.
We call this the \emph{effective GPT}.
Keeping this in mind, later we will show that the only (effective) GPT satisfying these conditions is the qubit GPT.

Before we begin we re-emphasize that, in what follows, we do not assume the postulates of QM to hold.

%========================================================
% Notation
%========================================================

\subsection{Some notation}\label{subsec:notation}

In the succeeding parts, we will consider the case where there are multiple entanglement swappers, i.e., multiple ``Erwins''.
For the case of $N$ Erwins, we will collect the settings of all the Erwins in a list denoted by $\vec r \in \ZZ_3^N$.
Correspondingly, we will have a list of outcomes $\vec \mu \in (\ZZ_2^2)^N$.
In fact we will restrict our attention to a special subset of settings, namely those where all but one Erwin is chooses the setting $0$.
To this end, define the set of \emph{``zero-padded'' setting vectors} as follows 
\begin{equation}\label{eqn:zero_padded_vects}
    \Z := \{ (\underbrace{0, \dots, 0}_{n_1}, r, \underbrace{0, \dots, 0}_{n_2}) \ | \ n_1, n_2 \in \NN,\ r \in \ZZ_3 \}.
\end{equation}
We will use the notation $\vec 0_{n_1} \circ r \circ \vec 0_{n_2}$ (``$\circ$'' here denoted concatenation of lists) to denote elements of $\Z$.
This sort of partitioning of the setting vector gives a similar partitioning of the outcome vectors which we will denote by $\vec m_1 \circ \mu \circ \vec m_2$, where $\mu \in \ZZ_2^2$, and $\vec m_1, \vec m_2,$ are vectors of appropriate length with entries in $\ZZ_2^2$.

In the quantum realization of the experiment, the teleportation protocol preserving the group generator correlations corresponds to the usual teleportation protocol. 
It is comprised of a Bell-state and a Bell-basis measurement.
The by-product/correction group of this protocol is $\ZZ_2^2$ under bit-wise addition modulo two.
The matrices $\K_\mu$ (Sec.~\ref{subsec:quantum_realization_for_qubits}) encode this fact and thus form a representation of $\ZZ_2^2$ (in fact, the regular representation), for which we have the homomorphism
\begin{align*}
    \K_\mu \K_\nu = \K_{\mu \oplus \nu},
\end{align*}
where ``$\oplus$'' denotes bitwise X-OR.
Given an outcome vector $\vec \mu = (\mu_1, \dots, \mu_N) \in (\ZZ_2^2)^N$ we will use the notation 
\begin{align*}
    \prod_{i = 1}^N \K_{\mu_i} = \K_{\vartheta(\vec \mu)},
    \qquad
    \vartheta(\vec \mu) := \bigoplus_{i = 1}^N \mu_i.
\end{align*}

%========================================================
% The hierarchy
%========================================================

\subsection{The hierarchy}\label{subsec:the_hierarchy}

Consider a bipartite preparation procedure described by a state $\rho$.
Here, we assume that we have access to an arbitrary number of independent and identical (IID) copies 
of $\rho$.
% Classically, the "D" in IID refers to "distributed", which doesn't really make sense for GPT states. Keeping it anyway for now. (It could refer to IDentical. :))

\begin{definition}[The teleportation hierarchy $\T_N$]\label{def:teleportation_hierarchy}
    Let $N \in \NN$.
    Given $N+1$ copies of a bipartite preparation $\rho$, where the left subsystem of the $k^{\mathrm{th}}$ copy is labelled $2k - 1$ and the right subsystem is labelled $2k$, the \emph{$N^{\mathrm{th}}$ level of the teleportation hierarchy, $\T_N$,} is an entanglement swapping experiment that is performed as follows:
    \begin{itemize}
        \item A referee sends subsystem $1$ to Alice, $2N+2$ to Bob, and subsystems $(2k, 2k+1)$ to Erwin${}_k$ (Fig.~\ref{fig:Test_TN}).

        \item The referee sends questions $(s,t) \in \ZZ_3 \times \ZZ_3$ to Alice and Bob respectively, and $\vec \zeta \in \Z$ (Eqn.~(\ref{eqn:zero_padded_vects})) such that $\len(\vec \zeta) = N$ to $E_1, \dots, E_N$ respectively.
        
        \item The parties have to respond with $(a, \vec \mu, b) \in \ZZ_2 \times (\ZZ_2^2)^N \times \ZZ_2$ respectively.
    \end{itemize}
\end{definition}

\begin{figure}[H]
    \centering
    \includegraphics[trim={0cm 5cm 0cm 5cm}, clip, width = \linewidth, keepaspectratio]{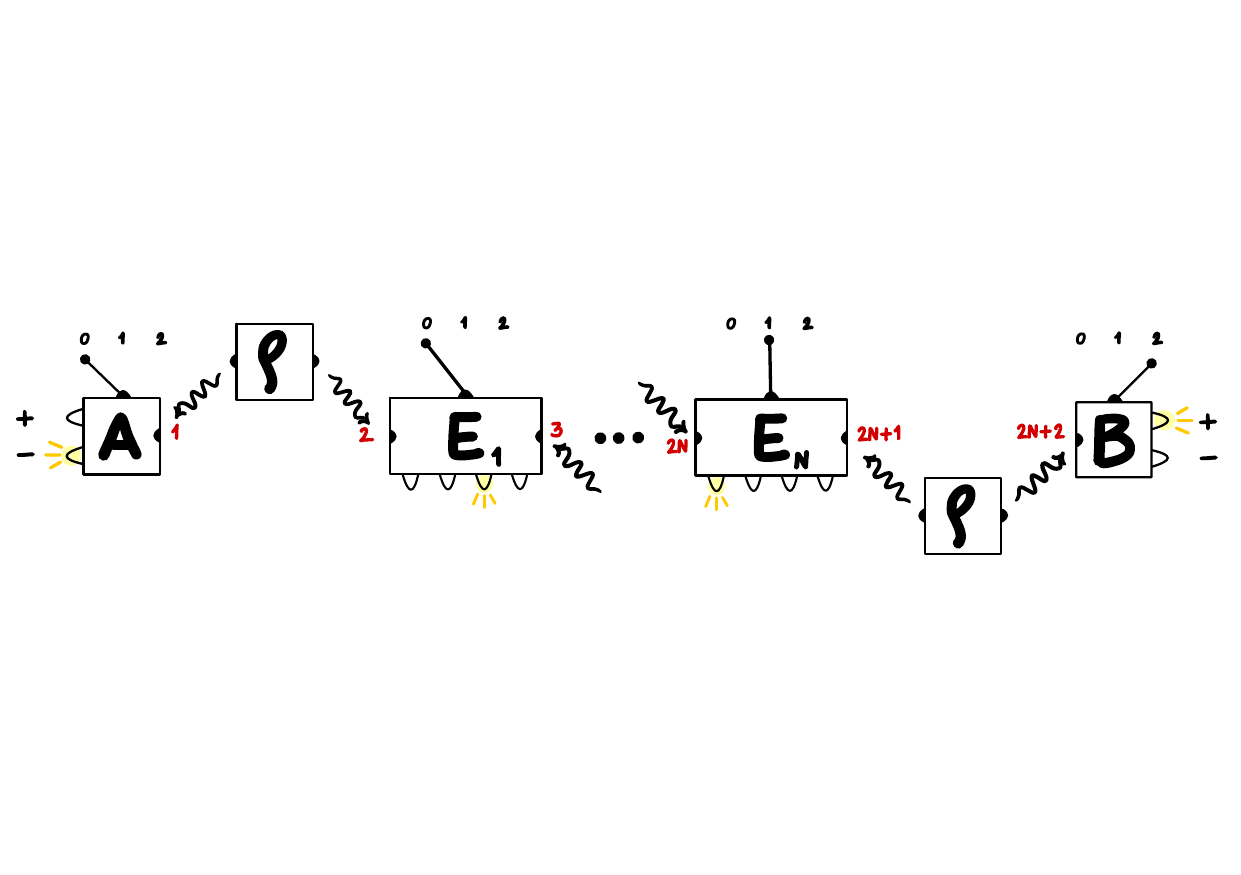}
    \caption{The $N^{\mathrm{th}}$ level of the teleportation hierarchy, $\T_N$.}
    \label{fig:Test_TN}
\end{figure}

\begin{remark}
    \begin{itemize}
        \item $\T_0$ refers to the special case where there is only one copy of $\rho$ and there are no Erwins.
        \item $\T_1$ corresponds to the group generator experiment.
        \item Given the setup for $\T_N$, every $\T_M$ for $M < N$ can be performed by the referee simply ignoring some subset of Erwins.
    \end{itemize}
\end{remark}

The fact that the correlations of the group generator experiment ($\T_1$) are preserved under iterated teleportation is encoded into the following condition, which also serves as the target correlation for each level of the hierarchy

\begin{condition}\label{cond:hierarchy_condition}
    For the teleportation hierarchy $\{\T_N\}_{N \in \NN}$
    \begin{enumerate}
        \item The dataset generated by $\T_0$ produces the correlation matrix $\Cee^{(0)}_{00}$ 
					defined in Eq.~(\ref{eqn:Corr_mat_def}).
        \item For every $N \geq 1$, the dataset generated by $\T_N$, conditioned on setting $\vec \zeta = \vec 0_{n_1} \circ r \circ \vec 0_{n_2} \in \Z,\ \len(\vec \zeta) = N$ and outcome $\vec \mu = \vec m_1 \circ \mu \circ \vec m_2$ (refer to Sec.~\ref{subsec:notation} for notation) produces the correlation matrix
        \begin{align*}
            \Cee^{(\vec \zeta)}_{\vec \mu} := \K_{\vartheta(\vec m_1)} \Cee^{(r)}_{\mu} \K_{\vartheta(\vec m_2)}.
        \end{align*}
    \end{enumerate}
\end{condition}

The rest of this section is dedicated to deriving the constraints on the (effective) GPT by virtue of it satisfying Condition~\ref{cond:hierarchy_condition}.

%========================================================
% Consequences of the hierarchy condition 
%========================================================

\subsection{Consequences of Condition~\ref{cond:hierarchy_condition}}\label{subsec:consequences_of_cond}

\begin{lemma}\label{lem:max_chsh}
    Given a choice of $\theta \in (0, \pi/4]$, the maximum CHSH value observed in $\T_N$, for any $N \in \NN$ is 
    \begin{align*}
        2\sqrt2 \cos\big( \theta - \tfrac\pi4 \big).
    \end{align*}
\end{lemma}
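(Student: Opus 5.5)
The plan is a direct computation. Condition~\ref{cond:hierarchy_condition} fixes every correlation matrix that can be observed in $\T_N$, conditioned on Erwins' settings and outcomes. So the maximum CHSH value is a maximum over a finite, explicitly known family of $4\times4$ matrices. Here ``CHSH value'' means: for a fixed conditional correlation matrix, choose two of Alice's observables $A_i,A_j$ and two of Bob's $B_k,B_l$ ($i\neq j$, $k\neq l$ in $\ZZ_3$), and take the maximum of $|\langle A_iB_k\rangle+\langle A_iB_l\rangle+\langle A_jB_k\rangle-\langle A_jB_l\rangle|$ over the possible placements of the minus sign. Since the marginals in all target matrices vanish, only the upper-left $3\times3$ block matters.

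\textbf{Step 1: reduce to three matrices up to signs.} By Eq.~(\ref{eqn:Corr_mat_def}) and Condition~\ref{cond:hierarchy_condition}, every observed matrix is $\K_{\alpha}\K_\mu\Em^{(r)}\K_{\beta}$ for some $\alpha,\beta\in\ZZ_2^2$ and $r\in\ZZ_3$. For $\T_0$ it is $\Em^{(0)}$ itself. The matrices $\K$ and $\Em^{(0)}$ are diagonal with entries $\pm1$, acting trivially on the $\1$ row and column. Hence the $3\times3$ block is $D_1 R D_2$, with $D_1,D_2$ diagonal sign matrices and $R\in\{\Id_3,R^z_\theta,R^x_\theta\}$. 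Multiplying rows and columns by signs only flips the signs of the four terms of a CHSH expression. Because we maximise over where the minus sign goes and take absolute values, the maximum is unchanged. (An odd number of flips moves the minus to another position; an even number either keeps the pattern or gives the global negative of a pattern.) So it suffices to treat $R$ itself.

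\textbf{Step 2: case analysis on $2\times2$ submatrices.} For $R=\Id_3$, every $2\times2$ submatrix has at most two nonzero entries, each of modulus $1$, so its value is $\le 2$.

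For $R=R^z_\theta$, the submatrix on rows and columns $\{0,1\}$ is $\begin{pmatrix}\cos\theta&\sin\theta\\-\sin\theta&\cos\theta\end{pmatrix}$. Placing the minus on the $-\sin\theta$ entry gives $2(\cos\theta+\sin\theta)=2\sqrt2\cos(\theta-\tfrac\pi4)$. Since all four entries have modulus $\cos\theta$ or $\sin\theta$, no placement can do better.

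The remaining submatrices involve index $2$ and contain zeros. Each is of one of three types:
\begin{itemize}
\item entries $\{\cos\theta,\sin\theta,0,0\}$, with value $\le \cos\theta+\sin\theta$;
\item entries $\{\cos\theta,0,0,1\}$, with value $\le 1+\cos\theta$;
\item entries $\{\sin\theta,0,0,1\}$, with value $\le 1+\sin\theta$.
\end{itemize}
We must check these do not exceed $2(\cos\theta+\sin\theta)$. This holds because $\cos\theta+\sin\theta\ge1$ on $(0,\pi/4]$, so $1+\cos\theta\le 2\cos\theta+2\sin\theta$ (and likewise for the others). The case $R^x_\theta$ is identical after relabelling the axes.

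\textbf{Step 3: conclude.} The maximum over all observed matrices is therefore $2\sqrt2\cos(\theta-\tfrac\pi4)$, which is $\ge2$ and is attained for $N\ge1$ whenever Erwin${}_k$ uses $r\neq0$. For $N=0$, only $\Em^{(0)}$ occurs and the value is $2$. The statement is presumably meant for $N\ge1$, or as the maximum over the whole hierarchy; I would add this remark explicitly.

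The main obstacle is only bookkeeping: making sure the sign matrices $\K$ really cannot raise the value (Step~1), and that the mixed submatrices involving the unrotated axis never beat the rotated $2\times2$ block. The latter is exactly where the restriction $\theta\le\pi/4$, via $\cos\theta+\sin\theta\ge1$, enters.
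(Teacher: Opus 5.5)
Your proposal is correct and follows the paper's own argument. The $\K$'s (and $\Em^{(0)}$) only flip signs of rows and columns, so the maximum reduces to reading off $2(\cos\theta+\sin\theta)$ from the rotated $2\times2$ block of $\Cee^{(1)}_{00}$ or $\Cee^{(2)}_{00}$; your explicit check of the mixed submatrices and of $r=0$, and your remark that $\T_0$ only yields the value $2$, are more careful than the paper's proof. Two small slips do not affect the conclusion. First, the parenthetical about odd and even numbers of flips is inaccurate: flipping row $i$ and column $k$ also moves the minus sign, and the correct invariant is that each row or column flip changes the sign of exactly two of the four terms, so the parity of the number of minus signs is preserved. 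Second, $\cos\theta+\sin\theta\ge1$ holds on all of $[0,\pi/2]$, so this comparison is not where the restriction $\theta\le\pi/4$ is needed.
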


\begin{proof}
    The CHSH value for all possible CHSH observables for the pairs of settings $(0,1)$, $(1,2)$, and $(2,0)$ can be directly read off from the correlation matrices $\Cee^{(\vec r)}_{\vec \mu}$.
    Since the matrices $\K_\mu$ only flip the sign of some of the rows (columns), they leave the maximum CHSH value unchanged.
    In other words, the action of the $\K_\mu$ can be interpreted as a relabelling of the CHSH observable.
    Thus the maximum CHSH value depends only on the matrices $\Cee^{(r)}_{00}$.
    The maximum value both for $\Cee^{(1)}_{00}$ and $\Cee^{(2)}_{00}$ is 
    \begin{align*}
        2(\cos\theta + \sin\theta) = 2\sqrt2 \cos\big( \theta - \tfrac\pi4 \big).
    \end{align*}
\end{proof}

Thus the maximum CHSH value measurements in $\T_N$ (for any $N$) will be in the interval $(2, 2\sqrt2]$ depending on the chosen angle $\theta$.
Importantly, in every case, a CHSH inequality is violated.
This observed CHSH violation (even if arbitrarily small) is crucial since, in the case where $\theta$ is such that the CHSH value is $2$, the corresponding rotations form a finite group and do not generate $SO(3)$.

\begin{lemma}\label{lem:linear_independence}
    The vectors (corresponding to) $\{ A_\nu \}_{\nu \in \ZZ_4}$ ($A_3 := \1$) form a linearly independent set. (equivalently $\{ B_\alpha \}_{\alpha \in \ZZ_4}$)
\end{lemma}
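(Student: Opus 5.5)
The plan is to use the observed correlation matrix from $\T_0$ (Condition~\ref{cond:hierarchy_condition}, item 1) as a witness of linear independence. In the GPT, the effects $A_\nu\in V$ (viewed as $\pm1$-valued observables, i.e.\ differences of effects, with $A_3=\1$) and $B_\alpha\in V$ pair with the bipartite state $\rho\in(\P^{(2)})'\subseteq (V\otimes V)^*$ to give
\begin{align*}
    \big(\Cee^{(0)}_{00}\big)_{\nu\alpha} = \rho(A_\nu\otimes B_\alpha),\qquad \nu,\alpha\in\ZZ_4 .
\end{align*}
This is the $4\times 4$ matrix $\Em^{(0)}=\operatorname{diag}(1,-1,1,1)$. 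For the entries in the last row and column, I will use the zero entries of the matrix together with $\rho(\1\otimes\1)=1$ (normalization). Equivalently, these entries say that the marginals $\langle A_\nu\rangle$ and $\langle B_\alpha\rangle$ vanish for $\nu,\alpha<3$.

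The key step is a direct argument. Suppose $\sum_\nu c_\nu A_\nu=0$ in $V$. Apply the linear functional $X\mapsto \rho(X\otimes B_\alpha)$, which is linear in the first tensor factor because $\rho$ is linear on $V\otimes V$. This gives
\begin{align*}
    0=\sum_\nu c_\nu\, \rho(A_\nu\otimes B_\alpha)=\sum_\nu c_\nu \big(\Em^{(0)}\big)_{\nu\alpha}=\pm c_\alpha
\end{align*}
for each $\alpha$. Hence all $c_\alpha=0$, so $\{A_\nu\}$ is linearly independent. The same argument, pairing with $A_\nu$ on the left, handles $\{B_\alpha\}$. Put more abstractly, the bilinear form $(X,Y)\mapsto\rho(X\otimes Y)$ restricted to $\Span\{A_\nu\}\times\Span\{B_\alpha\}$ has an invertible Gram matrix, since $\det\Em^{(0)}=-1\neq0$. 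That forces both spanning sets to have dimension $4$.

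The only point requiring care, which I expect to be the mild obstacle, is making precise that the entries of the correlation matrix really are values of a single linear functional on $A_\nu\otimes B_\alpha$. I need to fix how a two-outcome measurement $\{e,\neg e\}$ is encoded as the vector $A=e-\neg e=2e-\1\in V$. Then $\langle A_\nu B_\alpha\rangle$ is, by bilinearity, exactly $\rho(A_\nu\otimes B_\alpha)$, and the $\1$ row and column give the marginals and normalization. Once this identification is in place (it is standard in the GPT framework of Sec.~\ref{subsec:GPTs}), the result follows immediately from the invertibility of $\Em^{(0)}$. Any level $\T_N$ would serve equally well, since each $\Cee^{(\vec\zeta)}_{\vec\mu}$ is a product of invertible matrices.
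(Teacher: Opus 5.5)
Your proposal is correct and matches the paper's argument. The paper's one-line proof likewise derives linear independence from the bilinearity of $\rho$ and the invertibility of the $\T_0$ correlation matrix $\Cee^{(0)}_{00}=\Em^{(0)}$; you have written out the details by pairing a vanishing linear combination against each $B_\alpha$.
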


\begin{proof}
    Follows directly from the bilinearity of $\rho$ and the matrix $\Cee^{(0)}_{00}$.
\end{proof}

The bipartite state $\rho$ defines a linear map from the effects of Alice to the states of Bob as follows
\begin{equation}\label{eqn:rho_map_def}
    \hat\rho : \P_A \to \D_B,\ e \mapsto \hat\rho(e) := \rho(e \otimes \cdot).
\end{equation}
Similarly, effects $\phi^{(r)}_\mu$ define maps from the states of Bob to the effects of Alice 
\begin{equation}\label{eqn:phi_map_def}
    \hat\phi^{(r)}_\mu : \D_B \to \P_A,\ \sigma \mapsto \hat\phi^{(r)}_\mu(\sigma) := (\sigma \otimes \cdot)(\phi^{(r)}_\mu).
\end{equation}
Using these maps, we can define the \emph{teleportation maps} of Erwin (Fig.~\ref{fig:tel_maps}) by 
\begin{equation}\label{eqn:tel_maps_def}
    \begin{alignedat}{3}
        \Lambda^{(r)}_{\mu} 
        := 
        \hat\rho\hat\phi^{(r)}_{\mu} :&& 
        \quad \D_B \quad &\to \quad \D_B,\\
        &&\rho(e \otimes \cdot) &\mapsto \swap{\rho}{\phi^{(r)}_{\mu}}{\rho}(e \otimes \cdot),
    \end{alignedat}
\end{equation}
where $\swap{\rho}{\phi}{\rho}$ is the notation for entanglement swapping established in \cite{dmello2024entanglement}.

\begin{figure}[H]
    \centering
    \includegraphics[trim={0cm 3cm 0cm 2cm}, clip, height = 1.5cm, keepaspectratio]{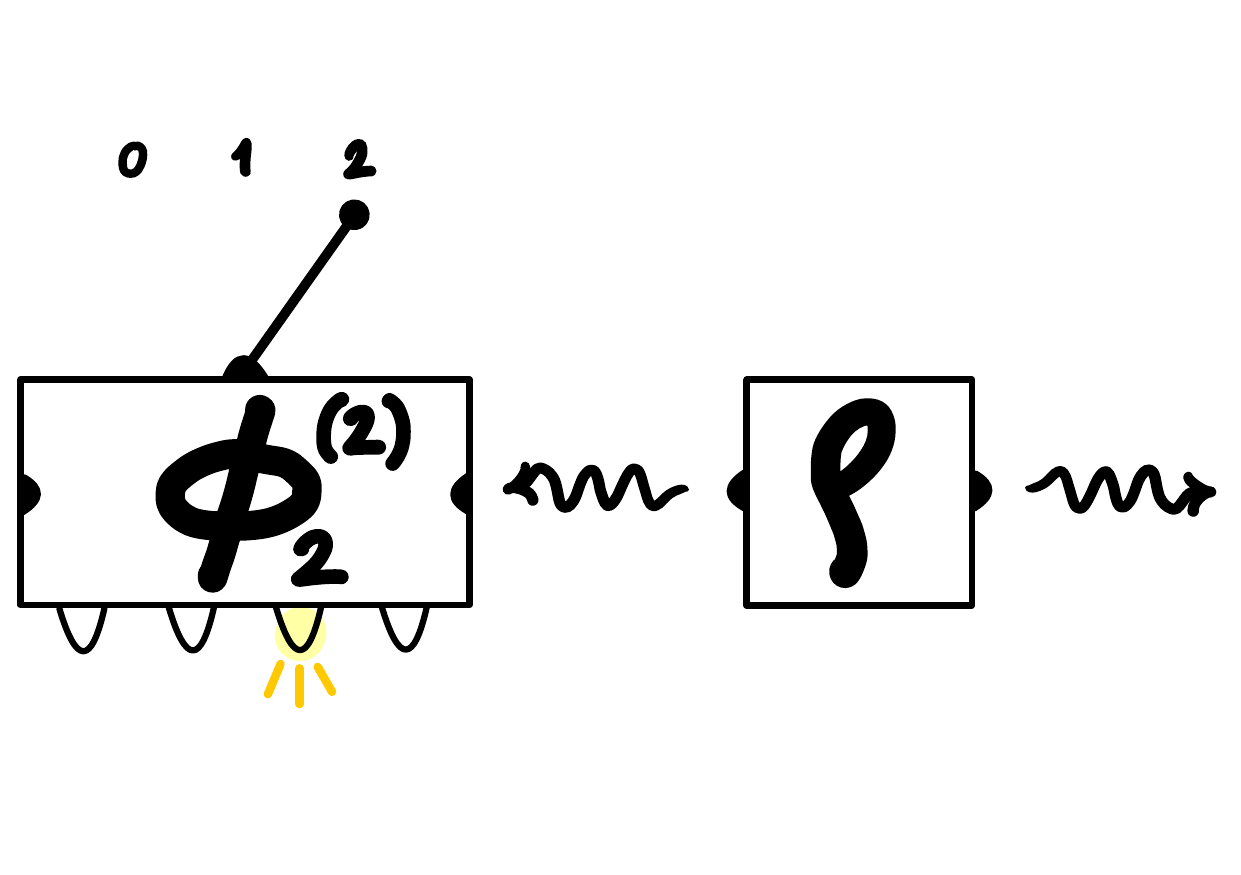}

    (a)

    \begin{align*}
        \raisebox{-0.4\height}{
            \includegraphics[trim={0cm 4cm 0cm 3cm}, clip, height = 0.75cm, keepaspectratio]{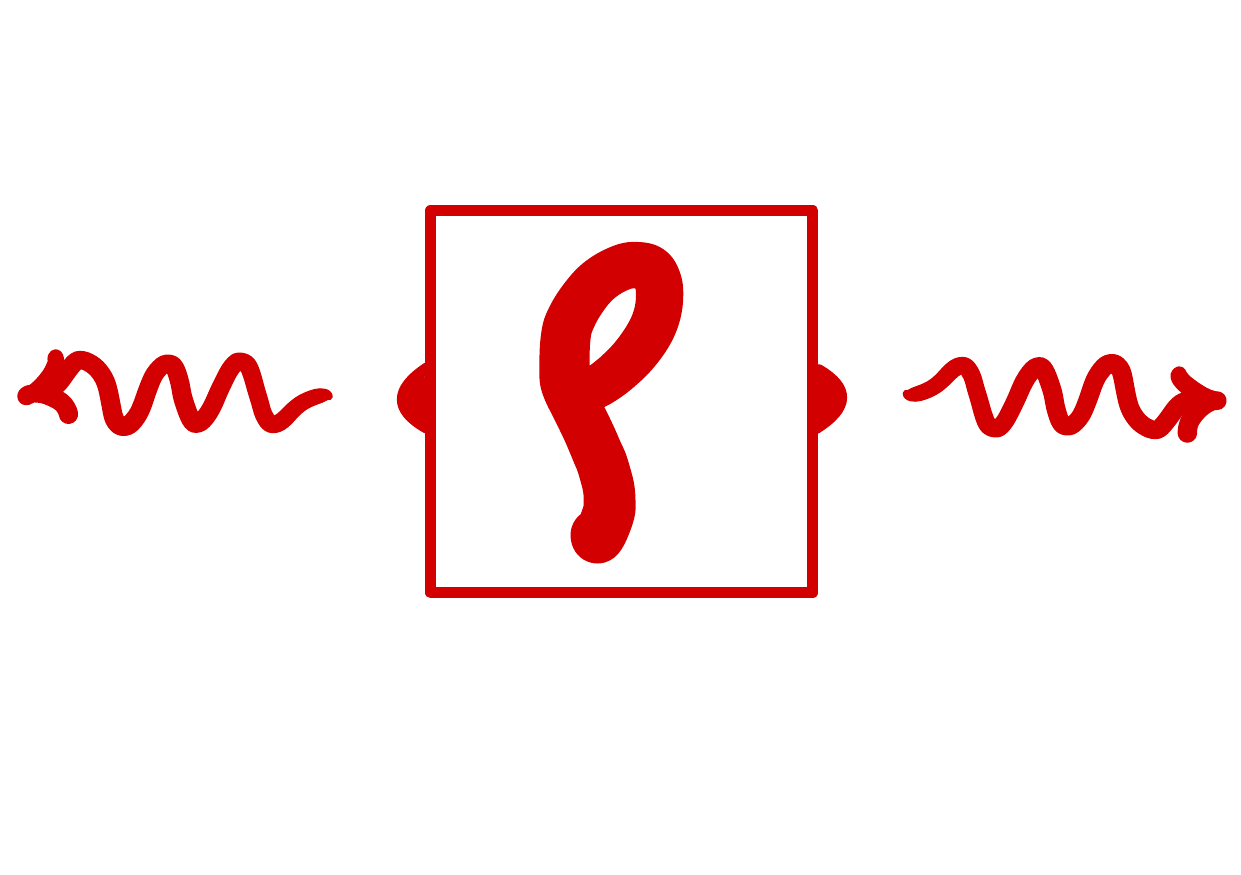}
        }
        \mapsto
        \raisebox{-0.385\height}{
            \includegraphics[trim={0cm 4cm 0cm 3.5cm}, clip, height = 1.75cm, keepaspectratio]{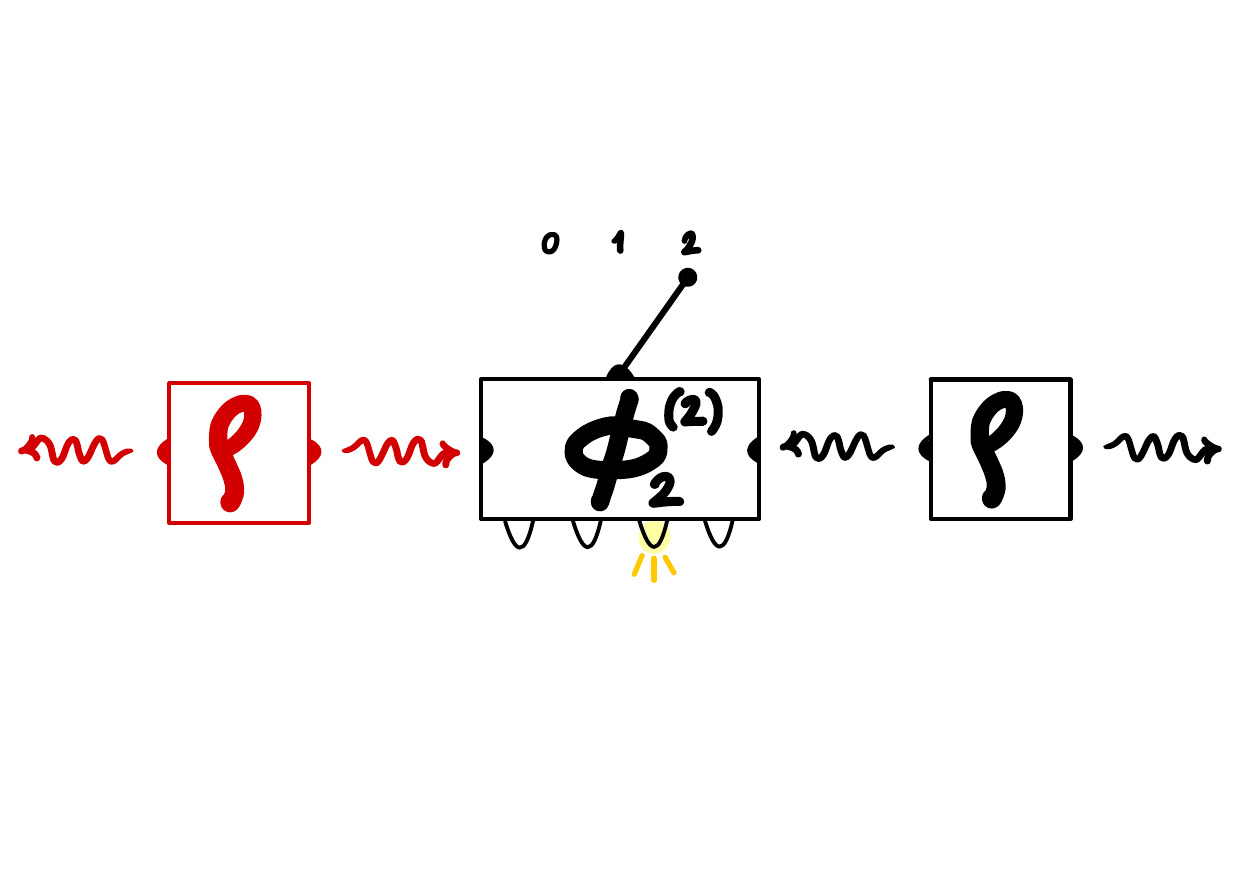}
        }
    \end{align*}
    (b)
    \caption{(a) Teleportation maps of Erwin. (b) The action of this teleportation map on bipartite states.}
    \label{fig:tel_maps}
\end{figure}

In terms of the teleportation maps, for $\T_0$ and $\T_1$, Condition~\ref{cond:hierarchy_condition} gives
\begin{gather}
    \hat\rho(A_\nu) (B_\alpha) = (\Cee^{(0)}_{00})_{\nu\alpha}, \label{eqn:T0}
    \\
    \text{and} \notag
    \\
    \Lambda^{(r)}_{\mu}\hat\rho(A_\nu) (B_\alpha) =  p(\mu | r)(\Cee^{(r)}_{\mu})_{\nu\alpha}, \label{eqn:T1}
\end{gather}
where $p(\mu|r) := \Lambda^{(r)}_{\mu}\hat\rho(\1)(\1)$ is the probability of Erwin obtaining outcome $\mu$ given setting $r$.
The definition of teleportation maps can be extended to multiple Erwins.
Given setting $\vec r$ and outcome $\vec \mu$, the resulting bipartite state between Alice and Bob is 
\begin{align*}
    \tfrac1{p(\vec \mu | \vec r)}\Lambda^{(\vec r)}_{\vec \mu}\hat\rho, \quad 
    \text{with} \quad
    \Lambda^{(\vec r)}_{\vec \mu} := \Lambda^{(r_N)}_{\mu_N} \cdots \Lambda^{(r_1)}_{\mu_1} = \prod_{i = N}^1 \Lambda^{(r_i)}_{\mu_i},
\end{align*}
and analogously $p(\vec \mu| \vec r) := \Lambda^{(\vec r)}_{\vec \mu}\hat\rho(\1)(\1)$.
In terms of these maps, Condition~\ref{cond:hierarchy_condition} implies $\forall \vec \zeta \in \Z,\ \forall \vec \mu \in (\ZZ_2^2)^{\len(\vec \zeta)},$
\begin{equation}\label{eqn:TN}
    \Lambda^{(\vec \zeta)}_{\vec \mu}\hat\rho(A_\nu)(B_\alpha) = p(\vec \mu | \vec \zeta)(\Cee^{(\vec \zeta)}_{\vec \mu})_{\nu\alpha}.
\end{equation}

\begin{lemma}\label{lem:prob_quarter_all}
    For every $\vec \zeta \in \Z$, and every $\vec \mu \in (\ZZ_2^2)^{\len(\vec \zeta)}$, 
    \begin{align*}
        p(\vec \mu | \vec \zeta) = \big(\tfrac14\big)^{\len(\vec \zeta)}.
    \end{align*}
\end{lemma}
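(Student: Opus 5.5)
The plan is to compare two expressions for the same quantity: what the normalization of Erwin's measurements forces, and what Condition~\ref{cond:hierarchy_condition} prescribes. The first gives a product state built from marginals; the second is a $p$-weighted sum of the matrices $\K_\mu$. Reading off the diagonal then gives a small linear system. The argument is an induction on $N=\len(\vec\zeta)$, with $N=0$ trivial.

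\emph{Normalization identity.} Since $\sum_\mu \phi^{(r)}_\mu = \1\otimes\1$ for every setting $r$, linearity of the maps in Eqs.~(\ref{eqn:phi_map_def})–(\ref{eqn:tel_maps_def}) should give, for any bipartite (unnormalized) state $\sigma$ between Alice and Erwin's left input,
\begin{align*}
\sum_{\mu} \Lambda^{(r)}_\mu \sigma(e)(f) = \sigma(e\otimes \1)\,\rho(\1\otimes f).
\end{align*}
That is, discarding Erwin's outcome leaves the product of Alice's marginal of $\sigma$ and Bob's marginal of $\rho$. I would verify this directly from the definition of the entanglement-swapping bracket.

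\emph{Marginals.} By Eq.~(\ref{eqn:T0}) and the form of $\Cee^{(0)}_{00}$, Bob's marginal of $\rho$ satisfies $\rho(\1\otimes B_\alpha)=\delta_{\alpha 3}$. Next take $\sigma = \Lambda^{(\vec\zeta')}_{\vec m}\hat\rho$ for a prefix $\vec\zeta'$. Every $\K_\mu$ and every $\Em^{(r)}$ is block diagonal, with a trivial last row and column. Hence each $\Cee^{(\vec\zeta')}_{\vec m}$ has last column $e_3$, and Eq.~(\ref{eqn:TN}) gives $\sigma(A_\nu\otimes\1)=p(\vec m|\vec\zeta')\,\delta_{\nu3}$.

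\emph{Induction step.} Write $\vec\zeta=\vec\zeta'\circ r_N\in\Z$; the prefix $\vec\zeta'$ again lies in $\Z$. Apply the normalization identity to $\sigma=\Lambda^{(\vec\zeta')}_{\vec m}\hat\rho$ and evaluate on $A_\nu\otimes B_\alpha$. The left side is $\sum_{\mu}p(\vec m\circ\mu|\vec\zeta)\,(\Cee^{(\vec\zeta)}_{\vec m\circ\mu})_{\nu\alpha}$ by Eq.~(\ref{eqn:TN}). The right side is $p(\vec m|\vec\zeta')\,E_{33}$, where $E_{33}$ is the matrix unit at position $(3,3)$.

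Using Condition~\ref{cond:hierarchy_condition}, $\Cee^{(\vec\zeta)}_{\vec m\circ\mu}$ equals $X\K_\mu Y$ with $X,Y$ independent of $\mu$. Concretely, either $X=\Cee^{(\vec\zeta')}_{\vec m}$ and $Y=\Id$ (when $r_N=0$ after the nonzero entry), or $X=\K_{\vartheta(\vec m)}$ and $Y=\Em^{(r)}$ (when the nonzero entry is last). In both cases $X$ and $Y$ are invertible and preserve the last coordinate. Writing $q_\mu:=p(\vec m\circ\mu|\vec\zeta)$, we therefore get
\begin{align*}
\sum_\mu q_\mu \K_\mu = p(\vec m|\vec\zeta')\,X^{-1}E_{33}Y^{-1} = p(\vec m|\vec\zeta')\,E_{33}.
\end{align*}

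\emph{Solving the system.} The $\K_\mu$ are diagonal with $\pm1$ sign patterns forming the characters of $\ZZ_2^2$. The four diagonal entries give $q_{00}+q_{01}+q_{10}+q_{11}=p(\vec m|\vec\zeta')$ and three character sums equal to zero. By invertibility of the character table, $q_\mu=p(\vec m|\vec\zeta')/4$, and induction yields $(1/4)^N$.

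The main obstacle I expect is the normalization identity: making rigorous, in the GPT bracket formalism of \cite{dmello2024entanglement}, that summing Erwin's effects to the unit factorizes the swapped state into marginals. This is also where it must be checked that chaining $\Lambda$'s corresponds to the physical $\T_N$ experiment. The rest is bookkeeping of the block structure.
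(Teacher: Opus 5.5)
Your proposal is correct and follows essentially the same route as the paper. Both arguments marginalize one Erwin via $\sum_\mu \phi^{(r)}_\mu = \1\otimes\1$ to get a product of marginals, cancel the invertible $3\oplus 1$ block-diagonal factors of $\Cee^{(\vec\zeta)}_{\vec\mu}$, and use the linear independence (character orthogonality) of the $\K_\mu$ to force equiprobability.

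The only difference is bookkeeping. You always marginalize the Bob-adjacent Erwin, which covers the cases $r_N = 0$ and $r_N = r$ in one induction, whereas the paper marginalizes whichever zero-padded end Erwin was added. Also, the normalization identity you flag as the main obstacle is immediate from the linearity of $\hat\phi^{(r)}_\mu$ and the definition $\Lambda^{(r)}_\mu = \hat\rho\,\hat\phi^{(r)}_\mu$.
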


This lemma proves three facts: Provided that Condition~\ref{cond:hierarchy_condition} is satisfied, (1) the outcomes of all the individual Erwins must be equiprobable, (2) the distribution over the joint outcomes of the Erwins must factorize, and thus (3) every (joint) outcome must occur with a non-zero probability. 

\begin{proof}
    We will prove this claim iteratively. 
    We start by proving the claim for $\len(\vec \zeta) = 1$, i.e., we show that $p(\mu|r) = \tfrac14$ for every $\mu \in \ZZ_2^2,\ r \in \ZZ_3$.

    For a given $r \in \ZZ_3$, the effects $\{\phi^{(r)}_\mu\}_{\mu \in \ZZ_4}$ describe a measurement.
    Thus they must satisfy 
    \begin{align*}
        \forall r \in \ZZ_3,\ \sum_{\mu} \phi^{(r)}_\mu = \1 \otimes \1.
    \end{align*}
    As a consequence, $\forall r \in \ZZ_3$
    \begin{align*}
        \sum_{\mu} \Lambda^{(r)}_{\mu}\hat\rho(A_\nu) (B_\alpha) 
        &
        = \rho(A_\nu \otimes \1)\rho(\1 \otimes B_\alpha)
        &
        \\
        &
        = \delta_{\nu 3} \delta_{3 \alpha}.
        &
        [\text{by Eqn.~(\ref{eqn:T0})}]
    \end{align*}
    In diagrams, $\forall r \in \ZZ_4$
    \begin{align*}
        \sum_{\mu}
        \raisebox{-0.5\height}{
            \includegraphics[trim={0cm 5.5cm 11.5cm 4.5cm}, clip, height = 1.5cm, keepaspectratio]{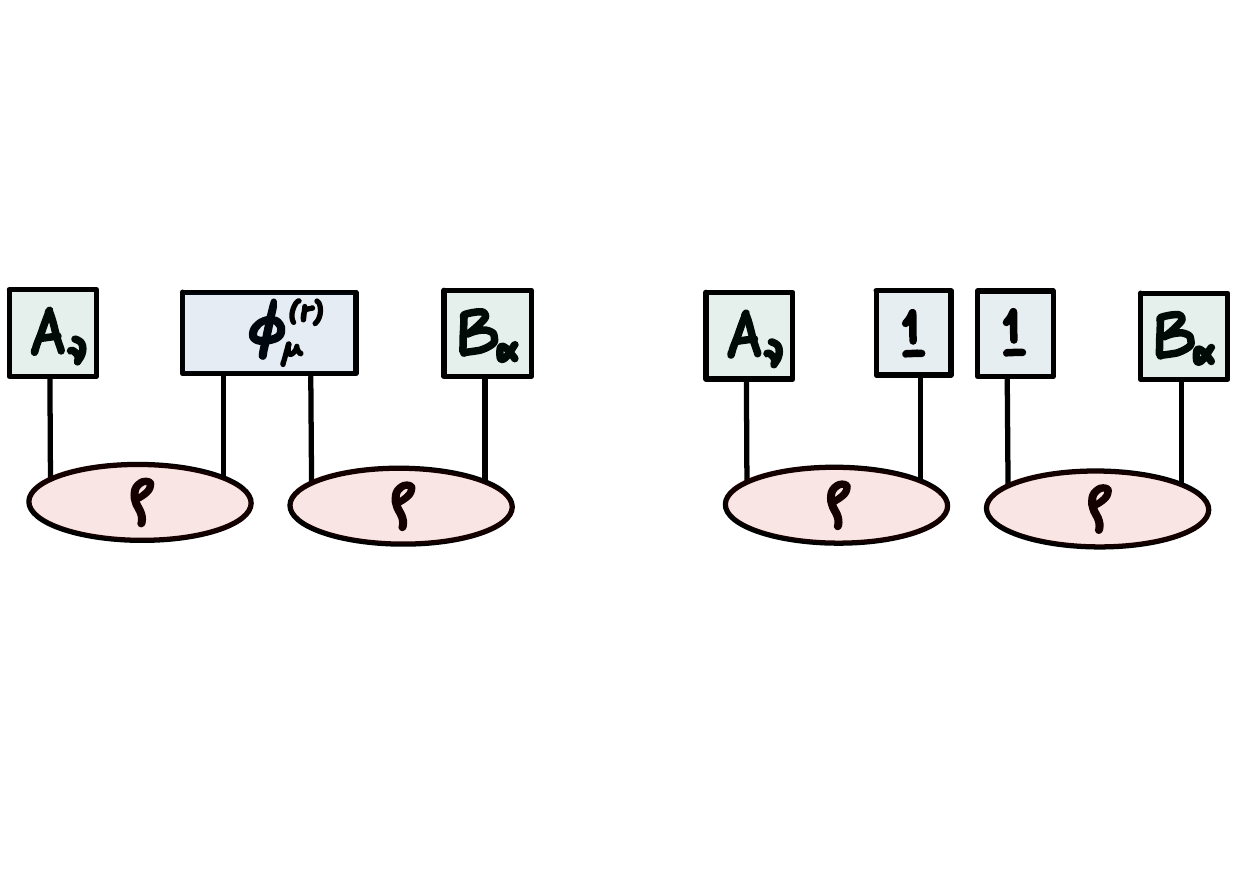}
        }
        =
        \raisebox{-0.5\height}{
            \includegraphics[trim={11.5cm 5.5cm 0cm 4.5cm}, clip, height = 1.5cm, keepaspectratio]{factorisation_1.pdf}
        }.
    \end{align*}
    Let $M$ be the matrix whose elements are $M_{\nu\alpha} = \delta_{\nu 3} \delta_{3 \alpha}$.
    Then $\forall r \in \ZZ_4$ 
    \begin{align*}
        M 
        &
        = \sum_\mu p(\mu | r) \Cee^{(r)}_\mu
        &
        [\text{by Eqn.~(\ref{eqn:T1})}]
        \\
        &
        = \sum_\mu p(\mu | r) \K_\mu \Em^{(r)}.
        &
        [\text{by Eqn.~(\ref{eqn:Corr_mat_def})}]
    \end{align*}
    Now using the fact the $\Em^{(r)}$ is invertible and $M\big( \Em^{(r)} \big)^{-1} = M$ we get 
    \begin{equation}\label{eqn:unifrom_prob_eqn_1}
        M 
        = \sum_\mu p(\mu | r) \K_\mu.
    \end{equation}
    On the other hand, since the $\K_\mu$ form a group representation, the average over the representation gives a projector onto the trivial rep, which in this case is exactly $M$, i.e., 
    \begin{equation}\label{eqn:unifrom_prob_eqn_2}
        \tfrac14 \sum_\mu \K_\mu = M.
    \end{equation}
    Combining Equations~(\ref{eqn:unifrom_prob_eqn_1})~and~(\ref{eqn:unifrom_prob_eqn_2}), the claim for $\len(\vec \zeta) = 1$ follows from the linear independence of the $\K_\mu$.

    Now consider an entanglement swapping chain of length $2$, i.e., $\len(\vec \zeta) = 2$.
    Here we will prove the claim for $\vec \zeta = (0, r)$.
    The proof for $(r, 0)$ follows analogously.
    Let the outcome vector be $\vec \mu = (\mu_1, \mu_2)$.
    Fixing $r$ and $\mu_2$ and applying the measurement condition we obtain 
    \begin{align*}
        \sum_{\mu_1} \Lambda^{(\vec \zeta)}_{\vec \mu} \hat\rho (A_\nu)(B_\alpha) 
        &
        = \rho(A_\nu \otimes \1) \Big(\Lambda^{(r)}_{\mu_2}\hat\rho(\1)(B_\alpha)\Big)
        \\
        &
        = \delta_{\nu3}\ p(\mu_2|r) \big( \Cee^{(r)}_{\mu_2} \big)_{3\alpha}
        \\
        &
        = \tfrac14 \delta_{\nu3} \delta_{3\alpha},
    \end{align*}
    where be have used Equations~(\ref{eqn:T0}),~(\ref{eqn:T1}), as well as the case $\len(\vec \zeta) = 1$.
    Diagrammatically, $\forall r \in \ZZ_3,\ \mu_2 \in \ZZ_2^2$
    \begin{align*}
        &
        \sum_{\mu_1}
        \raisebox{-0.5\height}{
            \includegraphics[trim={0cm 5.5cm 11cm 4.5cm}, clip, height = 2.25cm, keepaspectratio]{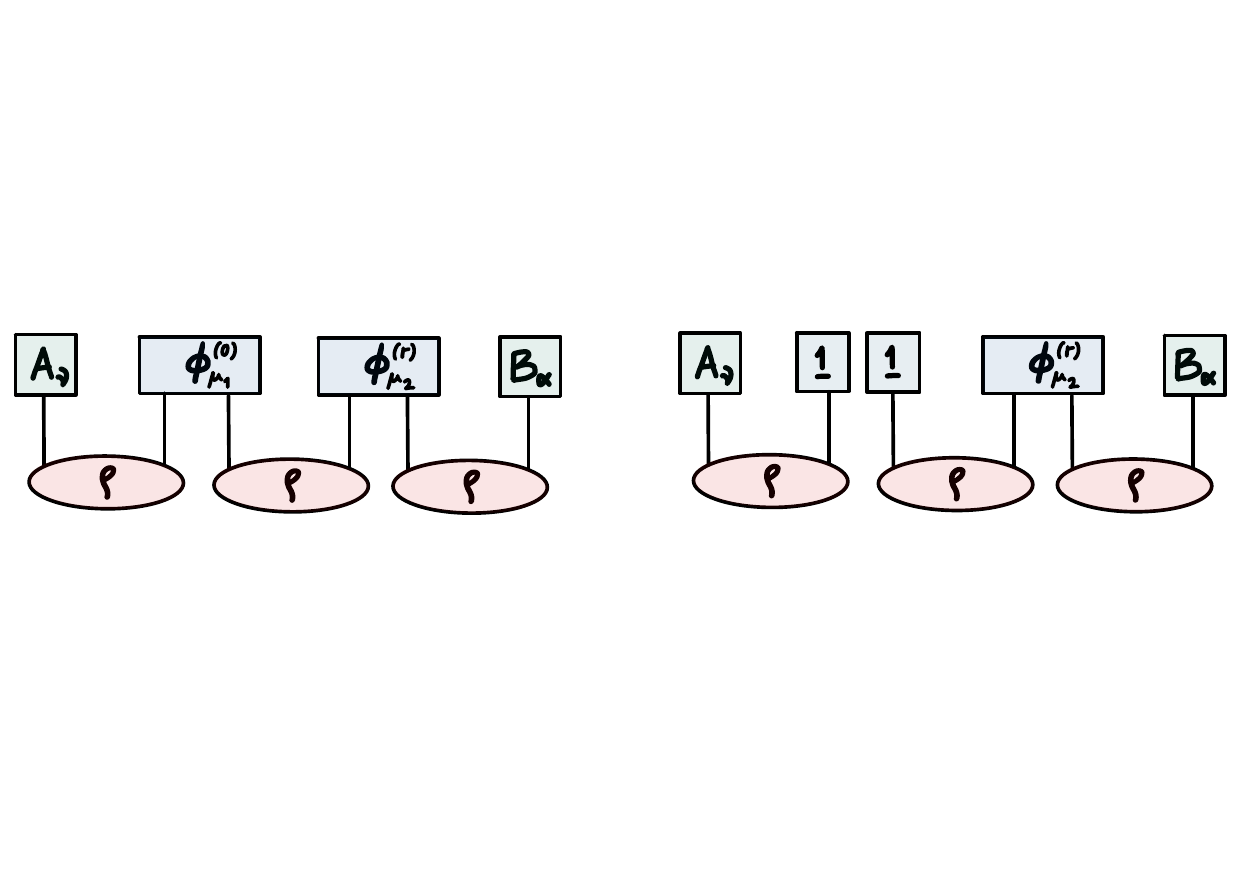}
        }
        \\
        &
        =
        \raisebox{-0.5\height}{
            \includegraphics[trim={11cm 5.5cm 0cm 4.5cm}, clip, height = 2.25cm, keepaspectratio]{factorisation_2.pdf}
        }
    \end{align*}
    Thus, by Eqn.~\ref{eqn:TN} and Condition~\ref{cond:hierarchy_condition} we get $\forall r \in \ZZ_3,\ \forall \mu_2 \in \ZZ_2^2$
    \begin{align*}
        \tfrac14 M
        &
        = \sum_{\mu_1} p(\vec \mu | \vec \zeta) \Cee^{(\vec \zeta)}_{\vec \mu}
        \\
        &
        = \sum_{\mu_1} p(\vec \mu | \vec \zeta) \K_{\mu_1}\Cee^{(r)}_{\mu_2}.
    \end{align*}
    Again, using the invertibility of $\Cee^{(r)}_{\mu_2}$ and $M(\Cee^{(r)}_{\mu_2})^{-1} = M$ we get $\forall r \in \ZZ_3,\ \forall \mu_2 \in \ZZ_2^2$
    \begin{align*}
        \sum_{\mu_1} p(\vec \mu | \vec \zeta) \K_{\mu_1} = \tfrac14 M = \big(\tfrac14\big)^2 \sum_{\mu} \K_{\mu},
    \end{align*}
    which proves the claim for $\len(\vec \zeta) = 2$.
    Now,
    \begin{enumerate}
        \item Every $\vec \zeta \in \Z$ with $\len(\vec \zeta) = N+1$ can be written as either $0 \circ \vec \upsilon$ or $\vec \upsilon \circ 0$ (here ``$\circ$'' refers to concatenation of lists) for some $\vec \upsilon \in \Z$ with $\len(\vec \upsilon) = N$.
        In other words, every $\vec \zeta \in \Z$ is obtained by padding a smaller list by $0$ either to the right or left.
        
        \item $\Cee^{(\vec \zeta)}_{\vec \mu}$ is a product of invertible block matrices (with block structure $3 \oplus 1$).
        Therefore it is always the case that $\Cee^{(\vec \zeta)}_{\vec \mu}$ is invertible and $M(\Cee^{(\vec \zeta)}_{\vec \mu})^{-1} = M$.
    \end{enumerate}
    Thus iterating the above argument by applying the measurement condition each time to the newly padded Erwin proves the claim.
\end{proof}

%========================================================
% Compact topological semigroup
%========================================================

\subsection{Compact topological semigroup}\label{subsec:comp_topo_semigrp}

Let the normalized teleportation maps be denoted by $\tilde\Lambda^{(\vec \zeta)}_{\vec \mu} := \tfrac1{p(\vec\mu | \vec \zeta)}\Lambda^{(\vec \zeta)}_{\vec \mu}$.
Define the subset $\Z_0 \subseteq \Z$ of zero-vectors
\begin{align*}
    \Z_0 := \{ \vec 0_N := (\underbrace{0, \dots, 0}_N) \ | \ N \in \NN,\ N \geq 1 \}.
\end{align*}

\begin{corollary}[to Lem.~\ref{lem:prob_quarter_all}]
    The set of maps 
    \begin{align*}
        S := \Big\{ \tilde\Lambda^{(\vec \zeta)}_{\vec \mu} \ \Big| \ \vec \zeta \in \Z_0  \Big\}
    \end{align*}
    is a semigroup, and is generated by $\{ \tilde\Lambda^{(0)}_\mu \}_{\mu \in \ZZ_2^2}$.
\end{corollary}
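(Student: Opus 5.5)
The plan is to show two things. First, every element of $S$ is a finite product of the four normalized single-Erwin maps $\tilde\Lambda^{(0)}_\mu$. Second, the product of any two elements of $S$ is again in $S$. Both follow from the multiplicative structure of the unnormalized maps together with Lem.~\ref{lem:prob_quarter_all}, which lets the normalization factors factorize.

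\textbf{Factorization.} Fix $\vec 0_N\in\Z_0$ and $\vec\mu=(\mu_1,\dots,\mu_N)\in(\ZZ_2^2)^N$. By definition,
\begin{align*}
    \Lambda^{(\vec 0_N)}_{\vec\mu}=\Lambda^{(0)}_{\mu_N}\cdots\Lambda^{(0)}_{\mu_1}.
\end{align*}
Since $\vec 0_N\in\Z$ and $\vec 0_1\in\Z$, Lem.~\ref{lem:prob_quarter_all} gives $p(\vec\mu|\vec 0_N)=(1/4)^N$ and $p(\mu_i|0)=1/4$. Hence
\begin{align*}
    p(\vec\mu|\vec 0_N)=\prod_{i=1}^N p(\mu_i|0).
\end{align*}
The normalization therefore splits, and by linearity
\begin{align*}
    \tilde\Lambda^{(\vec 0_N)}_{\vec\mu}=\tilde\Lambda^{(0)}_{\mu_N}\cdots\tilde\Lambda^{(0)}_{\mu_1}.
\end{align*}
All these maps are well defined because Lem.~\ref{lem:prob_quarter_all} guarantees that every probability appearing is nonzero. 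This shows that $S$ is contained in the semigroup generated by $\{\tilde\Lambda^{(0)}_\mu\}$.

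\textbf{Closure and generation.} Take $\tilde\Lambda^{(\vec 0_N)}_{\vec\mu}$ and $\tilde\Lambda^{(\vec 0_M)}_{\vec\nu}$ in $S$. Applying the factorization to each, their composition is the ordered product of $N+M$ generators. Applying the factorization again in the reverse direction, this product equals $\tilde\Lambda^{(\vec 0_{N+M})}_{\vec\mu\circ\vec\nu}$, which lies in $S$. Associativity is inherited from composition of linear maps, so $S$ is a semigroup. Conversely, each generator is $\tilde\Lambda^{(\vec 0_1)}_\mu\in S$, and any finite word in the generators equals an element of $S$ by the same identity. So $S$ is exactly the semigroup generated by $\{\tilde\Lambda^{(0)}_\mu\}_{\mu\in\ZZ_2^2}$.

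\textbf{Main obstacle.} The only nontrivial input is the factorization of $p(\vec\mu|\vec 0_N)$, and this is exactly what Lem.~\ref{lem:prob_quarter_all} supplies. Without it, the normalized maps would not compose multiplicatively.

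A minor point is the domain on which the maps act: they are defined on $\D_B$, and composition makes sense because each $\Lambda^{(0)}_\mu$ maps $\D_B$ to $\D_B$. If needed, the statement can instead be read as concerning the restriction of these maps to $\Span\{\hat\rho(A_\nu)\}$, and the same argument applies there.
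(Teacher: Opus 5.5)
Your proposal is correct and is exactly the argument the paper leaves implicit: Lem.~\ref{lem:prob_quarter_all} gives $p(\vec\mu|\vec 0_N)=\prod_i p(\mu_i|0)$, so the normalized maps factor as $\tilde\Lambda^{(\vec 0_N)}_{\vec\mu}=\tilde\Lambda^{(0)}_{\mu_N}\cdots\tilde\Lambda^{(0)}_{\mu_1}$, and this yields closure and generation at once. The only point to tidy is bookkeeping: with the paper's ordering convention, $\tilde\Lambda^{(\vec 0_N)}_{\vec\mu}\tilde\Lambda^{(\vec 0_M)}_{\vec\nu}=\tilde\Lambda^{(\vec 0_{N+M})}_{\vec\nu\circ\vec\mu}$ (you wrote $\vec\mu\circ\vec\nu$), which of course still lies in $S$.
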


In this section:
\begin{itemize}
    \item We show that the closure of the above semigroup is in fact a compact topological semigroup.
    \item Then, we will then use some results about compact topological semigroups (from \cite{eisner2015operator}) to show that this semigroup contains a compact group.
    \item Finally, we will use positivity conditions arising from the fact that these mathematical objects describe physical processes to characterize this compact group. 
\end{itemize}

In order to show this we have to take one preparatory step.

\begin{lemma}\label{lem:rho_invertible}
    Without loss of generality, the map $\hat\rho$ can be assumed to be invertible.
\end{lemma}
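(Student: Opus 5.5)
The plan is to restrict everything to the four-dimensional subspace spanned by the vectors Alice and Bob actually use, and to read off invertibility of $\hat\rho$ from the correlation matrix $\Cee^{(0)}_{00}$.

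By Lemma~\ref{lem:linear_independence}, the effects $\{A_\nu\}_{\nu\in\ZZ_4}$ span a four-dimensional subspace $V_A := \Span\{A_\nu\}$ of Alice's effect space. Similarly $\{B_\alpha\}$ spans a subspace $V_B$. Because we only assert constraints about the \emph{effective} GPT, i.e.\ the GPT closure of the objects entering the experiment, we may replace each local system by the part that the data can detect. On Alice's side this is $V_A$; on Bob's side it is the quotient of his state space by the annihilator $V_B^\perp$, equivalently the dual $V_B^*$.

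Under this reduction, $\hat\rho$ becomes a map $V_A \to V_B^*$. Its matrix in the bases $\{A_\nu\}$ and the dual basis of $\{B_\alpha\}$ is exactly $\hat\rho(A_\nu)(B_\alpha) = (\Cee^{(0)}_{00})_{\nu\alpha}$, by Eqn.~(\ref{eqn:T0}). Since $\Cee^{(0)}_{00} = \Em^{(0)}$ is a diagonal matrix of $\pm1$ entries, it is invertible, and so the reduced $\hat\rho$ is invertible.

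The main task is to justify that this reduction is without loss of generality. Concretely, every quantity used later must only probe $\hat\rho$ through $V_A$ and $V_B$, and the reduced objects must still form a legitimate (effective) GPT. I would argue this in three steps.

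\textbf{Step 1: Erwin's maps respect the reduction.} I want $\hat\phi^{(r)}_\mu$ to descend to a map $V_B^* \to V_A$ up to the relevant equivalence. Two facts suggest this works.
\begin{itemize}
    \item By Eqn.~(\ref{eqn:T1}), every composite $\Lambda^{(r)}_\mu\hat\rho$, evaluated on $A_\nu$ and tested on $B_\alpha$, is determined by the correlation matrices.
    \item Each $\Cee^{(r)}_\mu$ is invertible.
\end{itemize}
Together these show that $\hat\phi^{(r)}_\mu$, restricted to the image $\hat\rho(V_A)$, lands, modulo the annihilator of $\hat\rho(V_A)$ on Alice's side, in a space isomorphic to $V_A$. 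This is the step I expect to be the main obstacle. In particular, one must avoid silently assuming that $\hat\phi$ maps into $\Span\{A_\nu\}$ exactly rather than only up to vectors invisible to $\hat\rho$.

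\textbf{Step 2: The effective GPT is well defined.} Taking the GPT closure of the projected states and effects yields an effective GPT with $V \cong \RR^4$. The unit $\1 = A_3$ is retained, and positivity is inherited because the projections are linear and pairings are unchanged.

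\textbf{Step 3: Iteration.} Compositions $\Lambda^{(\vec\zeta)}_{\vec\mu}$ reduce consistently, by induction along the chain using Eqn.~(\ref{eqn:TN}).

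Once Steps 1–3 are in place, $\hat\rho$ is an invertible map between four-dimensional spaces, with matrix $\Em^{(0)}$. It then follows that $\hat\phi^{(r)}_\mu = \hat\rho^{-1}\Lambda^{(r)}_\mu$, which is the identity needed for the compactness argument in the remainder of this section.
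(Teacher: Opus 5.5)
\textbf{There is a genuine gap, and it sits in your Step~1.} The truncation to the four-dimensional spans of $\{A_\nu\}$ and $\{B_\alpha\}$ cannot be justified from what is available at this point.

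Condition~\ref{cond:hierarchy_condition}, through Eqns.~(\ref{eqn:T1}) and~(\ref{eqn:TN}), only fixes the numbers $B_\alpha\big(\Lambda^{(\vec\zeta)}_{\vec\mu}\hat\rho(A_\nu)\big)$. That is, it fixes the compression of each teleportation map to the block visible to Alice's and Bob's observables.

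The next Erwin, however, acts on the \emph{full} Bob-type vector $\Lambda^{(r)}_\mu\hat\rho(A_\nu)$, not on its class modulo the annihilator of $\Span\{B_\alpha\}$. Likewise, $\hat\phi^{(r)}_\mu\hat\rho(A_\nu)$ may have components outside $\Span\{A_\beta\}+\ker\hat\rho$. Every $B_\alpha$ misses such components after one step, yet they feed into later steps.

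The invertibility of $\Cee^{(r)}_\mu$ says nothing about these components, and compressions of products need not be products of compressions. For example, the zero-padded data do not control a component leaked out of the visible block by the $r=2$ map and brought back by the $r=1$ map, since no zero-padded test contains both. So neither Step~1 nor the induction in Step~3 goes through. Your truncated maps need not correspond to anything the given GPT can realize, and that is what ``without loss of generality'' has to mean here.

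A sanity check: if this four-dimensional truncation were legitimate now, the whole compact-semigroup argument (the idempotent $P$, the Haar projection $\Pi_\varphi$, Lemma~\ref{lem:reg_rep_proof}) would be superfluous. Four-dimensionality is exactly what that machinery establishes, and only after projecting with $\Pi_\varphi,\Pi_\pi$ (Thm.~\ref{thm:effective_gpt}).

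\textbf{The missing idea} is to discard only what $\hat\rho$ itself cannot see, not what the $A_\nu$, $B_\alpha$ cannot see.

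Every correlation of $\T_N$ is a loop $\tr\big((\hat\rho\hat E_1)\cdots(\hat\rho\hat E_{N+1})\big)$. Here one $\hat E$ is the rank-one map built from $B_\alpha\otimes A_\nu$, and the others are Erwin maps $\hat\phi^{(r)}_\mu$. Each $\hat E$ sits between two copies of $\hat\rho$, so only its restriction to $\img\hat\rho$ and its output modulo $\ker\hat\rho$ matter.

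Decompose Alice's \emph{full} space (not your $\Span\{A_\nu\}$) as $U\oplus\ker\hat\rho$. Let $\pi$ be the projection onto $U$ along $\ker\hat\rho$, and $\iota$ the inclusion of $\img\hat\rho$ into Bob's full dual space. Then $\hat\rho=\iota\tilde\rho\pi$ with $\tilde\rho:U\to\img\hat\rho$ an isomorphism. The replacements $\hat\rho\mapsto\tilde\rho$ and $\hat E\mapsto\pi\hat E\iota$ leave every loop unchanged.

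This gives an invertible $\tilde\rho$, possibly on a space of dimension larger than four. That is all the compactness argument $\Ess=(\Ess\hat\rho)\hat\rho^{-1}$ requires. In particular, you never need the claim, unavailable at this stage, that $\hat\rho$ has matrix $\Em^{(0)}$ on the entire space where the teleportation maps act.
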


\begin{proof}
    Let the local GPT of Alice and Bob be supported on $V_A$ and $V_B$ respectively.
    This implies that state $\rho$ is a bilinear form $\rho : V_A \times V_B \to \RR$, which gives a linear map 
    \begin{align*}
        \hat\rho : V_A \to V_B^*,\ v \mapsto \rho(v \otimes \cdot).
    \end{align*}
    Correspondingly, the effects of the Erwins are vectors in $V_B \otimes V_A$ and also define linear maps
    \begin{align*}
        \hat\phi^{(r)}_\mu : V_B^* \to V_A,\ \sigma \mapsto (\sigma \otimes \cdot) (\phi^{(r)}_\mu).
    \end{align*}
    Now, a correlation in the experiment $\T_N$ is mathematically equivalent to the following trace 
    \begin{align*}
        \tr\Big( (\hat\rho \hat E_{1}) \cdots (\hat\rho \hat E_{N+1}) \Big),
    \end{align*}
    where one of the $\hat E$ corresponds to the ``linear map'' version of $B_\alpha \otimes A_\nu$ and all the other $\hat E$ correspond to $\hat\phi^{(r)}_\mu$.
    The trace implies a loop structure of the kind:
    \begin{align*}
        \includegraphics[trim={0cm 3cm 0cm 3cm}, clip, height = 2.5cm, keepaspectratio]{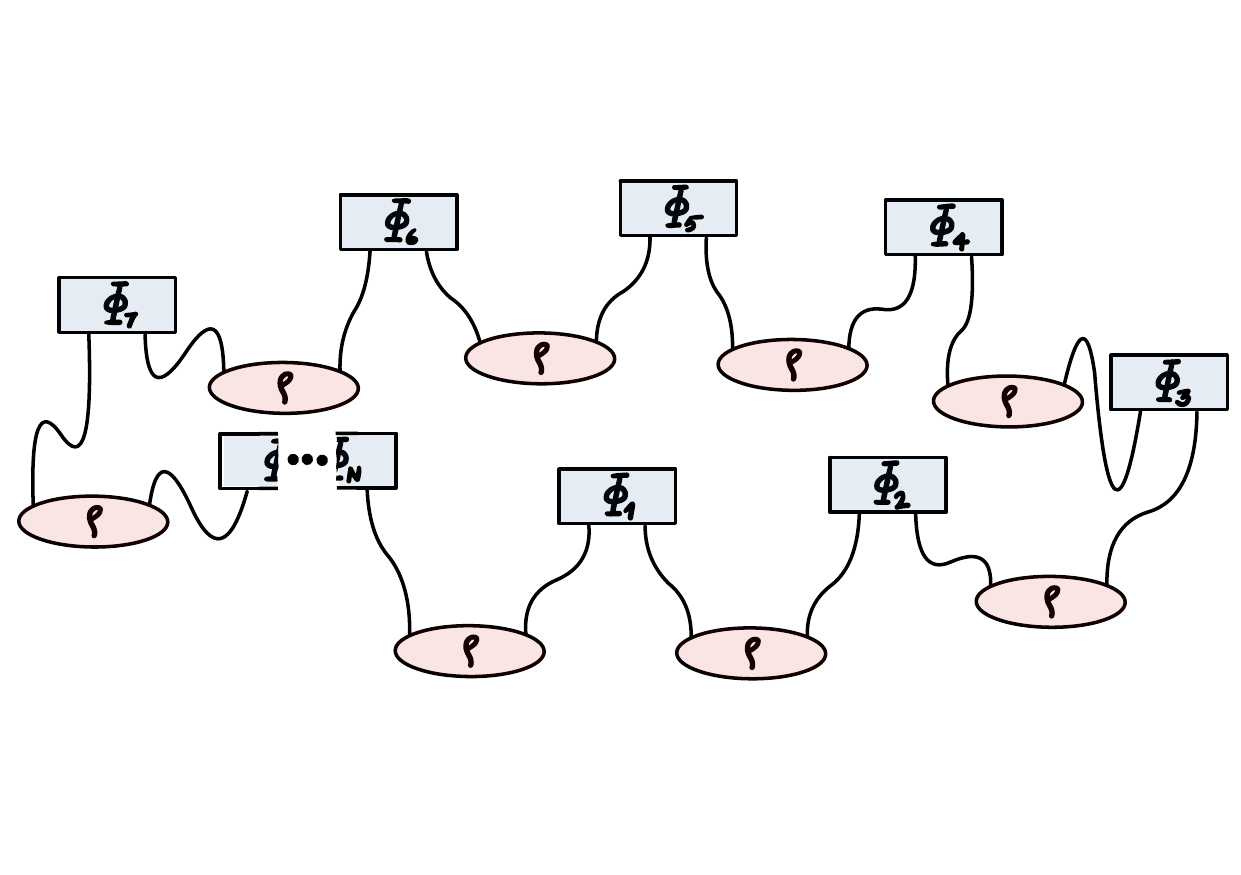}
    \end{align*}
    and the product structure of $B_\alpha \otimes A_\nu$ allows us to unravel the loop.
    
    Now, Let $\ker \hat\rho$ and $\img \hat\rho$ be the kernel and image of the map $\hat\rho$ respectively.
	Choose decompositions 
	\begin{align*}
			V_A = U \oplus \ker \hat\rho,
			\qquad
			V_B^* = \img \hat\rho \oplus W^*.
	\end{align*}
	Clearly $\hat\rho$ induces an isomorphism $\tilde\rho : U \to \img\hat\rho$.
    Let $\pi : V_A \to U$ be the projector onto $U$ along $\ker \hat\rho$, and $\iota : \img\hat\rho \to V_B^*$ be the embedding of $\img \hat\rho$ into $V_B^*$.
    We claim that restricting $V_A$ and $V_B$ to $U$ and $(\img\hat\rho)^* \subseteq V_B$ respectively, leaves all the correlation of the GPT closure of the data specifying $\T_N$ unchanged. 
    More precisely, without loss of generality, we can restrict
    \begin{gather*}
        \hat\rho \mapsto \tilde\rho, \quad
        \hat\phi^{(r)}_\mu \mapsto \pi \hat\phi^{(r)}_\mu \iota, \\
        A_\nu \mapsto A_\nu \iota, \quad 
        B_\alpha \mapsto \pi B_\alpha,
    \end{gather*}
    because
    \begin{align*}
        \tr\Big( \tilde\rho(\pi\hat E_1\iota) \cdots \tilde\rho(\pi\hat E_N\iota) \Big)
        &
        = \tr\Big( (\iota\tilde\rho\pi)\hat E_1 \cdots (\iota\tilde\rho\pi)\hat E_N \Big)
        \\
        &
        = \tr\Big( (\hat\rho\hat E_1) \cdots (\hat\rho\hat E_N) \Big),
	\end{align*}
    where we have used that $\iota \tilde\rho \pi = \hat\rho$ by construction.
\end{proof}

Thus, we assume $\hat\rho$ to be invertible.

\begin{lemma}\label{lem:comp_top_sg}
    The closure $\Ess$ of the semigroup $S$ is a compact topological semigroup.
\end{lemma}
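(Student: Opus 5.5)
The plan is to view $S$ as a subset of the finite-dimensional algebra $\mathrm{End}(V_B^*)$, where $V_B^*$ is the span of Bob's reduced states, with the norm topology. Two things then need checking: the closure $\Ess$ is again a semigroup with jointly continuous multiplication, and $\Ess$ is compact. The algebraic part is standard. Composition of linear maps on a finite-dimensional space is jointly continuous, since it is bilinear. Hence the closure of a semigroup is a semigroup: if $X_k \to X$ and $Y_k \to Y$ with $X_k, Y_k \in S$, then $X_k Y_k \to XY$ and $X_kY_k \in S$. Closedness holds by definition, so everything reduces to showing that $S$ is bounded in operator norm. In finite dimensions closed and bounded then gives compact.

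For boundedness I would use positivity together with Lemma~\ref{lem:prob_quarter_all}. Each $\Lambda^{(\vec 0_N)}_{\vec\mu}$ maps the state cone $\D_B$ into itself. It is a composition of $\hat\phi$'s, which send states to elements of $\P_A$, and $\hat\rho$, which sends elements of $\P_A$ to (unnormalised) states of Bob. Moreover, it acts on unnormalised states $\sigma$ with $\Lambda\sigma(\1) \le \sigma(\1)$, because summing over outcomes gives $\sum_{\vec\mu}\Lambda^{(\vec 0_N)}_{\vec \mu}\sigma(\1) = \sigma(\1)$ by the measurement condition and the marginal of $\rho$ being normalised. Dividing by $p(\vec\mu|\vec 0_N) = 4^{-N}$ costs a factor $4^N$, which is too crude. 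I would therefore use a sharper bound: for each fixed outcome, the normalised map should be trace preserving on the relevant states. Concretely, I would show $\tilde\Lambda^{(0)}_\mu\sigma(\1) = \sigma(\1)$ for every $\sigma \in \D_B$. By Lemma~\ref{lem:rho_invertible}, every $\sigma$ in $V_B^*$ is $\hat\rho(e)$ for some $e\in V_A$, so it suffices to check $\Lambda^{(0)}_\mu \hat\rho(e)(\1) = \tfrac14 \hat\rho(e)(\1)$ for all $e$. For $e$ in the span of $\{A_\nu\}$ this can be read off from Eqn.~(\ref{eqn:T1}), since the last column of $\Cee^{(0)}_\mu$ is $(0,0,0,1)^T$. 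For general $e$ I need an additional argument. I would try to show that $\{A_\nu\}$ spans $U$ in the effective GPT, or alternatively use a factorisation argument like the one in the proof of Lemma~\ref{lem:prob_quarter_all}, which applies to arbitrary $e$ placed on Alice's side by summing over the other Erwins.

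Given unit-preservation on $\D_B$, every element of $S$ maps the compact state space $\S_B$ (compact by App.~\ref{app:compact_state_sapce}) into itself. Since $\D_B$ is generating, I would fix a basis of $V_B^*$ consisting of states $\sigma_1,\dots,\sigma_k \in \S_B$. Any $X\in S$ is then determined by the images $X\sigma_i\in\S_B$, and $\|X\|$ is bounded by a constant depending only on the basis and on $\sup_{\tau\in\S_B}\|\tau\|$. This gives uniform boundedness, so $\Ess$ is compact.

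The main obstacle is the step showing that each normalised teleportation map preserves normalisation on all of Bob's states, not merely on those in the span of $\hat\rho(A_\nu)$. Without this, $1/p = 4^N$ blows up and boundedness fails. I would handle it by showing that the effective GPT on Alice's side is spanned by $\{A_\nu\}$ after the restriction of Lemma~\ref{lem:rho_invertible}. If that turns out to be false, I would instead redo the summation argument of Lemma~\ref{lem:prob_quarter_all} with a general effect $e$ in place of $A_\nu$.
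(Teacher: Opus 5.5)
\textbf{There is a genuine gap: the boundedness step you flag cannot be closed the way you propose, and neither fallback works.}

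Your argument needs the \emph{equality} $\tilde\Lambda^{(0)}_\mu\sigma(\1)=\sigma(\1)$ for every $\sigma\in\D_B$. But Condition~\ref{cond:hierarchy_condition} only ever probes Bob-states of the form $\Lambda^{(\vec\zeta)}_{\vec\mu}\hat\rho(A_\nu)$. For $e\notin\Span\{A_\nu\}$, nothing in the data constrains the individual numbers $\Lambda^{(0)}_\mu\hat\rho(e)(\1)$.
\begin{itemize}
\item Redoing the summation of Lemma~\ref{lem:prob_quarter_all} with a general $e$ only yields the marginal $\sum_{\mu_1}\Lambda^{(0)}_{\mu_2}\Lambda^{(0)}_{\mu_1}\hat\rho(e)(\1)=\tfrac14\rho(e\otimes\1)$. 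The step there that isolates individual outcome probabilities uses the full matrices $\Cee^{(r)}_\mu$ evaluated on $A_\nu$ and $B_\alpha$. You do not have those for a general $e$.
\item Lemma~\ref{lem:rho_invertible} only removes $\ker\hat\rho$; it does not make $\{A_\nu\}$ span $U$. The paper reaches a space spanned by the $A_\nu$ only after projecting with $\Pi_\pi$. That projector is built from the Haar measure on a compact group inside $\Ess$, so it presupposes the compactness you are trying to prove. That route is circular.
\end{itemize}

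\textbf{The missing idea: you need no per-state equality, only total normalisation.} The identity $\tilde\Lambda\hat\rho(\1)(\1)=1$ holds by construction, and it is well defined because $p(\vec\mu|\vec\zeta)>0$ by Lemma~\ref{lem:prob_quarter_all}.

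The paper uses this by viewing $\tilde\Lambda\hat\rho$ as the normalised post-swap Alice--Bob state. Then $S\hat\rho\subseteq\S^{(2)}$, which is compact by App.~\ref{app:compact_state_sapce}. Hence $\Ess\hat\rho$, the closure of $S\hat\rho$, is compact, and so is $\Ess=(\Ess\hat\rho)\hat\rho^{-1}$.

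Inside your single-system framework the same repair reads as follows.
\begin{itemize}
\item For any effect $e$ of Alice, apply positivity to both $e$ and $\neg e$. This gives $0\le\tilde\Lambda\hat\rho(e)(\1)\le\tilde\Lambda\hat\rho(\1)(\1)=1$.
\item Hence $\tilde\Lambda\hat\rho(e)$ lies in the compact set $\{\tau\in\D_B:\tau(\1)\le 1\}$.
\item Choose effects $e_1,\dots,e_k$ whose images $\hat\rho(e_i)$ span $\img\hat\rho$. This bounds $S$ uniformly.
\end{itemize}
The real role of the invertibility of $\hat\rho$ is to turn bounds on $\tilde\Lambda\hat\rho$ into bounds on $\tilde\Lambda$, not merely to let you write $\sigma=\hat\rho(e)$.

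The rest of your proposal is fine: the closure of a semigroup is a semigroup, composition is jointly continuous, and closed plus bounded gives compact.
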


\begin{proof}
	The elements of $\Ess$ are linear maps on a finite-dimensional vector space. 
    Hence their composition is continuous.
	Therefore $\Ess$ is a topological semigroup.
    $\Ess$ is closed by construction.
    For every $\tilde\Lambda \in S$, $\tilde\Lambda\hat\rho(\1)(\1) = 1$ by construction. 
    Therefore $S\hat\rho$ is a subset of the state space $\S^{(2)}$ which is a compact set (c.f.\ App.~\ref{app:compact_state_sapce}).
	The elements of $\Ess\hat\rho$ are limits of sequences of elements of $S\hat\rho$, and hence $\Ess\hat\rho$ is a closed subset of $\S^{(2)}$.
    Thus $\Ess\hat\rho$ is a compact set.
	Finally, $\Ess$ is simply the image of $\Ess\hat\rho$ under the right action of $\hat\rho^{-1}$, which implies $\Ess$ is compact.
\end{proof}

To proceed, we need the following result which holds for compact topological semigroups.
%(in general) and hence in particular for $\Ess$.
%The result 
It is a direct corollary of Theorem~16.3 and Lemma~16.1 of \cite{eisner2015operator}.

\begin{theorem}\label{thm:ellis}
	The semigroup $\Ess$ contains a minimal right-ideal which, in turn, contains at least one idempotent $P$.
	Moreover, $\Gee_P:=P \Ess P$ is a compact topological group, with neutral element $P$.
\end{theorem}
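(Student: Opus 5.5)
The plan is to derive the statement from the general structure theory of compact semigroups, using only that $\Ess$ is a nonempty compact Hausdorff topological semigroup with jointly continuous multiplication (Lemma~\ref{lem:comp_top_sg}). Nonemptiness is immediate, since $\tilde\Lambda^{(0)}_{00}\in S$. Hausdorffness holds because $\Ess$ is a subset of the finite-dimensional space of linear maps on $V_B^*$.

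The first step is to produce a minimal right ideal. For each $x\in\Ess$, the set $x\Ess$ is the image of the compact set $\Ess$ under the continuous map $y\mapsto xy$. It is therefore compact, and it is a right ideal. I will then apply Zorn's lemma to the family of closed right ideals, ordered by reverse inclusion. The intersection of a chain is nonempty by the finite intersection property, and it is again a closed right ideal. This gives a minimal closed right ideal $R$. Minimality among all right ideals follows because, for any $x\in R$, the set $x\Ess\subseteq R$ is a closed right ideal, so $x\Ess=R$. Every right ideal contained in $R$ contains such an $x\Ess$ and hence equals $R$.

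The second step is to find an idempotent $P\in R$. Since $R$ is itself a compact Hausdorff semigroup, I will invoke the Ellis--Numakura lemma, which asserts the existence of an idempotent in any such semigroup. This is the content of Lemma~16.1 of \cite{eisner2015operator}, and if needed I would reprove it. The argument applies Zorn's lemma to closed subsemigroups, yielding a minimal closed subsemigroup $T$, and picks $x\in T$. Then $xT$ is a closed subsemigroup of $T$, so $xT=T$. The set $\{y\in T: xy=x\}$ is therefore a nonempty closed subsemigroup, so it equals $T$. This gives $x^2=x$.

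The third step is to show that $\Gee_P=P\Ess P$ is a group with neutral element $P$. Clearly $P$ acts as a two-sided identity on $P\Ess P$. Now take $g=PsP$. Because $P\in R$, we have $g\in R$, so $g\Ess=R\ni P$. Hence $gy=P$ for some $y$, and then $g(PyP)=gyP=P$, since $gP=g$. So every element has a right inverse within $P\Ess P$. A monoid in which every element has a right inverse is a group, by the standard algebraic argument. Compactness of $\Gee_P$ follows because it is the continuous image of $\Ess$ under $s\mapsto PsP$. Continuity of multiplication is inherited from $\Ess$. Continuity of inversion follows from compactness: if $g_n\to g$ and $g_n^{-1}$ has a limit point $h$, then $gh=P$, so $h=g^{-1}$.

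The main obstacle is the existence of the idempotent, because it is the only genuinely topological input. I would simply cite Theorem~16.3 and Lemma~16.1 of \cite{eisner2015operator} for it, as the paper indicates. In the present finite-dimensional setting it could also be obtained concretely by taking a limit point of powers $x^{n_k}$ of an element, but the abstract route is cleaner.
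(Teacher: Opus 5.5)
Your proposal is correct and takes the same route as the paper, which gives no argument beyond citing Theorem~16.3 and Lemma~16.1 of Eisner et al.; you spell out the standard compact-semigroup theory behind those results. That is: Zorn's lemma for a minimal closed right ideal $R$ with $x\Ess=R$ for all $x\in R$, an Ellis--Numakura idempotent $P\in R$, right inverses in $P\Ess P$ from $g\Ess=R\ni P$, and continuity of inversion, which is legitimate here because multiplication is jointly continuous and the setting is finite-dimensional.
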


For the remainder of the discussion, we fix one such idempotent $P$.
The final result will not depend on which one has been chosen.

\begin{lemma}\label{lem:compact_grp_has_pos_char}
    The character $\chi_P$ afforded by $\Gee_P$ is non-negative. 
\end{lemma}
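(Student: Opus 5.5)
The idea is that every element of $\Gee_P$ is a limit of normalized teleportation maps, and these are positive maps on a cone, so their traces on the invariant subspace can be read off as sums of probabilities.

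\textbf{Step 1 (what the character is).} Here $\chi_P(g) := \tr(g)$ for $g\in\Gee_P$. This is computed on the finite-dimensional space on which the maps in $\Ess$ act. Since $P$ is idempotent and $g = PgP$, the trace equals the trace of the restriction of $g$ to $\img P$. This restriction is a representation of the compact group $\Gee_P$ with identity $P$. The goal is $\tr(g)\ge 0$ for every $g\in\Gee_P$.

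\textbf{Step 2 (trace in terms of probabilities).} Use the loop/trace identity from the proof of Lemma~\ref{lem:rho_invertible}. For $\tilde\Lambda^{(\vec 0_N)}_{\vec\mu} = 4^N \hat\rho\hat\phi_{\mu_N}\cdots\hat\rho\hat\phi_{\mu_1}$, the trace $\tr(\hat\rho\hat E_1\cdots\hat\rho\hat E_N)$ is a closed loop. Cutting it at one $\hat\rho$ and inserting a resolution of the identity would express the trace as a contraction of states with effects. The cleaner route is to use the dual basis furnished by Lemma~\ref{lem:linear_independence} together with Eqn.~(\ref{eqn:T0}). Since $\Cee^{(0)}_{00}$ is invertible, the identity on the relevant space decomposes as
\[
\mathrm{id}=\sum_{\nu,\alpha}\big((\Cee^{(0)}_{00})^{-1}\big)_{\alpha\nu}\,\hat\rho(A_\nu)\otimes B_\alpha .
\]
This presumes the effective space is spanned by the $A_\nu$, which should follow from the reduction in Lemma~\ref{lem:rho_invertible}, since only these effects enter the closure. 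Then $\tr(\tilde\Lambda)=\sum_{\nu\alpha}((\Cee^{(0)}_{00})^{-1})_{\alpha\nu}\,\tilde\Lambda\hat\rho(A_\nu)(B_\alpha)$. By Eqn.~(\ref{eqn:TN}) and Lemma~\ref{lem:prob_quarter_all}, this equals $\tr\big((\Cee^{(0)}_{00})^{-1}\K_{\vartheta(\vec\mu)}\Em^{(0)}\big)=\tr(\K_{\vartheta(\vec\mu)})$, which is $4$ or $0$. Continuity of the trace then gives $\tr(s)\in\{0,4\}$, and in particular $\tr(s)\ge 0$, for all $s\in\Ess$.

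\textbf{Step 3 (passing to $\Gee_P$).} Since $\Gee_P\subseteq\Ess$, we get $\chi_P(g)=\tr(g)\ge0$. If the character is meant on $\img P$ only, this is the same number because $g=PgP$.

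\textbf{Main obstacle.} The delicate step is Step~2: it needs $\{\hat\rho(A_\nu)\}$ to span the space on which $\Ess$ acts. If this fails, I would instead argue positivity directly. Write $\tr(g)=\lim\tr(\tilde\Lambda_k)$ and express $\tr(\tilde\Lambda)=4^N\tr(\hat\rho\hat\phi_{\mu_N}\cdots\hat\rho\hat\phi_{\mu_1})$ as the value of a closed entanglement-swapping ring. Then bound it using the fact that such a loop contracts effects against a state on the same cone, so the value should be nonnegative. I would try the spanning argument first.
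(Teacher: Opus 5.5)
Your Step~2 rests on an assumption that is not available at this point: that the space on which $\Ess$ acts is exactly the four-dimensional span of the $\hat\rho(A_\nu)$, so that $\sum_{\nu,\alpha}((\Cee^{(0)}_{00})^{-1})_{\alpha\nu}\,\hat\rho(A_\nu)\otimes B_\alpha$ is the identity. Lemma~\ref{lem:rho_invertible} only quotients out $\ker\hat\rho$. The maps then act on a space of dimension $\operatorname{rank}\hat\rho$, and nothing so far bounds this by $4$. It is also not true that only the $A_\nu$ enter the closure: the Erwin effects $\phi^{(r)}_\mu$ are part of the data, and nothing forces $\hat\phi^{(0)}_\mu\hat\rho$ to map $\Span\{A_\nu\}$ into itself. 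Condition~\ref{cond:hierarchy_condition} only fixes the numbers $\tilde\Lambda\hat\rho(A_\nu)(B_\alpha)$, i.e.\ the compression of $\tilde\Lambda$ to that subspace. So your formula computes the trace of this compression, not $\tr(\tilde\Lambda)$. Your conclusion $\tr(s)\in\{0,4\}$ would give $\chi_P(P)=\dim\img P=4$, which the paper never proves. Only the subspace $\img\Pi_\varphi\subseteq\img P$ is eventually shown to be four-dimensional, and $\ker\vartheta$ may act non-trivially on the rest of $\img P$. Worse, that four-dimensionality is derived in Thm.~\ref{thm:the_trace_positive_rep} and Lemma~\ref{lem:reg_rep_proof} using the non-negativity of $\chi_P$ as input, so importing it here would be circular.

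The argument you list as a fallback is the right one, and it should be the proof. It is the paper's positivity mechanism and is easy to make precise. For $s=\tilde\Lambda^{(\vec 0_N)}_{\vec\mu}\in S$, peel off the last factor: $\tr(s)=4\,\tr\big(s'\hat\rho\,\hat\phi^{(0)}_{\mu_1}\big)$ with $s'=\tilde\Lambda^{(\vec 0_{N-1})}_{(\mu_2,\dots,\mu_N)}$ (or $s'=\mathrm{id}$ if $N=1$). Since $s'\hat\rho\in\S^{(2)}$, this is a state paired with an Erwin effect, hence $\geq 0$. Continuity of the trace then extends non-negativity to $\Ess\supseteq\Gee_P$.

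The paper instead stays inside $\Gee_P$. It uses $g\hat\rho\in\S^{(2)}$ to get $\tr(g\hat\rho\,\hat\phi^{(0)}_\mu)\geq 0$ for $g\in\Gee_P$. It then writes an arbitrary $g'\in\Gee_P$ as $g'=gh$ with $h=P\tilde\Lambda^{(0)}_\mu P$, so that $PgP=g$ yields $\chi_P(g')=\tfrac1{p(\mu|0)}\tr(g\hat\rho\,\hat\phi^{(0)}_\mu)$. Both routes rely on the same positivity of a state--effect pairing; yours replaces the group inverse by continuity.
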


The character refers to the trace of a representation.
As indicated in the proof of Lem.~\ref{lem:rho_invertible}, the trace of a teleportation map correspond to entanglement swapping correlations, when arranged in a loop.
This imposes positivity on the trace.

\begin{proof}
    Since $\Gee_P$ is a sub-semigroup of $\Ess$, $\forall g \in \Gee_P,\ g\hat\rho \in \S^{(2)}$.
    Therefore, fixing any $\phi^{(0)}_\mu$ we have
	\begin{align*}
        \tr\big(g \hat\rho\, \hat\phi^{(0)}_\mu \big) \geq 0.
	\end{align*}
    Using the fact that $P$ is the identity of $\Gee_P$ we get 
    \begin{align*}
        \tr\big(g \hat\rho \, \hat\phi^{(0)}_\mu \big)
		&
		=  p(\mu | 0) \tr\big((P g P) \, \tilde\Lambda^{(0)}_{\mu} \big)
        \\
        &
		= p(\mu | 0) \tr\big(g \, (P \tilde\Lambda^{(0)}_{\mu} P) \big).
    \end{align*}
    $\tilde\Lambda^{(0)}_{\mu} \in \Ess$ implies $P \tilde\Lambda^{(0)}_{\mu} P \in \Gee_P$.
    Let $P \tilde\Lambda^{(0)}_{\mu} P = h \in \Gee_P$.
    Thus, for any $g'\in \Gee$, choosing $g:=g' h^{-1}$ gives
    \begin{align*}
        \chi_P(g') 
        &
        = \chi_P(g h) 
        \\
        &
        = \tr(g P \tilde\Lambda^{(0)}_{\mu} P)
        \\
        &
        = \tfrac1{p(\mu | 0)} \tr\big(g \hat\rho \, \hat\phi^{(0)}_\mu \big)
        \geq 0.
    \end{align*}
\end{proof}

Consider the map $\vartheta : (\ZZ_2^2)^* \to \ZZ_2^2$, introduced in Sec.~\ref{subsec:notation}.
This map is a (semigroup) homomorphism from the semigroup $(\ZZ_2^2)^*$ of lists with entries in $\ZZ_2^2$ under concatenation of lists, into the $\ZZ_2^2$ under bitwise addition modulo two.
This trivially defines a homomorphism from $S$ into $\ZZ_2^2$.
We show that we can extended the domain of $\vartheta$ to $\Ess$ by continuity, and then to $\Gee_P$ by restriction.

\begin{lemma}
    $\vartheta : S \to \ZZ_2^2$ can be extended to a semigroup homomorphism $\vartheta : \Ess \to \ZZ_2^2$.
    Restricting this map to the compact group $\Gee_P$ (which is a sub-semigroup of $\Ess$), obtains a group homomorphism
    \begin{align*}
        \vartheta : \Gee_P \to \ZZ_2^2.
    \end{align*}
\end{lemma}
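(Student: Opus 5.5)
The plan is to detect the value $\vartheta(\tilde\Lambda)$ of an element $\tilde\Lambda\in S$ by a linear functional that is continuous on the ambient space of linear maps, and then to extend by continuity. By Eqn.~(\ref{eqn:TN}) with $\vec\zeta=\vec 0_N\in\Z_0$, together with Condition~\ref{cond:hierarchy_condition}, every $\tilde\Lambda^{(\vec 0_N)}_{\vec\mu}\in S$ satisfies
\begin{align*}
  \tilde\Lambda^{(\vec 0_N)}_{\vec\mu}\hat\rho(A_\nu)(B_\alpha) = \big(\K_{\vartheta(\vec\mu)}\Em^{(0)}\big)_{\nu\alpha}.
\end{align*}
Define the linear (hence continuous) map $F:\tilde\Lambda\mapsto \big(\tilde\Lambda\hat\rho(A_\nu)(B_\alpha)\big)_{\nu\alpha}$ from linear maps on $\D_B$'s span to $4\times 4$ matrices. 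On $S$ we have $F(\tilde\Lambda)=\K_{\vartheta(\tilde\Lambda)}\Em^{(0)}$. In particular, $\vartheta$ is a well-defined function on $S$: two lists giving the same map give the same $F$, and the four matrices $\K_\mu\Em^{(0)}$ are distinct because the $\K_\mu$ are linearly independent and $\Em^{(0)}$ is invertible. The multiplicativity of $\vartheta$ on $S$ is then inherited from concatenation of lists.

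For the extension, I would take $X=\{\K_\mu\Em^{(0)}\}_\mu$, a finite and hence closed, discrete set. Since $F$ is continuous and $F(S)\subseteq X$, it follows that $F(\Ess)\subseteq \overline{F(S)}\subseteq X$. I then define $\vartheta(g)$ as the unique $\mu$ with $F(g)=\K_\mu\Em^{(0)}$. This map is continuous, because it is locally constant: $X$ is discrete.

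For multiplicativity on $\Ess$, given $g,h\in\Ess$ I would pick $g_n\to g$ and $h_n\to h$ in $S$. Composition is continuous, so $g_nh_n\to gh$. The map $\vartheta$ is eventually constant along each of these convergent sequences, so
\begin{align*}
  \vartheta(gh)=\lim\vartheta(g_nh_n)=\lim\big(\vartheta(g_n)\oplus\vartheta(h_n)\big)=\vartheta(g)\oplus\vartheta(h).
\end{align*}

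Restricting to $\Gee_P$, which is a sub-semigroup, gives a semigroup homomorphism from a group into the group $\ZZ_2^2$. Any such map is automatically a group homomorphism. Indeed, $\vartheta(P)=\vartheta(P)\oplus\vartheta(P)$ forces $\vartheta(P)=00$. Then $\vartheta(g)\oplus\vartheta(g^{-1})=\vartheta(P)=00$, so $\vartheta(g^{-1})=\vartheta(g)^{-1}$.

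The only delicate point is well-definedness and discreteness. This is what requires the $\K_\mu\Em^{(0)}$ to be pairwise distinct and the correlation data of $S$ to be exactly captured by $F$, and both were checked above.
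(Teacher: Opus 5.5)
Your proposal is correct and follows essentially the same route as the paper. Both arguments detect $\vartheta$ through the correlation data $\tilde\Lambda\hat\rho(A_\nu)(B_\alpha)$, which depends continuously on $\tilde\Lambda$ and takes only finitely many, well-separated values on $S$ (the paper phrases this as a gap of $2$ in some diagonal entry), and then extend by continuity. Your write-up is more complete than the paper's: you also check that $\vartheta$ is well defined on maps rather than on lists, that multiplicativity survives passage to the closure, and that $\vartheta(P)=00$, all of which the paper leaves implicit.
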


\begin{proof}
    Consider the outcomes $\vec \mu_1, \vec \mu_2 \in (\ZZ_2^2)^*$ for settings $\vec \zeta_1, \vec \zeta_2 \in \Z_0$.
    If $\vartheta(\vec \mu_1) \neq \vartheta(\vec \mu_2)$, then there exists $\nu \in \{0,1,2\}$ such that 
    \begin{align*}
        \Big|\big(\tilde\Lambda^{(\vec \zeta_1)}_{\vec \mu_1} - \tilde\Lambda^{(\vec \zeta_2)}_{\vec \mu_2}\big)\hat\rho(A_\nu)(B_\nu)\Big| = 2.
    \end{align*}
    Thus if two elements of $S$ are sufficiently close then it must be that $\vartheta(\vec \mu_1) = \vartheta(\vec \mu_2)$.
    Therefore, by continuity every convergent sequence in $S$ will eventually map to the same element in $\ZZ_2^2$ under $\vartheta$.
    To extended $\vartheta$, simply assign the limiting value of $\vartheta(\vec \mu)$ to the limit point.
\end{proof}

\begin{remark}
    It follows that both $\Ess$ and $\Gee_P$ decompose into a set of disconnected components, on which the map $\vartheta$ is constant.    
\end{remark}

The homomorphism $\vartheta : \Gee_P \to \ZZ_2^2$ enables us to obtain a representation of $\ZZ_2^2$ by modding out the kernel of $\vartheta$ from $\Gee_P$, i.e., 
\begin{align}\label{eqn:K_4_from_compact}
    \Gee_P / \ker\vartheta \cong \ZZ_2^2.
\end{align}
Let $m$ be the Haar measure on $\ker \vartheta$.
Then 
\begin{align}\label{eqn:haar}
     \Pi_\varphi := \int_{g \in \ker \vartheta}dm(g)\,g
\end{align}
is a projection onto the trivial representation of the normal subgroup $\ker\vartheta \trianglelefteq \Gee_P$. 
Let $g \in \Gee_P$ and define the map
\begin{align*}
    \varphi : \ZZ_2^2 \to \L(\img \Pi_\varphi),\ \vartheta(g) \mapsto \Pi_\varphi g \Pi_\varphi.
\end{align*}

\begin{theorem}\label{thm:the_trace_positive_rep}
	The map $\varphi$ is a representation of the group $\ZZ_2^2$ with the following properties:
    \begin{enumerate}
		\item 
			Its character $\chi_\varphi$ is non-negative.
        \item 
			The trivial representation has multiplicity 1 in the irrep decomposition of $\varphi$.
    \end{enumerate}
\end{theorem}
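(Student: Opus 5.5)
Three things have to be checked: that $\varphi$ is well defined and a representation, that its character is non-negative, and that the trivial representation occurs exactly once.

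\emph{Step 1: $\varphi$ is a well-defined representation.} Because $\ker\vartheta$ is normal in $\Gee_P$ and the Haar measure $m$ is invariant under conjugation, $g\Pi_\varphi g^{-1}=\Pi_\varphi$ for every $g\in\Gee_P$. Hence $\Pi_\varphi$ commutes with all of $\Gee_P$, and
\begin{align*}
\Pi_\varphi g\Pi_\varphi = g\Pi_\varphi .
\end{align*}
If $k\in\ker\vartheta$, then $k\Pi_\varphi=\Pi_\varphi$ by invariance of $m$. So $g\Pi_\varphi$ depends only on the coset $g\ker\vartheta$, and by Eqn.~(\ref{eqn:K_4_from_compact}) only on $\vartheta(g)$. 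This makes $\varphi$ well defined. Multiplicativity follows from
\begin{align*}
(g\Pi_\varphi)(h\Pi_\varphi)=gh\Pi_\varphi^2=gh\Pi_\varphi ,
\end{align*}
and $\varphi(0)=P\Pi_\varphi=\Pi_\varphi$, which is the identity on $\img\Pi_\varphi$.

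\emph{Step 2: non-negativity of the character.} For $x\in\ZZ_2^2$, fix any $g_0\in\vartheta^{-1}(x)$. Then
\begin{align*}
\chi_\varphi(x)=\tr(g_0\Pi_\varphi)=\int_{\ker\vartheta}dm(k)\,\tr(g_0k)=\int_{\ker\vartheta}dm(k)\,\chi_P(g_0k).
\end{align*}
The integrand is non-negative by Lem.~\ref{lem:compact_grp_has_pos_char}, so $\chi_\varphi(x)\geq 0$. The trace here is taken on $\img\Pi_\varphi$, but it agrees with the trace of $g_0\Pi_\varphi$ on the whole space, since $g_0\Pi_\varphi$ maps into $\img\Pi_\varphi$ and vanishes on the complementary kernel.

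\emph{Step 3: the trivial irrep has multiplicity one.} Its multiplicity is
\begin{align*}
\tfrac14\sum_x\chi_\varphi(x)=\int_{\Gee_P}\chi_P\,d(\text{Haar})=\tr\Pi_0 ,
\end{align*}
where $\Pi_0$ is the Haar average over all of $\Gee_P$, i.e.\ the projector onto the $\Gee_P$-invariant vectors in $\img P$. I plan to identify $\Pi_0$ through the correlation data. From Eqn.~(\ref{eqn:TN}) together with Lem.~\ref{lem:prob_quarter_all}, averaging $\tilde\Lambda^{(0)}_\mu$ over $\mu$ gives the map whose correlation matrix is $M$; this is $\sum_\mu\Lambda^{(0)}_\mu$, the swap with the full measurement $\1\otimes\1$. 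That map is the rank-one operator
\begin{align*}
\sigma\mapsto\sigma(\1)\,\hat\rho(\1).
\end{align*}
Call it $R$. It is an idempotent lying in $\Ess$, because $\Ess$ is closed and convex combinations of $\tilde\Lambda^{(0)}_\mu$ with uniform weights arise as Cesàro-type limits. The main obstacle is that $\Ess$ need not be convex. The route I would try first is the following. The coset-average of $\Gee_P$ equals $P\,\E[\tilde\Lambda]\,P$ for a suitable averaging that is compatible with $\vartheta$. Then I would use that $\hat\rho(\1)$ is fixed by every normalized teleportation map: by the measurement condition, tracing out Alice's side returns Bob's marginal, so $\tilde\Lambda\hat\rho(\1)=\hat\rho(\1)$ after suitable normalization. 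Likewise the functional $\sigma\mapsto\sigma(\1)$ is preserved by every $\tilde\Lambda$, as normalization requires. These two facts give $\tr\Pi_0\geq 1$. For the upper bound, if the invariant subspace had dimension at least $2$, the averaged correlations would produce a matrix of rank at least $2$ in the $3\oplus1$ block structure. That contradicts the uniform-average identity $\tfrac14\sum_\mu\K_\mu=M$ propagated through $\Ess$: the Haar average over $\ker\vartheta$ followed by the average over $\ZZ_2^2$ must reproduce the rank-one $M$ when evaluated on $A_\nu,B_\alpha$, and these span the full space by Lem.~\ref{lem:linear_independence} and Lem.~\ref{lem:rho_invertible}. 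Pinning down this last identification rigorously is where I expect the real work to lie.
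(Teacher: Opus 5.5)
Your Steps 1 and 2 are correct. Step 2 is the paper's own computation, and Step 1 justifies well-definedness more carefully than the paper does. The gap is in Step 3, which is the substance of property 2: you never prove $\tr\Pi_0\le 1$, and the route you sketch cannot supply it.

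Your contradiction argument needs the $A_\nu$ and $B_\alpha$ to span the whole space. However, Lem.~\ref{lem:linear_independence} only gives linear independence of four vectors, and Lem.~\ref{lem:rho_invertible} only makes $\hat\rho$ invertible; nothing at this point bounds the dimension by $4$. In the paper the spanning is \emph{deduced} from $\dim\varphi=4$, which itself rests on this theorem via Lem.~\ref{lem:reg_rep_proof}, so using it here is circular. Without it, a second invariant direction could lie outside the span of the tested observables and be invisible in every correlation matrix.

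Your lower-bound argument has a similar problem. The measurement condition only gives $\sum_\mu\Lambda^{(0)}_\mu\hat\rho(\1)=\hat\rho(\1)$. It does not make each $\tilde\Lambda$ fix $\hat\rho(\1)$ or preserve $\sigma\mapsto\sigma(\1)$; the correlation data confirm this only on the span of the $A_\nu$, $B_\alpha$. You also never show $\img\Pi_\varphi\neq\{0\}$.

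The missing idea is a purely linear identity, and you already had its key ingredient $R$. Take the coset representatives $g_\mu:=P\tilde\Lambda^{(0)}_\mu P$, so that $\vartheta(g_\mu)=\mu$, and use $\Pi_\varphi P=P\Pi_\varphi=\Pi_\varphi$. Then
\begin{align*}
\sum_{\mu}p(\mu|0)\,\Pi_\varphi g_\mu\Pi_\varphi=\Pi_\varphi\Big(\sum_\mu\Lambda^{(0)}_\mu\Big)\Pi_\varphi=\Pi_\varphi R\,\Pi_\varphi .
\end{align*}
Every $v$ in the trivial isotypic component of $\varphi$ is fixed by each $\Pi_\varphi g_\mu\Pi_\varphi$, hence by this convex combination. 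So $v$ lies in the image of $\Pi_\varphi R\,\Pi_\varphi$, which has rank at most one.

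Since $p(\mu|0)=\tfrac14$, the left side is exactly your $\Pi_0$. Your guess that the coset average has the form $P(\text{average of }\tilde\Lambda^{(0)}_\mu)P$ is right once $P$ is replaced by $\Pi_\varphi$. Nothing here requires $R\in\Ess$, convexity of $\Ess$, or any spanning property.

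For the lower bound, use only normalization. Every $g\in\Ess$ satisfies $g\hat\rho(\1)(\1)=1$, so $\Pi_\varphi\hat\rho(\1)(\1)=1$ and $\chi_\varphi(0)=\dim\img\Pi_\varphi>0$. Combined with Step 2, this gives $\tfrac14\sum_x\chi_\varphi(x)\ge\tfrac14\chi_\varphi(0)>0$. Equivalently, $\Pi_0\hat\rho(\1)$ is a nonzero invariant vector, which is the correct version of your fixed-vector idea.
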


Both properties relate directly to the physics of the situation.
Regarding the first point, it turns out that the values of the character can be expressed as a pairing between states and effects, and is thus non-negative.
The proof of the second fact relates the rank of the projection onto the trivial representation (of $\varphi$) to the rank of 
the marginal state shared by Alice and Bob.
The marginal state is of rank one because it factorizes by construction.

\begin{proof}
	The map $\varphi$ is the restriction of $\Gee_P$ to the subspace that is invariant under the normal subgroup $\ker \vartheta$.
	Arguing as in the proof of Lemma~\ref{lem:compact_grp_has_pos_char}, it holds that $g\rho\in \S^{(2)}$ for every $g$ that appears in the integral in Eq.~(\ref{eqn:haar}).
	Because the Haar measure is normalized and $\S^{(2)}$ a convex set, 
	$\Pi_\varphi \hat\rho$ is a state.
	In particular, this means that $\img(\Pi_\varphi)$ is non-trivial.
	Thus it is a representation of the quotient group $\ZZ_2^2$.

    For (1) we first choose a transversal of $\Gee_P/\ker \vartheta$ in order to compute the character.
    Indeed by construction, the elements of
    \begin{align*}
        \big\{ g_\mu := P\tilde\Lambda^{(0)}_{\mu}P \ | \ \mu \in \ZZ_2^2 \big\}
    \end{align*}
    are in one-one correspondence with the elements of $\ZZ_2^2$, and hence are a transversal of $\Gee_P/\ker\vartheta$.
    Now, for any $\mu \in \ZZ_2^2$ we can compute
    \begin{align*}
        \chi_\varphi(\mu) 
        = \tr(\Pi_\varphi g_\mu \Pi_\varphi) 
		&
        = \tr(g_\mu \Pi_\varphi)
        \\
		&
        = \int_{h \in \ker\vartheta}d\mu(h)\chi_P(g_\mu h) \geq 0,
    \end{align*}
    where we have used that the character $\chi_P$ is non-negative (Lem.~\ref{lem:compact_grp_has_pos_char}).

    We prove (2) by bounding the multiplicity of the trivial representation above and below by one.
    The lower bound always holds for a non-negative character of a finite group.
    This is because the character inner product of any non-negative character $\chi$ with the trivial character $\chi_1$ yields 
    \begin{align*}
        [\chi,\chi_1] 
        = \tfrac1{|G|} \sum_{g \in G} \chi_1(g)^*\chi(g)
        = \tfrac1{|G|} \sum_{g \in G} \chi(g) > 0,
    \end{align*}
    since $\chi(\Id) > 0$.

    For the upper bound, we show that (i) The trivial representation is a subspace of the $+1$ eigenspace of the map
    \begin{align*}
        \sum_{\mu \in \ZZ_2^2} p(\mu | 0) \Pi_\varphi g_\mu \Pi_\varphi,
    \end{align*}
    and that (ii) The rank of this map is upper bounded by 1.
    Indeed, by definition, for any $v$ in the trivial representation, we have $\forall \mu \in \ZZ_2^2,\ \Pi_\varphi g_\mu \Pi_\varphi \, v = v$, and thus 
    \begin{align*}
        \bigg(\sum_{\mu \in \ZZ_2^2} p(\mu|0) \Pi_\varphi g_\mu \Pi_\varphi \bigg) \, v 
        = \sum_{\mu \in \ZZ_2^2} p(\mu|0) \, v 
        = v,
    \end{align*}
    hence (i) follows.
    For (ii), using $\forall g \in \Gee_P, PgP = g$, $P^2 = P$, and $\Pi_\varphi P = P \Pi_\varphi = \Pi_\varphi$ (which is a consequence of, Eqn.~(\ref{eqn:haar}), the definition of $\Pi_\varphi$),
    we expand the map: 
    \begin{align*}
        \sum_{\mu \in \ZZ_2^2} p(\mu|0) \Pi_\varphi g_\mu \Pi_\varphi
        &
        = \sum_{\mu \in \ZZ_2^2} p(\mu|0) \Pi_\varphi P\tfrac1{p(\mu|0)}\Lambda^{(0)}_{\mu}P \Pi_\varphi
        \\
        &
        = \sum_{\mu \in \ZZ_2^2} \Pi_\varphi \Lambda^{(0)}_{\mu} \Pi_\varphi
        \\
        &
        = \Pi_\varphi \bigg(\sum_{\mu \in \ZZ_2^2}  \Lambda^{(0)}_{\mu} \bigg) \Pi_\varphi
        \\
        &
        = \Pi_\varphi \bigg(\sum_{\mu \in \ZZ_2^2} \hat\rho\hat\phi^{(0)}_{\mu} \bigg) \Pi_\varphi
        \\
        &
        = \Pi_\varphi \hat\rho \bigg(\sum_{\mu \in \ZZ_2^2} \hat\phi^{(0)}_{\mu} \bigg) \Pi_\varphi
        \\
        &
        = \Pi_\varphi \big(\rho(\1, \cdot) \1 \big) \Pi_\varphi.
    \end{align*}
    The map
    \begin{align*}
        \rho(\1, \cdot) \1 : V_B^* \to V_B^*,\  \sigma \mapsto \sigma(\1) \rho(\1, \cdot)
    \end{align*}
    is manifestly of rank one, and hence $\Pi_\varphi \big(\rho(\1, \cdot) \1 \big) \Pi_\varphi$ is of rank at most one.
    Thus (ii) and (2) follow.
\end{proof}

Crucially, the projector $\Pi_\varphi$ onto the representation space of $\varphi$ is obtained as a limit of long chains of teleportations.
Referring back to Condition~\ref{cond:hierarchy_condition}, since the correlations are preserved under iterated teleportation, by continuity it must be the case that they are also preserved when restricting to the representation space of $\varphi$.
This almost gives us our effective GPT, but for the fact that the constraints derived so far is on the states of Bob.

In Sec.~\ref{subsec:consequences_of_cond} (Eqn.~(\ref{eqn:tel_maps_def})) we defined the teleportation maps to be 
\begin{align*}
    \hat\rho\hat\phi^{(r)}_{\mu} : V_B^* \to V_B^*.
\end{align*}
But we could have equivalently defined them as (Fig.~\ref{fig:dual_tel_maps})
\begin{align*}
    \hat\phi^{(r)}_{\mu}\hat\rho : V_A \to V_A.
\end{align*}

\begin{figure}[H]
    \centering
    \includegraphics[trim={0cm 3cm 0cm 2cm}, clip, height = 1.5cm, keepaspectratio]{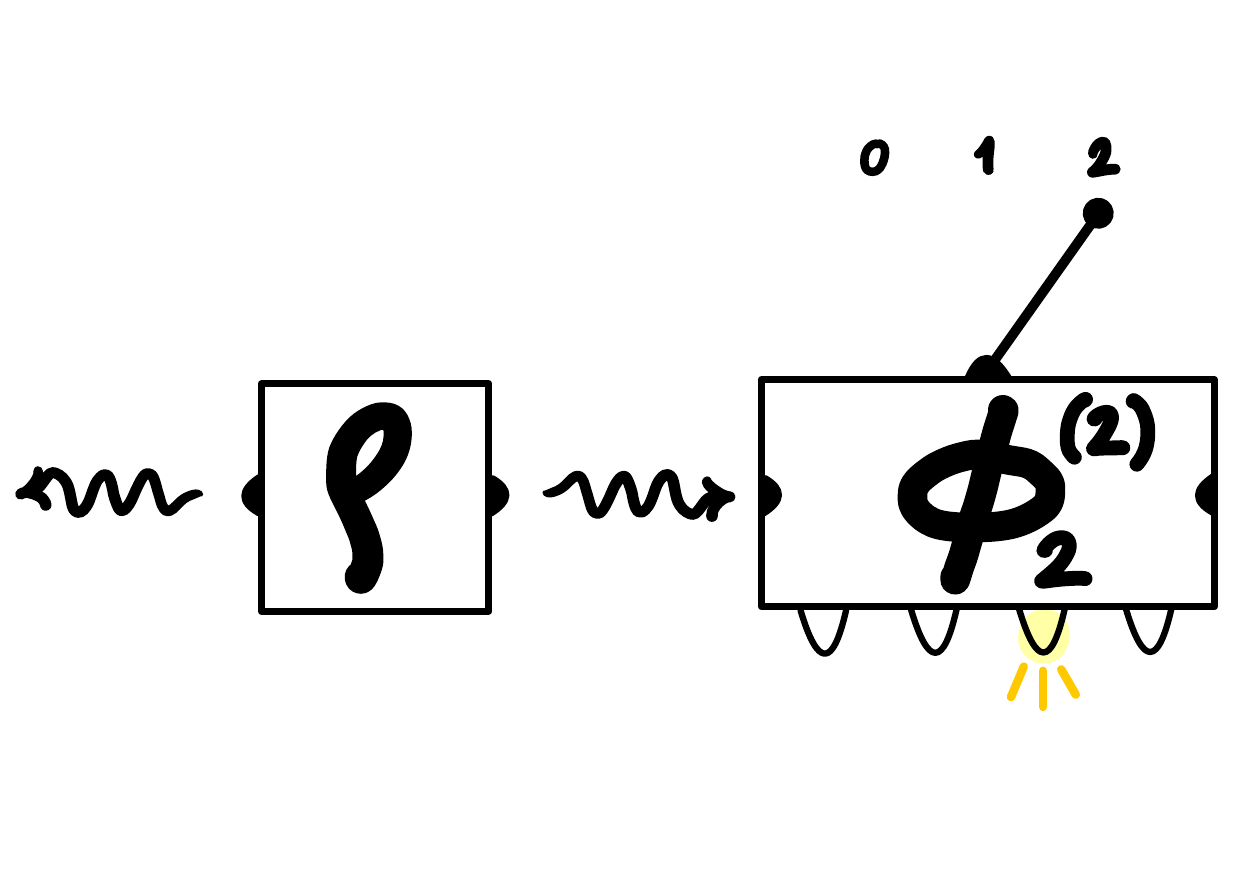}

    (a)

    \begin{align*}
        \raisebox{-0.25\height}{
            \includegraphics[trim={2cm 0cm 3cm 0cm}, clip, height = 1.35cm, keepaspectratio]{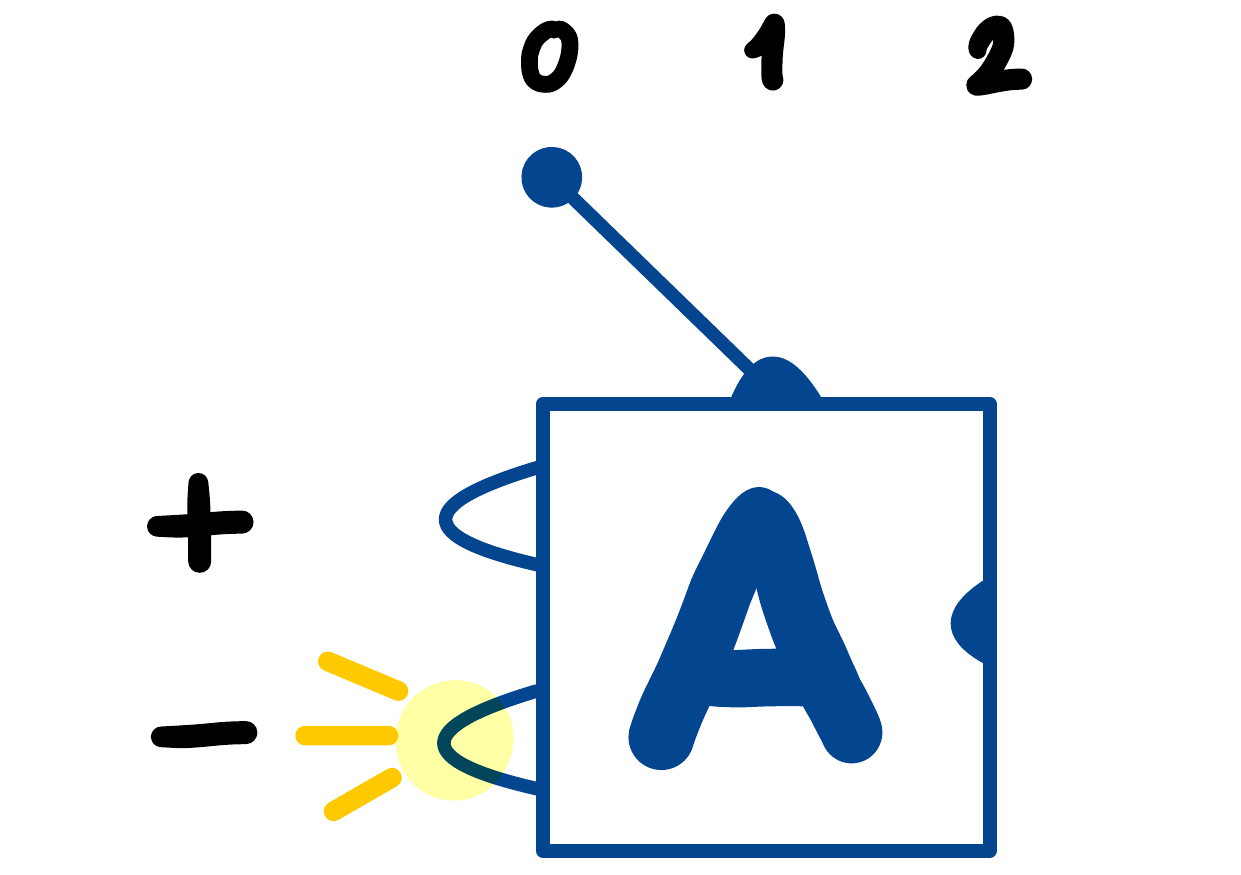}
        }
        \mapsto
        \raisebox{-0.37\height}{
            \includegraphics[trim={0cm 4cm 0cm 3.5cm}, clip, height = 1.75cm, keepaspectratio]{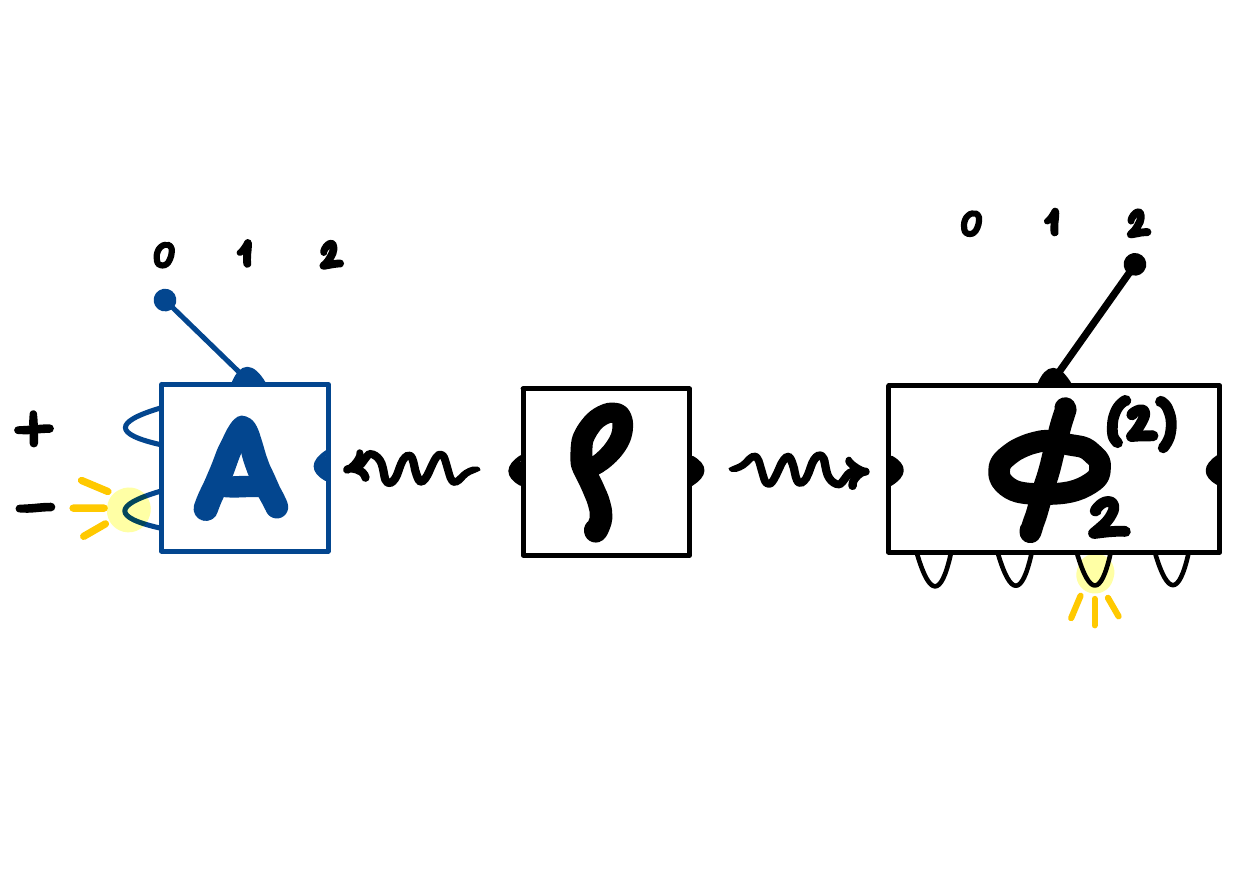}
        }
    \end{align*}
    (b)
    \caption{(a) Equivalent definition of the teleportation maps of Erwin. (b) The action of these teleportation map on the effects of Alice.}
    \label{fig:dual_tel_maps}
\end{figure}

That is, instead of defining the teleportation maps as teleporting $B$-type states, we could have defined them as teleporting $A$-type measurement effects.
Although, in general, this could make a difference, due to the invertibility of $\hat\rho$, the two is equivalent, i.e., 
\begin{align*}
    \hat\phi^{(r)}_{\mu}\hat\rho = \hat\rho^{-1} \ (\hat\rho\hat\phi^{(r)}_{\mu}) \ \hat\rho.
\end{align*}
Thus we can define a representation, equivalent to $\varphi$, on the observables of Alice:
\begin{align*}
    \forall \mu \in \ZZ_2^2,\ \pi(\mu) := \hat\rho^{-1} \varphi(\mu) \hat\rho,
\end{align*}
and the corresponding projector $\Pi_\pi := \hat\rho^{-1} \Pi_\varphi \hat\rho$.

This enables us to define the \emph{effective GPT}
\begin{theorem}[The effective GPT]\label{thm:effective_gpt}
    If the observables and state $A_\nu, B_\alpha, \phi^{(r)}_\mu, \rho$, from any GPT, satisfy Condition~\ref{cond:hierarchy_condition}, then the restrictions 
    \begin{gather*}
        A_\nu \mapsto \Pi_\pi A_\nu,
        \qquad
        B_\alpha \mapsto \Pi_\varphi^t B_\alpha,
        \\
        \hat\rho \mapsto \Pi_\varphi \hat\rho \Pi_\pi,
        \quad
        \text{and}
        \quad
        \hat\phi^{(r)}_\mu \mapsto \Pi_\pi\hat\phi^{(r)}_\mu\Pi_\varphi,
    \end{gather*}
    also satisfy Condition~\ref{cond:hierarchy_condition}.
    We will refer to the GPT closure of the restricted objects above as \emph{the effective GPT}.
\end{theorem}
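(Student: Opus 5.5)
The plan is to reduce every correlation appearing in Condition~\ref{cond:hierarchy_condition} to a trace of a product of maps of the form $\hat\rho\hat E$, as in the proof of Lem.~\ref{lem:rho_invertible}. We then show that inserting the projectors $\Pi_\varphi$ (on Bob's side) and $\Pi_\pi$ (on Alice's side) leaves all these traces unchanged. Since $\Pi_\pi=\hat\rho^{-1}\Pi_\varphi\hat\rho$, the restricted state becomes
\begin{align*}
\Pi_\varphi\hat\rho\Pi_\pi=\Pi_\varphi\Pi_\varphi\hat\rho=\Pi_\varphi\hat\rho,
\end{align*}
and a chain with $N$ Erwins turns into an expression of the form $B_\alpha^t\,(\Pi_\varphi\Lambda^{(\zeta_N)}_{\mu_N}\Pi_\varphi)\cdots(\Pi_\varphi\Lambda^{(\zeta_1)}_{\mu_1}\Pi_\varphi)\,\Pi_\varphi\hat\rho\,\Pi_\pi A_\nu$. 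After moving the $\Pi_\pi$ and $\Pi_\varphi^t$ acting on the observables through $\hat\rho$, the only change relative to the original correlation is a $\Pi_\varphi$ inserted between every pair of consecutive teleportation maps, plus one at each end.

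Next we show that each inserted $\Pi_\varphi$ can be removed, i.e.\ that $\Pi_\varphi$ acts as the identity on the relevant vectors. We write $\Pi_\varphi=\int_{\ker\vartheta}dm(h)\,h$ and note that every $h\in\Gee_P\subseteq\Ess$ is a limit of normalized chains $\tilde\Lambda^{(\vec 0_k)}_{\vec m}$ with $\vartheta(\vec m)=0$. Condition~\ref{cond:hierarchy_condition} (via Eq.~(\ref{eqn:TN}) and Lem.~\ref{lem:prob_quarter_all}) says that inserting such a zero-setting chain with trivial total byproduct anywhere in a zero-padded chain leaves the normalized correlation matrix unchanged, because $\K_{\vartheta(\vec m)}=\K_{00}=\Id_4$. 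Here we use that a zero-padded vector with extra zeros inserted is still in $\Z$.

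By continuity of the trilinear expression, the same invariance holds for the limit $h$. By linearity and the normalization of $m$, it then holds for $\Pi_\varphi$. Iterating this over each inserted projector recovers exactly $p(\vec\mu|\vec\zeta)(\Cee^{(\vec\zeta)}_{\vec\mu})_{\nu\alpha}$. The $\T_0$ case is handled in the same way with no Erwins: $\Pi_\varphi\hat\rho$ evaluated on $A_\nu,B_\alpha$ equals $\int h\hat\rho$, and each $h\hat\rho$ gives $\Cee^{(0)}_{00}$ by the $\T_N$ condition for $\vec 0_k$ with trivial byproduct.

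The main obstacle is the bookkeeping in the invariance step. Elements of $\ker\vartheta$ are limits of normalized maps, and they carry the factors $P$ rather than raw chains. We must check that the interior factor $P$ (itself a limit of chains in $\Ess$, with $\vartheta(P)=0$ since $P$ is idempotent) and the normalization $4^{-k}$ from Lem.~\ref{lem:prob_quarter_all} combine so that the unnormalized probabilities $p(\vec\mu|\vec\zeta)$ come out exactly right, not merely up to the limit. We would first prove the invariance for chains in $S$ with $\vartheta=0$ and then pass to closures. Finally, positivity and normalization of the restricted objects follow because $\Pi_\varphi\hat\rho$ is a state (as shown in Thm.~\ref{thm:the_trace_positive_rep}). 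This positivity is inherited by the GPT closure by construction, so the restricted data form a legitimate GPT.
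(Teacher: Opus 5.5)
Your proposal is correct and follows the paper's route, spelling out what the paper compresses into the remark that $\Pi_\varphi$ is a limit of long teleportation chains and that the correlations survive by continuity: you write $\Pi_\varphi$ as a Haar average over $\ker\vartheta$, approximate each element by normalized zero-setting chains with trivial byproduct, use Condition~\ref{cond:hierarchy_condition} to delete every inserted projector, and use $\Pi_\pi=\hat\rho^{-1}\Pi_\varphi\hat\rho$ to absorb the restriction on Alice's side. Two minor points: the density statement should be made for $h\in\ker\vartheta$ rather than for all of $\Gee_P$, and your closing positivity remark goes beyond what the theorem asserts and is only sketched (it would also need control of the restricted Erwin effects, not just of $\Pi_\varphi\hat\rho$).
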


To conclude the section we deduce what $\varphi$ should be.
The representation $\varphi$ of $\ZZ_2^2$ is subject to the following constraints:
\begin{enumerate}
    \item $\varphi$ has dimension at least four in order to support the correlation matrices.
	\item The character of the representation $\chi_\varphi$ has to be non-negative (Thm.~\ref{thm:the_trace_positive_rep}).
	\item The trivial character $\chi_1$ occurs exactly once in the decomposition of $\chi_\varphi$ (Thm.~\ref{thm:the_trace_positive_rep}).
\end{enumerate}
This is sufficient to conclude that $\varphi$ must correspond to the regular representation of $\ZZ_2^2$.
In fact, this is a special case of the lemma that follows.
We prove the general case as it will be useful in the later part of this paper.

\begin{lemma}\label{lem:reg_rep_proof}
    Let $\varphi$ be a representation of a finite group $G$, and $\chi$ its character. 
    Then $\varphi$ is the regular representation of $G$ iff 
    \begin{enumerate}
        \item The dimension of the representation is greater than or equal to $|G|$,
        \item The trivial rep has multiplicity $1$ in the irrep decomposition of $\varphi$,
        \item The character $\chi$ is non-negative.
    \end{enumerate}
\end{lemma}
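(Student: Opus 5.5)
The plan is to use character theory throughout, writing $\chi_{\mathrm{reg}}$ for the regular character, so that $\chi_{\mathrm{reg}}(e)=|G|$ and $\chi_{\mathrm{reg}}(g)=0$ for $g\neq e$. For the forward direction we check the three properties directly. The dimension of the regular representation is $|G|$. Its character is manifestly non-negative. The trivial irrep appears with multiplicity $[\chi_{\mathrm{reg}},\chi_1]=\tfrac1{|G|}\chi_{\mathrm{reg}}(e)=1$. So all conditions hold.

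For the converse, the key step is to compute the multiplicity of the trivial representation using non-negativity:
\begin{align*}
    1=[\chi,\chi_1]=\tfrac1{|G|}\sum_{g\in G}\chi(g)=\tfrac1{|G|}\Big(\chi(e)+\sum_{g\neq e}\chi(g)\Big).
\end{align*}
Since $\chi(e)=\dim\varphi\geq|G|$ by condition 1, and every term $\chi(g)$ with $g\neq e$ is $\geq 0$ by condition 3, the right-hand side is at least $\tfrac1{|G|}\chi(e)\ge 1$. Equality therefore forces $\chi(e)=|G|$ and $\sum_{g\neq e}\chi(g)=0$. Each summand is non-negative, so every one of them vanishes: $\chi(g)=0$ for all $g\neq e$. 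Note that condition 3 implicitly says $\chi$ is real-valued, which is what allows us to speak of the sign of each $\chi(g)$.

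It then follows that $\chi=\chi_{\mathrm{reg}}$. A finite-dimensional complex representation of a finite group is determined up to equivalence by its character, by Maschke's theorem together with the orthogonality of irreducible characters. Hence $\varphi$ is equivalent to the regular representation. The statement should be read as ``regular up to equivalence'', which is the sense needed later.

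The one subtle point is the field. The paper works with real vector spaces, so $\varphi$ may be a real representation. I would handle this by complexifying: the complexification has the same character, and non-negativity, dimension and trivial multiplicity are all unchanged. Two real representations whose complexifications are equivalent are already equivalent over $\RR$, a standard fact that can be proved by averaging an intertwiner or by noting that the real part of a complex intertwiner can be perturbed to an invertible one. Since the regular representation is defined over $\RR$, the conclusion transfers back. I expect this to be the only point needing care; the counting argument above is elementary.
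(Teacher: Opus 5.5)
Your proposal is correct and follows the paper's argument. The forward direction reads the three properties off the regular character. The converse writes $1=[\chi,\chi_1]=\tfrac1{|G|}\sum_{g}\chi(g)$ and combines $\chi(e)\geq|G|$ with non-negativity to force $\chi(g)=0$ for $g\neq e$, hence $\chi=\chi_{\mathrm{reg}}$. Your extra remarks are sound and make explicit what the paper leaves implicit: that the character determines $\varphi$ up to equivalence, and that passing between real and complex scalars is harmless.
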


\begin{proof}
    Let $\varphi$ be the regular representation of a finite group $G$, and $\chi$ its character. 
    The regular representation (a) is defined by the left action of the group on the group algebra $\CC[G]$, and 
    (b) contains each irrep with the multiplicity equal to the dimension of the irrep.
    (a) implies that the dimension of $\dim(\varphi) = |G|$, since $\dim(\CC[G]) = |G|$.
    (b) implies that the trivial representation had multiplicity $1$ in the irrep decomposition of $\varphi$ since the trivial representation is one-dimensional.
    Finally, the regular character of any group is 
    \begin{align*}
        \chi_{\mathrm{reg}}(g) = |G|\delta_{g, \Id},
    \end{align*}
    and is therefore non-negative.

    Conversely, assume that (i) $\dim(\varphi) \geq |G|$, (ii) the multiplicity of the trivial rep in $\varphi$ is $1$, and (iii) that the character $\chi$ of $\varphi$ is non-negative.
    Taking the character inner product with the trivial character, (ii) implies
    \begin{align*}
        1 = [\chi_1, \chi]
        &
        = \tfrac1{|G|} \sum_{g \in G} \overline{\chi_1(g)} \chi(g)
        = \tfrac1{|G|} \sum_{g \in G} \chi(g),
    \end{align*}
    which implies 
    \begin{align*}
        \sum_{g \in G} \chi(g) = |G|.
    \end{align*}
    Now, from (i), we have the relation 
    \begin{align*}
        \chi(\Id) =  \dim(\varphi) \geq |G|.
    \end{align*}
    Combining (i) and (ii) therefore yields 
    \begin{align*}
        |G| 
        % &
        = \sum_{g \in G} \chi(g)
        % \\
        &
        =  \chi(\Id) + \sum_{g \in G, g \neq \Id} \chi(g)
        \\
        &
        \geq  |G| + \sum_{g \in G, g \neq \Id} \chi(g),
    \end{align*}
    which implies 
    \begin{align*}
        \sum_{g \in G, g \neq \Id} \chi(g) \leq 0.
    \end{align*}
    Finally by (iii), $\chi(g) \geq 0$ which implies 
    \begin{align*}
        \forall g \neq \Id,\ \chi(g) = 0.
    \end{align*}
    This also gives $\chi(\Id) = |G|$, which implies $\chi$ is the regular character.
\end{proof}

%================================================================================================
% OBTAINING THE QUBIT
%================================================================================================

\section{Obtaining the qubit}\label{sec:obtaining_the_qubit}

The goal of this section is to derive an isomorphism between the effective GPT described in Thm.~\ref{thm:effective_gpt} and qubit quantum theory.
The main strategy will be to show that, in the effective GPT, the correlation between Alice and Bob after teleportation through multiple Erwins 
%(with setting $r \in \ZZ_3$ unrestricted), 
can be obtained by multiplying correlation matrices.
Since the generators of the qubit symmetry were encoded in the correlations, we obtain the full symmetry group of the qubit state space as teleportation maps.
Finally, because the orbit under this group corresponds to physical operation on the states (equivalently, observables) of the effective GPT, the claim is proven.

To start, we introduce notation to talk about the states and effects in the effective GPT.
Let
\begin{align*}
    \hat\varrho := \Pi_\varphi \hat\rho \Pi_\pi, \quad 
    \hat\Phi^{(r)}_\mu := \Pi_\pi\hat\phi^{(r)}_\mu\Pi_\varphi.
\end{align*}
We will denote $A_\nu \equiv \Pi_\pi A_\nu$ and $B_\alpha \equiv \Pi_\varphi^t B_\alpha$.
Recall that Lem.~\ref{lem:prob_quarter_all} (equiprobability of the outcomes of Erwins) relied only on the structure of correlations in Condition~\ref{cond:hierarchy_condition} and the measurement condition ($\sum_\mu \phi^{(r)}_\mu = \1 \otimes \1$).
Thus Lem.~\ref{lem:prob_quarter_all} still holds in the effective GPT, implying
\begin{align*}
    (\hat\varrho\hat\Phi^{(r)}_\mu)\ \hat\varrho (\1)(\1) = p(\mu|r) = \tfrac14. 
\end{align*}
Therefore we directly introduce the \emph{normalized effective teleportation maps}, denoted by
\begin{align*}
    \Oh^{(r)}_\mu := \tfrac1{p(\mu|r)}\hat\varrho\hat\Phi^{(r)}_\mu.
\end{align*}
$\Oh^{(r)}_\mu$ is in fact the asymptotic version of $\tilde\Lambda^{(r)}_\mu$:
\begin{align*}
    \Oh^{(r)}_\mu
    &
    = \tfrac1{p(\mu|r)}\hat\varrho\hat\Phi^{(r)}_\mu
    \\
    &
    = \tfrac1{p(\mu|r)}(\Pi_\varphi\hat\rho \Pi_\pi)(\Pi_\pi \hat\phi^{(r)}_\mu \Pi_\varphi)
    \\
    &
    = \tfrac1{p(\mu|r)}\Pi_\varphi\hat\rho \Pi_\pi \hat\phi^{(r)}_\mu \Pi_\varphi
    \\
    &
    = \tfrac1{p(\mu|r)}\Pi_\varphi\hat\rho (\hat\rho^{-1} \Pi_\varphi \hat\rho) \hat\phi^{(r)}_\mu \Pi_\varphi
    \\
    &
    = \tfrac1{p(\mu|r)}\Pi_\varphi  \Pi_\varphi \Lambda^{(r)}_\mu \Pi_\varphi
    \\
    &
    = \Pi_\varphi \tilde \Lambda^{(r)}_\mu \Pi_\varphi.
\end{align*}
Thus, in the special case $r = 0$ we have:
\begin{equation}\label{eqn:tel_to_K_4}
    \Oh^{(0)}_\mu = \Pi_\varphi \tilde \Lambda^{(0)}_\mu \Pi_\varphi = \varphi(\mu).
\end{equation}
On the other hand, by continuity, on the effective GPT Condition~\ref{cond:hierarchy_condition} gives 
\begin{gather}
    \hat\varrho(A_\nu)(B_\alpha) = (\Cee^{(0)}_{00})_{\nu\alpha},\label{eqn:asymp_T0}
    \\
    \text{and} \notag
    \\
    \Oh^{(r)}_\mu\hat\varrho(A_\nu)(B_\alpha) = (\Cee^{(r)}_{\mu})_{\nu\alpha}.\label{eqn:asymp_T1}
\end{gather}
Eqn.~(\ref{eqn:asymp_T0}) implies that $\{ A_\nu \}_{\nu \in \ZZ_4}$ and $\{ B_\alpha \}_{\alpha \in \ZZ_4}$ are linearly independent.
Since $\dim \varphi = 4$, we can conclude that the $A_\nu$ and $B_\alpha$ span the representation space of $\pi$ and (the dual of) $\varphi$ respectively.
Further, Eqn.~(\ref{eqn:tel_to_K_4}) together with Eqns.~(\ref{eqn:asymp_T0})~and~(\ref{eqn:asymp_T1}) allows us to read off the action of $\pi(\mu)$ on the $A_\nu$ directly from the matrix $\Cee^{(0)}_\mu = \K_\mu\Em^{(0)}$:
\begin{equation}\label{eqn:rep_action_on_A}
    \forall \mu \in \ZZ_2^2,\ \pi(\mu) A_\nu = \sum_{\beta}(\K_{\mu})_{\nu\beta} A_\beta.
\end{equation}

In summary, so far we have managed to relate the bilinear forms (describing bipartite states) to correlation matrices.
But, what we want instead is to be able to relate the teleportation maps to correlations.
For this, we have extract the contribution of $\hat\varrho$ to the correlation matrices and then nullify it.

To this end, denote $W_\pi = \Span\{A_\nu\}_{\nu \in \ZZ_4} = \img(\Pi_\pi)$.
Let $\langle \cdot, \cdot \rangle_\pi$ be the inner product on $W_\pi$ with respect to which the representation $\pi$ unitary.
This inner product is unique and is obtained by twirling any inner product on $W_\pi$, i.e., let $\langle \cdot, \cdot\rangle$ be any inner product on $W_\pi$, then 
\begin{align*}
    \langle \xi, \eta\rangle_\pi := \tfrac14\sum_{g \in \ZZ_2^2} \langle \pi(g) \xi, \pi(g) \eta \rangle.
\end{align*}
Since, $\K_\mu$ is diagonal with $\pm1$ entries, it is straightforward to verify that the $A_\nu$ form an orthonormal basis with respect to $\langle \cdot, \cdot \rangle_\pi$.
This gives us two things, (i) $W_\pi \cong W_\pi^*$ by Fr\'echet-Riesz, and (ii) the ONB $A_\nu$ can be thus identified with its dual basis.
But, the observable $B_\alpha$ define linear functionals $\varrho(\cdot \otimes B_\alpha)$ on $W_\pi$ via $\varrho$.
Since $\varrho$ is non-degenerate with respect to the $A_\nu$ and $B_\alpha$, we can identify $\varrho$ with an invertible linear map $M : W_\pi \to W_\pi$ defined by the following equation:
\begin{align*}
    \langle A_\nu, M(A_\alpha) \rangle_\pi := \varrho(A_\nu \otimes B_\alpha).
\end{align*}
By Eqn.~(\ref{eqn:asymp_T0}),
\begin{align*}
    \langle A_\nu, M(A_\alpha) \rangle_\pi := \Big( \Cee^{(0)}_{00} \Big)_{\nu\alpha}.
\end{align*}
In other words, the matrix elements of $M$ in the basis $A_\nu$ are given by $\Cee^{(0)}_{00}$.
We will use the notation $\hat\Cee^{(0)}_{00} = M$.

We are now in a position to associate linear maps $W_\pi^* \to W_\pi^*$ with the normalized teleportation maps $\Oh^{(r)}_\mu$.
In order to do this, first define $\hat\Cee^{(r)}_\mu : W_\pi \to W_\pi$ via Eqn.~(\ref{eqn:asymp_T1}) by 
\begin{align*}
    \langle A_\nu, \hat\Cee^{(r)}_\mu(A_\alpha) \rangle_\pi := \Oh^{(r)}_\mu\hat\varrho(A_\nu)(B_\alpha) = \Big( \Cee^{(r)}_\mu \Big)_{\nu\alpha}
\end{align*}
Finally, define $\hat O^{(r)}_\mu := \hat\Cee^{(r)}_\mu\hat\Cee^{(0)}_{00}$, and let $O^{(r)}_\mu$ be the corresponding matrix.
Indeed, the linear map associated with $\Oh^{(r)}_\mu$ is none other than the transpose of $\hat O^{(r)}_\mu$ since 
\begin{align*}
    \Oh^{(r)}_\mu\hat\varrho(A_\nu)(B_\alpha) 
    &
    = \langle A_\nu, \hat O^{(r)}_\mu \hat\Cee^{(0)}_{00}(A_\alpha) \rangle_\pi
    \\
    &
    = \langle (\hat O^{(r)}_\mu)^T A_\nu,  \hat\Cee^{(0)}_{00}(A_\alpha) \rangle_\pi
    \\
    &
    = \hat\varrho\Big((\hat O^{(r)}_\mu)^T A_\nu\Big)(B_\alpha).
\end{align*}
Now, using that fact that $\varrho(\cdot \otimes B_\alpha)$ span $W_\pi^*$ (non-degeneracy of $\varrho$ together with linear independence of $B_\alpha$) we can go further and conclude that 
\begin{equation}\label{eqn:action_of_teleportation_matrix}
    \Oh^{(r)}_\mu\hat\varrho(A_\nu) 
    = \hat\varrho\Big(\big(\hat O^{(r)}_\mu \big)^T A_\nu\Big).
\end{equation}
This enables us to finally prove the claim at the start of this section 

\begin{theorem}[Multiplicative correlations]\label{thm:multiplicative_correlation}
    Define the maps 
    \begin{align*}
        \Oh^{(\vec r)}_{\vec \mu} := \prod_{i = \mathrm{len}(\vec r)}^{1}\Oh^{(r_i)}_{\mu_i},\quad
        \text{and} \quad 
        \hat O^{(\vec r)}_{\vec \mu} := \prod_{i = 1}^{\mathrm{len}(\vec r)}\hat O^{(r_i)}_{\mu_i}.
    \end{align*}
    If Condition~\ref{cond:hierarchy_condition} is satisfied, then $\forall N \in \NN,\ \forall \vec r \in \ZZ_3^N,\ \forall \vec \mu \in (\ZZ_2^2)^N,$
    \begin{align*}
        \Oh^{(\vec r)}_{\vec \mu}\hat\varrho(A_\nu)(B_\alpha) = \Big( O^{(\vec r)}_{\vec \mu}\Cee^{(0)}_{00} \Big)_{\nu\alpha}.
    \end{align*}
\end{theorem}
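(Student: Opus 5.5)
The plan is to prove the statement by induction on $N=\len(\vec r)$, using Eqn.~(\ref{eqn:action_of_teleportation_matrix}) as the single-step rule and pushing each teleportation map through $\hat\varrho$ onto Alice's observables. Throughout, $\hat O^{(r)}_\mu$ is viewed as a map on $W_\pi$, and $O^{(r)}_\mu$ denotes its matrix in the basis $\{A_\nu\}$. By definition this matrix equals $\Cee^{(r)}_\mu\,\Cee^{(0)}_{00}$, up to the convention that fixes the order of the product. Since $\Em^{(0)}$ squares to the identity, $O^{(r)}_\mu\Cee^{(0)}_{00}=\Cee^{(r)}_\mu$.

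\textbf{Base case} ($N=1$). This is exactly Eqn.~(\ref{eqn:asymp_T1}) combined with the identity $O^{(r)}_\mu\Cee^{(0)}_{00}=\Cee^{(r)}_\mu$. The only thing to check is the index convention relating $\hat\Cee$ and its matrix. With $\langle A_\nu,\hat X A_\alpha\rangle_\pi=X_{\nu\alpha}$ and $\{A_\nu\}$ orthonormal, composition of maps corresponds to the matrix product in the same order.

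\textbf{Inductive step.} I will first show the operator identity
\begin{align*}
\Oh^{(\vec r)}_{\vec\mu}\hat\varrho(A_\nu)=\hat\varrho\Big(\big(\hat O^{(\vec r)}_{\vec\mu}\big)^T A_\nu\Big).
\end{align*}
Write $\Oh^{(\vec r\circ r')}_{\vec\mu\circ\mu'}=\Oh^{(r')}_{\mu'}\,\Oh^{(\vec r)}_{\vec\mu}$. By the induction hypothesis, the right-hand side applied to $\hat\varrho(A_\nu)$ equals $\Oh^{(r')}_{\mu'}\hat\varrho(v)$ with $v=(\hat O^{(\vec r)}_{\vec\mu})^TA_\nu\in W_\pi$. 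Because $\{A_\nu\}$ spans $W_\pi$, Eqn.~(\ref{eqn:action_of_teleportation_matrix}) extends by linearity to all $v\in W_\pi$. This gives $\hat\varrho\big((\hat O^{(r')}_{\mu'})^T(\hat O^{(\vec r)}_{\vec\mu})^TA_\nu\big)=\hat\varrho\big((\hat O^{(\vec r)}_{\vec\mu}\hat O^{(r')}_{\mu'})^TA_\nu\big)$. The order reversal under transposition is exactly why the two products in the definition run in opposite directions.

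\textbf{Evaluating on $B_\alpha$.} Applying the identity to $B_\alpha$ and using $\hat\varrho(x)(B_\alpha)=\langle x,\hat\Cee^{(0)}_{00}A_\alpha\rangle_\pi$ gives
\begin{align*}
\Oh^{(\vec r)}_{\vec\mu}\hat\varrho(A_\nu)(B_\alpha)=\big\langle A_\nu,\hat O^{(\vec r)}_{\vec\mu}\hat\Cee^{(0)}_{00}A_\alpha\big\rangle_\pi=\big(O^{(\vec r)}_{\vec\mu}\Cee^{(0)}_{00}\big)_{\nu\alpha}.
\end{align*}
Here the transpose is taken with respect to $\langle\cdot,\cdot\rangle_\pi$, under which $\{A_\nu\}$ is orthonormal, so the adjoint is just the matrix transpose.

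\textbf{Main obstacle.} The delicate point is justifying the linear extension of Eqn.~(\ref{eqn:action_of_teleportation_matrix}). The maps $\Oh^{(r)}_\mu$ are defined on Bob's side $\img\Pi_\varphi$, while $\hat\varrho$ restricted to $W_\pi$ is an isomorphism onto $\img\Pi_\varphi$. I must ensure that $\Oh^{(r)}_\mu$ maps $\img\Pi_\varphi$ into itself, which holds because of the sandwiching by $\Pi_\varphi$, and that $\hat\varrho$ is bijective there. Both follow from $\dim\varphi=4$ and the linear independence of the $A_\nu$ and $B_\alpha$ (Eqn.~(\ref{eqn:asymp_T0})). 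With those in place, $\Oh^{(r)}_\mu$ is conjugate via $\hat\varrho$ to $(\hat O^{(r)}_\mu)^T$ on $W_\pi$, and the product formula is immediate.
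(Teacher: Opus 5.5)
Your proof is correct and follows essentially the same route as the paper. Both arguments apply Eqn.~(\ref{eqn:action_of_teleportation_matrix}) once per teleportation step, use linearity to push the next map $\Oh^{(r')}_{\mu'}$ through $\hat\varrho$, and evaluate against $B_\alpha$ at the end. The paper writes this with explicit basis expansions $\sum_\beta (O^{(r)}_\mu)_{\nu\beta}\,\hat\varrho(A_\beta)$, while you package it as an operator identity in which the transpose reverses the order of the product.

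The ``main obstacle'' you flag is not really one. Extending Eqn.~(\ref{eqn:action_of_teleportation_matrix}) from the basis $\{A_\nu\}$ to all of $W_\pi$ needs only the linearity of $\Oh^{(r)}_\mu$ and $\hat\varrho$. It does not require invariance of $\img\Pi_\varphi$ or bijectivity of $\hat\varrho$.
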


To reemphasize, Thm.~\ref{thm:multiplicative_correlation} enables us to make statements about correlations we have never tested for in hierarchy of Def.~\ref{def:teleportation_hierarchy}.
It is this property that enables us to establish the isomorphism to qubit quantum theory without having to \emph{explicitly test for every possible preparation procedure}.

\begin{proof}
    First calculate (using dirac notation to avoid confusion with indices)
    \begin{align*}
        \big(\hat O^{(r)}_{\mu}\big)^T |A_\nu\rangle
        &
        =
        \sum_{\beta} |A_\beta\rangle\langle A_\beta| \big(\hat O^{(r)}_{\mu}\big)^T |A_\nu\rangle
        \\
        &
        =
        \sum_{\beta} |A_\beta\rangle  \big(O^{(r)}_{\mu}\big)^T_{\beta\nu}
        \\
        &
        =
        \sum_{\beta} |A_\beta\rangle  \big(O^{(r)}_{\mu}\big)_{\nu\beta}
    \end{align*}
    Thus Eqn.~(\ref{eqn:action_of_teleportation_matrix}) gives 
    \begin{align*}
        \Oh^{(r)}_\mu\hat\varrho(A_\nu) 
        = \sum_{\beta} \big(O^{(r)}_{\mu}\big)_{\nu\beta} \hat\varrho(A_\beta).
    \end{align*}
    Applying $\Oh^{(r')}_{\mu'}$ to both sides of this equation gives 
    \begin{align*}
        \Oh^{(r')}_{\mu'}\Oh^{(r)}_\mu\hat\varrho(A_\nu) 
        &
        = \sum_{\beta} \Oh^{(r')}_{\mu'} \hat\varrho(A_\beta)\big(O^{(r)}_{\mu}\big)_{\nu\beta}
        \\
        &
        = \sum_{\beta, \gamma} \big(O^{(r')}_{\mu'}\big)_{\beta\gamma} \hat\varrho(A_\gamma)\big(O^{(r)}_{\mu}\big)_{\nu\beta}.
    \end{align*}
    Evaluating both sides against $B_\alpha$ gives 
    \begin{align*}
        \Oh^{(r')}_{\mu'}\Oh^{(r)}_\mu\hat\varrho(A_\nu)(B_\alpha)
        &
        = \sum_{\beta, \gamma} \hat\varrho(A_\gamma)(B_\alpha)  \big(O^{(r)}_{\mu}\big)_{\nu\beta} \big(O^{(r')}_{\mu'}\big)_{\beta\gamma}
        \\
        &
        = \sum_{\beta, \gamma} \big(\Cee^{(0)}_{00}\big)_{\gamma\alpha}  \big(O^{(r)}_{\mu}\big)_{\nu\beta} \big(O^{(r')}_{\mu'}\big)_{\beta\gamma}
        \\
        &
        = \big( O^{(r)}_{\mu} O^{(r')}_{\mu'} \Cee^{(0)}_{00}\big)_{\nu\alpha}.
    \end{align*}
    Iterating this proves the claim.
\end{proof}

\begin{lemma}\label{lem:SO(3)}
    The linear maps $\hat O^{(1)}_{00}$ and $\hat O^{(2)}_{00}$ generate a dense subset of a representation of $SO(3)$, with respect to the inner product $\langle\cdot,\cdot\rangle_\pi$, with block structure 
    \begin{align*}
        \begin{pmatrix}
            R & 0 \\
            0 & 1
        \end{pmatrix},
    \end{align*}
    where $R$ is a $3 \times 3$ block.
    That is, one copy of the usual three-dimensional irrep on the block $R$, and one copy of the trivial rep.
\end{lemma}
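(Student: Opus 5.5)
The plan is to compute the matrices $O^{(1)}_{00}$ and $O^{(2)}_{00}$ explicitly in the basis $\{A_\nu\}_{\nu\in\ZZ_4}$, which is orthonormal for $\langle\cdot,\cdot\rangle_\pi$. Then I would read off both the block structure and the density claim from the explicit form. By definition $\hat O^{(r)}_\mu = \hat\Cee^{(r)}_\mu\hat\Cee^{(0)}_{00}$. Since the $A_\nu$ form an ONB, the matrix of $\hat\Cee^{(r)}_\mu$ in this basis is just $\Cee^{(r)}_\mu$ from Eqn.~(\ref{eqn:Corr_mat_def}). Hence
\begin{align*}
    O^{(r)}_{00} = \Cee^{(r)}_{00}\,\Cee^{(0)}_{00} = \K_{00}\Em^{(r)}\K_{00}\Em^{(0)} = \Em^{(r)}\Em^{(0)}.
\end{align*}
Using $\Em^{(1)} = (R^z_\theta\oplus 1)\Em^{(0)}$, $\Em^{(2)} = (R^x_\theta\oplus 1)\Em^{(0)}$ and $(\Em^{(0)})^2 = \Id_4$ (it is diagonal with entries $\pm1$), we get $O^{(1)}_{00} = R^z_\theta\oplus 1$ and $O^{(2)}_{00} = R^x_\theta\oplus 1$. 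These are orthogonal matrices in an orthonormal basis, so the corresponding maps are orthogonal with respect to $\langle\cdot,\cdot\rangle_\pi$. They also preserve the splitting $W_\pi = \Span\{A_0,A_1,A_2\}\oplus\Span\{A_3\}$, acting trivially on $A_3=\1$.

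Next, the semigroup generated by $\hat O^{(1)}_{00},\hat O^{(2)}_{00}$ consists of the maps $R\oplus 1$, where $R$ ranges over the semigroup generated by $R^z_\theta,R^x_\theta$ inside $SO(3)$. To get a group rather than just a semigroup, I would use compactness of $SO(3)$. In a compact group the closure of the semigroup generated by a set coincides with the closure of the group it generates: for any $g$, the powers $g^n$ accumulate, so $g^{n_k}\to \Id$ along a subsequence, and therefore $g^{n_k-1}\to g^{-1}$. This can be phrased with Theorem~\ref{thm:ellis} applied to the closure, whose idempotent must be $\Id$ inside a group. By the cited density fact \cite{breuillard2002densefreesubgroupslie}, for $\theta\in(0,\pi/4]$ the group generated by $R^z_\theta,R^x_\theta$ is dense in $SO(3)$. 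Consequently the generated set is dense in $\{R\oplus 1 : R\in SO(3)\}$. This set is manifestly the direct sum of the defining three-dimensional irrep and the trivial rep.

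The main obstacle is the density statement for the specific range $\theta\in(0,\pi/4]$. Two cases need to be excluded: the generated group being finite, and its closure being a proper closed subgroup such as $SO(2)$ or a normalizer of a torus. The $SO(2)$ case is impossible because the two rotation axes differ. For the remaining cases, I would first try to argue that $\theta$ not being a rotation order of the finite subgroups of $SO(3)$ (orders $1,2,3,4,5$ combined with perpendicular axes) rules out finite closures. Note that the angle $\theta=\pi/2$ is excluded, matching the paper's remark about CHSH value $2$. One then checks whether $\pi/4$, an order-$8$ rotation, can lie in a finite subgroup together with a perpendicular rotation axis; it cannot, since only dihedral groups contain order-$8$ elements, and there the perpendicular elements have order $2$. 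Should this case analysis become delicate, I would simply invoke \cite{breuillard2002densefreesubgroupslie} as the paper does.
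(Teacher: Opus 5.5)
Your proposal is correct and follows the paper's proof. Like the paper, you compute $O^{(r)}_{00}=\Cee^{(r)}_{00}\Cee^{(0)}_{00}=R\oplus 1$ in the orthonormal basis $\{A_\nu\}$, and then invoke density of the group generated by $R^z_\theta, R^x_\theta$ in $SO(3)$.

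Your two additions go beyond what the paper writes, and both are sound. First, you note that in a compact group the closure of the generated semigroup equals the closure of the generated group; this matters because teleportation maps only compose forward. Second, your elementary density check via the closed subgroups of $SO(3)$ works: the $O(2)$ case is excluded since both generators have order greater than $2$, and every rational $\theta\in(0,\pi/4]$ gives order at least $8$, so your $\pi/4$ argument covers the whole range.
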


In fact, all matrices $O^{(r)}_{\mu}$ belong to $SO(3)$.

\begin{proof}
    By definition, the matrix elements of $\hat O^{(1)}_{00}$ in the basis $\{A_\nu\}_{\nu \in \ZZ_4}$ are given by $\Big( \Cee^{(1)}_{00}\Cee^{(0)}_{00} \Big)_{\nu\alpha}$.
    This gives (Sec.~\ref{subsec:quantum_realization_for_qubits})
    \begin{align*}
        O^{(1)}_{00} 
        =
        \begin{pmatrix}
            R^z_\theta & 0 \\
            0 & 1
        \end{pmatrix}.
    \end{align*}
    Similarly 
    \begin{align*}
        O^{(2)}_{00} 
        =
        \begin{pmatrix}
            R^x_\theta & 0 \\
            0 & 1
        \end{pmatrix}.
    \end{align*}
    Again, by definition (Sec.~\ref{subsec:quantum_realization_for_qubits}), the group generated by $R^z_\theta$ and $R^x_\theta$ is dense in the standard irrep of $SO(3)$, which proves the claim.
\end{proof}

Thus we have shown that the entire symmetry group of the qubit state space can be obtained by teleportation, this allows us to conclude the following

\begin{theorem}[Noslerist bound]\label{thm:noslerist_bound}
    The CHSH value of any GPT which satisfies Condition~\ref{cond:hierarchy_condition} is \emph{bounded below} by $2\sqrt2$.
    This bound is achievable.
\end{theorem}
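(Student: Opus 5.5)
The strategy is to use Theorem~\ref{thm:multiplicative_correlation} and Lemma~\ref{lem:SO(3)}: they produce correlations in the effective GPT that were never tested directly. The aim is to find, among the states obtained by iterated teleportation, one whose correlation matrix attains the Tsirelson value.

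\emph{Key step: the correlation matrices reachable by teleportation.} For any $\vec r$ and with all outcomes $\vec\mu=\vec 0$, Theorem~\ref{thm:multiplicative_correlation} shows that the normalized state $\Oh^{(\vec r)}_{\vec 0}\hat\varrho$ has correlation matrix $O^{(\vec r)}_{\vec 0}\Cee^{(0)}_{00}$. By Lemma~\ref{lem:SO(3)}, these products $O^{(\vec r)}_{\vec 0}$ are dense in the block representation $R\oplus 1$ with $R\in SO(3)$. The relevant state set is closed, since it consists of limits of states in a compact state space (as in the proof of Lemma~\ref{lem:comp_top_sg}), and correlations depend continuously on the maps. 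Therefore, for every $R\in SO(3)$ the effective GPT contains a bipartite state whose $3\times3$ correlation block between $A_0,A_1,A_2$ and $B_0,B_1,B_2$ is
\begin{align*}
    R\,\mathrm{diag}(1,-1,1).
\end{align*}
Its marginals vanish, because the last row and column are preserved.

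\emph{Choosing $R$.} Take the settings $A_0,A_2$ on Alice's side and $B_0,B_2$ on Bob's side. This pair is unaffected by the sign flip $\mathrm{diag}(1,-1,1)$, so the CHSH expression is computed directly from entries of $R$. Let $R$ be a rotation about the $y$-axis by $\pi/4$. Its $(0,2)$ block is then $\tfrac1{\sqrt2}\begin{pmatrix}1&1\\-1&1\end{pmatrix}$, up to sign convention. The CHSH combination $\langle A_0B_0\rangle+\langle A_0B_2\rangle-\langle A_2B_0\rangle+\langle A_2B_2\rangle$ then equals $2\sqrt2$. Since all these objects belong to the effective GPT, which is the GPT closure of the restricted data of Theorem~\ref{thm:effective_gpt}, any GPT satisfying Condition~\ref{cond:hierarchy_condition} exhibits a CHSH value of at least $2\sqrt2$. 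This gives the lower bound.

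\emph{Achievability.} The quantum realization of Sec.~\ref{subsec:quantum_realization_for_qubits} satisfies Condition~\ref{cond:hierarchy_condition}, because iterated Bell-state teleportation reproduces the stated matrices. Tsirelson's bound holds there, so the maximal CHSH value is exactly $2\sqrt2$, and the bound is attained.

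\emph{Main obstacle.} The delicate point is the passage from the dense subgroup to an exact rotation by $\pi/4$. Finite products give CHSH values arbitrarily close to $2\sqrt2$, but exact attainment requires the closedness argument above. A further subtlety is that products with nonzero outcomes are only normalized correctly because of Lemma~\ref{lem:prob_quarter_all}; taking all outcomes equal to $0$ avoids the $\K_\mu$ factors. Even without exact attainment, the supremum statement follows.
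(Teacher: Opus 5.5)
Your proposal is correct and follows essentially the same route as the paper. You combine Lemma~\ref{lem:SO(3)} with Theorem~\ref{thm:multiplicative_correlation} to get a dense set of teleportation-generated rotations, use compactness of the state space to realize an exact $\pi/4$ rotation, and read off the CHSH value $2\sqrt2$. The differences are minor: the paper uses $R^z_{\pi/4}$ with settings $A_0,A_1,B_0,B_1$ and the signed operator $A_0B_0-A_0B_1-A_1B_0-A_1B_1$ to absorb the sign flip of $\Em^{(0)}$, whereas you rotate about the $y$-axis to avoid that flip, and you add a tightness remark via Tsirelson's bound that the paper defers to the subsequent corollary.
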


There are two things of note here. 
The first point is that the above statement holds not only for the effective GPT, but also for the ``parent'' GPT.
The second point is that Thm.~\ref{thm:noslerist_bound} tells us that we can obtain a certificate for a CHSH value of $2\sqrt2$ (which is  the maximal quantum value) without ever having measured it.
In particular, this would hold even if the CHSH value we measured in the teleportation hierarchy (Def.~\ref{def:teleportation_hierarchy}, Lem.~\ref{lem:max_chsh}) is $2 + \varepsilon$ for $\varepsilon > 0$ arbitrarily small.
The only requirement is that a CHSH inequality is violated.

\begin{proof}
    By Lem.~\ref{lem:SO(3)}, the linear map $\hat O_{\pi/4}$ corresponding to the matrix
    \begin{align*}
        O_{\pi/4}
        =
        \begin{pmatrix}
            R^z_{\pi/4} & 0 \\
            0 & 1
        \end{pmatrix}
    \end{align*}
    is in the closure of the group generated by $\hat O^{(1)}_{00}$ and $\hat O^{(2)}_{00}$.
    Therefore, by the compactness of the state space, there exists a state $\varrho_{\pi/4}$ in the effective AGGs GPT such that 
    \begin{align*}
        \varrho_{\pi/4} (A_\nu \otimes B_\alpha) = \langle A_\nu, \hat O_{\pi/4} \hat\Cee^{(0)}_{00}(A_\alpha) \rangle_\pi
    \end{align*}
    which gives CHSH value $2\sqrt2$ with respect to the CHSH observable
    \begin{align*}
        A_0 \otimes B_0 
        - A_0 \otimes B_1 
        - A_1 \otimes B_0 
        - A_1 \otimes B_1.
    \end{align*}
\end{proof}

\begin{theorem}\label{thm:isomorphic_to_qubit}
    Given any instance of the group generator experiment (for the qubit), such that Condition~\ref{cond:hierarchy_condition} is satisfied, the effective GPT thus obtained (Thm.~\ref{thm:effective_gpt}) has the following properties:
    \begin{enumerate}
        \item The local state and space (of both Alice and Bob) is isomorphic to state and effect space of one qubit.
        \item The bipartite state space is isomorphic to the convex hull of the two-qubit product states and the two-qubit maximally entangled states. 
        Equivalently, the bipartite effect space contains only (and all) product and maximally entangled measurements.
    \end{enumerate}
\end{theorem}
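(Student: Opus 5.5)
We work entirely inside the effective GPT of Thm.~\ref{thm:effective_gpt}. There $W_\pi=\Span\{A_\nu\}$ is four-dimensional, and $\{A_\nu\}$ is an orthonormal basis for $\langle\cdot,\cdot\rangle_\pi$. We identify $W_\pi$ with the real span of $\{\sigma_x,\sigma_y,\sigma_z,\Id\}$ via $A_0\mapsto\sigma_x$, $A_1\mapsto\sigma_y$, $A_2\mapsto\sigma_z$, $A_3\mapsto\Id$, and do the same for Bob's $B_\alpha$. The aim is to show that under this identification the closed convex cones of the effective GPT coincide with the qubit cones (locally), and with the stated bipartite cone.

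\textbf{Local state and effect spaces.} The local effects of the effective GPT are generated by the $A_\nu$ (and $\1$) together with their images under teleportation. By Eqn.~(\ref{eqn:action_of_teleportation_matrix}) and Thm.~\ref{thm:multiplicative_correlation}, the teleportation maps act on Alice's effects through the matrices $O^{(\vec r)}_{\vec\mu}$. By Lem.~\ref{lem:SO(3)} and compactness these matrices close up to $SO(3)\oplus 1$; the $\K_\mu$ contribute the $\pi$-rotations, which already lie in $SO(3)$.

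The effects $\tfrac12(\1\pm A_2)$ are valid effects, since $A_2$ is a $\pm1$ observable. Their orbit under $SO(3)$ is therefore contained in the effect space, and that orbit is exactly the set of qubit rank-one projectors $\tfrac12(\Id+\vec n\cdot\vec\sigma)$ with $|\vec n|=1$. This gives the inclusion: the effective effect cone contains the qubit effect cone.

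For the reverse inclusion we use the states $\hat\varrho$ restricted by effects of Alice, i.e.\ Bob's conditional states. From $\Cee^{(0)}_{00}$, the state $\hat\varrho(\tfrac12(\1+A_2))$ is normalised to the Bloch vector $(0,0,1)$. Its $SO(3)$ orbit, obtained via teleportation, is the full Bloch sphere. Since the effective state space must be positive on all effects in the orbit above, and the Bloch ball is self-dual, neither cone can be strictly larger without violating the polar-duality constraint. Hence both local spaces equal the qubit ones.

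\textbf{Bipartite statement.} The bipartite states that actually arise in the effective GPT are of two kinds:
\begin{itemize}
\item the closure of $\{\Oh^{(\vec r)}_{\vec\mu}\hat\varrho\}$, whose correlation matrices are $O\,\Cee^{(0)}_{00}$ with $O\in SO(3)\oplus1$ by Thm.~\ref{thm:multiplicative_correlation};
\item products of local states.
\end{itemize}
The first family is precisely the set of correlation matrices of two-qubit maximally entangled states, i.e.\ $(U\otimes\Id)|\Phi\rangle$ with $\Cee^{(0)}_{00}$ corresponding to some fixed Bell state. The second family gives the product states. Taking the convex hull yields the claimed bipartite state space.

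Dually, the bipartite effects are the closure of the products $A\otimes B$ and of the Erwin effects $\phi^{(r)}_\mu$ transported by the group. These act as $\hat\Phi$, i.e.\ maximally entangled projectors. This gives ``only and all'' product and maximally entangled measurements, with the converse inclusion following from the GPT-closure definition.

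\textbf{Main obstacle.} The hardest step is justifying that the GPT closure contains nothing beyond these objects, and that the orbit statements exhaust all states and effects. This is a matter of tracking Algorithm~1 of \cite{dmello2024entanglement} without symmetrization. Our first attempt would be to argue that every generator is one of the above objects, and that entanglement swapping of maximally entangled states with maximally entangled effects stays in this family, because the correlation matrices compose multiplicatively by Thm.~\ref{thm:multiplicative_correlation}.
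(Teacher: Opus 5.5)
Your proposal follows essentially the same route as the paper: you identify $A_\nu, B_\alpha$ with $\sigma_x,\sigma_y,\sigma_z,\Id$, and you use the teleportation-generated $SO(3)\oplus 1$ action (Lem.~\ref{lem:SO(3)}, Thm.~\ref{thm:multiplicative_correlation}) to obtain the full Bloch ball locally and, since $\varrho$ is the Bell state as a bilinear form, all maximally entangled states, with the effects handled analogously. Your self-duality argument for the local cones makes explicit a step the paper only asserts, and the exhaustiveness of the GPT closure, which you flag as the main obstacle, is not argued in the paper's proof either, so your sketch is at least as complete as the original.
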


\begin{proof}
    Identifying $A_0 \equiv \sigma_x$, $A_1 \equiv \sigma_y$, $A_2 \equiv \sigma_z$, and $\1 \equiv \Id$ gives a bijection between the (affinely shifted) Bloch ball and the local state space of the effective GPT due to the fact that the representation of $SO(3)$ on both are identical.
    With respect to this identification, the bipartite state $\varrho$ is isomorphic (as a bilinear form) to the Bell state.
    This, together with the fact that the $SO(3)$ representation corresponds to a local unitary action on the quantum side, we can generate any state with the same singular values (Schmidt coefficients) as the Bell state, i.e., all the maximally entangled states.
    An analogous arguments holds for the effects.
\end{proof}

\begin{corollary}
    The CHSH value of the effective GPT of any instance of the group generator experiment satisfying Condition~\ref{cond:hierarchy_condition} is \emph{bounded above} by $2\sqrt2$, and hence (by Thm.~\ref{thm:noslerist_bound}) equal to $2\sqrt2$.
\end{corollary}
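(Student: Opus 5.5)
The plan is to reduce the corollary to the classical Tsirelson bound for qubits, using the isomorphism of Thm.~\ref{thm:isomorphic_to_qubit}. In the effective GPT, Alice's and Bob's local effect spaces are identified with the qubit effect space via $A_0\equiv\sigma_x$, $A_1\equiv\sigma_y$, $A_2\equiv\sigma_z$, $\1\equiv\Id$. Every $\pm1$-valued observable available to Alice (or Bob) is then a difference $2e-\1$ of an effect and its negation. Under the identification it becomes a Hermitian operator $X$ with $\|X\|_\infty\le 1$, since effects correspond to operators $0\le e\le \Id$.

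The first step is to pin down the state side. By Thm.~\ref{thm:isomorphic_to_qubit}(2), every bipartite state of the effective GPT corresponds to a convex combination of two-qubit product states and maximally entangled states. Each of these is a bona fide density operator $\omega$ on $\CC^2\otimes\CC^2$. The pairing $\varrho(e\otimes f)$ then equals $\tr(\omega\,(e\otimes f))$, because the identification is linear and agrees with quantum pairings on the generating data. This is exactly what was established for $\varrho$ (the Bell state) and propagated to its local-$SO(3)$ orbit.

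The second step is to evaluate the CHSH expression. Take any CHSH functional $\langle X_0Y_0\rangle+\langle X_0Y_1\rangle+\langle X_1Y_0\rangle-\langle X_1Y_1\rangle$, built from observables of the above form. Its value on an effective-GPT state equals $\tr(\omega\,\mathcal B)$, where $\mathcal B$ is the usual CHSH operator with contractions $X_i, Y_j$ in place of projective observables. Tsirelson's bound then gives $|\tr(\omega \mathcal B)|\le\|\mathcal B\|\le 2\sqrt2$. I would handle the contraction case either by writing each contraction as a convex combination of two $\pm1$ unitaries, which keeps the expression multi-affine so the extreme points suffice, or by the standard sum-of-squares argument, which only uses $X_i^2\le\Id$ and $Y_j^2\le\Id$. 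Since the CHSH value is affine in the state, it suffices to check the extreme points, and these are already quantum.

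The lower bound is supplied by Thm.~\ref{thm:noslerist_bound}, so combining the two bounds gives equality with $2\sqrt2$.

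The main obstacle is making sure the identification covers everything in the GPT closure, including the limit points obtained from the dense $SO(3)$ subgroup, and not just the generating data. I would argue this by closedness. The quantum set of product and maximally entangled states is compact, and the isomorphism is linear and hence continuous, so limits stay inside it. For effects, I would use that the effect space is the dual of the state space within the span. This shows that no effect beyond the qubit effects can appear, so all observables really are contractions.
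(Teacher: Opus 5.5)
Your proposal is correct and follows the paper's route. The paper's proof is the one-line observation that the qubit isomorphism of Thm.~\ref{thm:isomorphic_to_qubit} makes Tsirelson's bound apply, with Thm.~\ref{thm:noslerist_bound} supplying the lower bound, and your steps (observables as Hermitian contractions, states as quantum states, closure under limits) make that reduction explicit.

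One small slip: a Hermitian qubit contraction such as $\mathrm{diag}(1/2,1/5)$ is in general not a convex combination of only \emph{two} $\pm1$ observables. This does not hurt you, since your multi-affine argument works with any finite convex combination (for instance one read off from the spectral decomposition), and your sum-of-squares route needs no decomposition at all.
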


\begin{proof}
    This follows from the isomorphism to the qubit (Thm.~\ref{thm:isomorphic_to_qubit}) where Tsirelson's bound holds.
\end{proof}

%================================================================================================
% RELAXING THE IID ASSUMPTION
%================================================================================================

\section{Removing the IID assumption}\label{sec:relaxing_iid}

In this section, assuming we have access to an infinite chain of particles, we show that the IID assumption made at the beginning of Sec.~\ref{sec:teleportation_heirarchy_qubit} incurs no loss of generality.
To do so we use the de~Finetti representation theorem of \cite{christandltoner2009definetti}.

Consider a chain of particles labelled by the integers $\ZZ$. 
Pair the particles labelled by $(2k - 1, 2k)$.
For the $N$ level of the teleportation hierarchy, $\T_N$, instead of having access to $N+1$ identical copies of $\rho$, we now sample pairs $(2k - 1, 2k)$ from the infinite chain at random.
Demanding Condition~\ref{cond:hierarchy_condition} to hold for the state sampled in this way is equivalent to the infinite chain being invariant under permutation of the pairs $(2k - 1, 2k)$.
Therefore by the de~Finetti representation theorem of \cite{christandltoner2009definetti}, the state sampled in each $\T_N$ can be expressed as 
\begin{align*}
    \xi_{N+1} := \int dm(\lambda) \rho_\lambda^{\otimes N + 1},
\end{align*}
for some probability measure $m(\lambda)$ on an event space whose elements are indexed by $\lambda$, and bipartite states $\rho_\lambda$.

In principle, the states $\rho_\lambda$ should have an index $N$, since the IID decomposition for different $\xi_N$ can be different.
But, since we can use $\xi_N$ to perform every test $T_M$ for $M < N$, we can assume the $\rho_\lambda$ are same for every $N$.

We establish some notation required for the rest of this section.
Assume that the Erwins were given the settings $\vec \zeta \in \Z,\ \len(\vec \zeta) = N$, and obtained the outcome $\vec \mu$.
Let $\Cee^{(\vec \zeta)}_{\lambda, \vec \mu}$ denote the correlation matrix obtained if $\T_N$ is performed on $\rho_\lambda^{\otimes N+1}$.
Denote by $\Lambda^{(\vec \zeta)}_{\lambda, \vec \mu}$ the teleportation map for the IID state $\rho_\lambda^{\otimes N}$, i.e., 
\begin{align*}
    \Lambda^{(\vec \zeta)}_{\lambda, \vec \mu}
    =
    \prod_{i = \len(\vec \zeta)}^{1} \Lambda^{(z_i)}_{\lambda, \mu_i}
    =
    \prod_{i = \len(\vec \zeta)}^{1} \hat\rho_\lambda \hat\phi^{(z_i)}_{\mu_i}.
\end{align*}
Then the element of the state cone obtained in $\T_N$ after the Erwins have performed their measurement is
\begin{align*}
    \xi^{(\vec \zeta)}_{\vec \mu} 
    := 
    \int dm(\lambda) \Lambda^{(\vec \zeta)}_{\lambda, \vec \mu}\rho_\lambda.
\end{align*}
Correspondingly, the probability of obtaining $\vec \mu$ given $\vec \zeta$ is given by 
\begin{align*}
    p(\vec \mu | \vec \zeta) 
    = 
    \int dm(\lambda) p_\lambda(\vec \mu | \vec \zeta),
    \quad 
    p_\lambda(\vec \mu | \vec \zeta) := \Lambda^{(\vec \zeta)}_{\lambda, \vec \mu}\rho_\lambda(\1)(\1).
\end{align*}
Now, assuming Condition~\ref{cond:hierarchy_condition} we have 
\begin{align*}
    \xi^{(\vec \zeta)}_{\vec \mu}(A_\nu \otimes B_\alpha) 
    = 
    p(\vec \mu | \vec \zeta) \big(\Cee^{(\vec \zeta)}_{\vec \mu}\big)_{\nu\alpha}.
\end{align*}
By construction, 
\begin{align*}
    \Cee^{(\vec \zeta)}_{\vec \mu} = R\Em^{(0)},
\end{align*}
for some $R \in SO(3)$ (Sec.~\ref{subsec:quantum_realization_for_qubits} and Condition~\ref{cond:hierarchy_condition}).
Thus, it satisfies
\begin{equation}\label{eqn:orthogonality_C}
    \Cee^{(\vec \zeta)}_{\vec \mu} (\Cee^{(\vec \zeta)}_{\vec \mu})^T
    = R\Em^{(0)} (\Em^{(0)})^T R^T
    = \Id.
\end{equation}

\begin{definition}[Orthogonality test $\T^O_N$]
    Let $N \in \NN$ be fixed.
    Sample $\xi_{2N+2}$ from the infinite chain.
    Split it into two halves as follows 
    \begin{align*}
        \xi_{2N+2} 
        = \int dm(\lambda) \rho_\lambda^{\otimes N+1}\otimes\rho_\lambda^{\otimes N+1}.
    \end{align*}
    Perform the test $\T_N$ on each half, choosing the same setting for the Erwins as well as the Bobs, i.e., 
    measure 
    \begin{align*}
        \int dm(\lambda) \Lambda^{(\vec \zeta)}_{\lambda, \vec \mu}\rho_\lambda(A_\nu \otimes B_\beta) \Lambda^{(\vec \zeta)}_{\lambda, \vec \mu'}\rho_\lambda(A_\alpha \otimes B_\beta).
    \end{align*}
\end{definition}

The purpose of this test is to check a certain orthogonality condition on the matrices $\Cee^{(\vec \zeta)}_{\lambda, \vec \mu}$.
From the definition of $\xi^{(\vec \zeta)}_{\vec \mu}$ we have 
\begin{align*}
    \Cee^{(\vec \zeta)}_{\vec \mu} = \frac1{p(\vec \mu | \vec \zeta)} \int dm(\lambda)\ p_\lambda(\vec \mu | \vec \zeta) \Cee^{(\vec \zeta)}_{\lambda, \vec \mu}.
\end{align*}
On the other hand, the correlations from the orthogonality tests, for $\vec \mu = \vec \mu'$, allow us to measure 
\begin{align*}
    &
    \sum_\beta \int dm(\lambda)\ \Lambda^{(\vec \zeta)}_{\lambda, \vec \mu}\rho_\lambda(A_\nu \otimes B_\beta) \Lambda^{(\vec \zeta)}_{\lambda, \vec \mu}\rho_\lambda(A_\alpha \otimes B_\beta)
    \\
    &
    = \int dm(\lambda)\ p_\lambda(\vec \mu | \vec \zeta)^2 \Cee^{(\vec \zeta)}_{\lambda, \vec \mu}(\Cee^{(\vec \zeta)}_{\lambda, \vec \mu})^T.
\end{align*}

\begin{lemma}\label{lem:orth_to_iid}
    Let $\Cee \in \RR^{n \times n}$ be an orthogonal matrix ($\Cee\Cee^T = \Id$) admitting a convex decomposition
    \begin{align*}
        \Cee =  \int dm(\lambda)\ \Cee_\lambda,
    \end{align*}
    for $\Cee_\lambda \in \RR^{n \times n}$.
    Then $\forall \lambda,\ \Cee = \Cee_\lambda$ if and only if 
    \begin{align*}
        \int dm(\lambda)\ \Cee_\lambda\Cee_\lambda^T = \Id.
    \end{align*}
\end{lemma}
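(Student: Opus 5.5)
The plan is a variance identity: the second moment of the $\Cee_\lambda$ exceeds the square of their mean by a positive semidefinite term, and this term vanishes exactly when there is no spread. Throughout, $m$ is a probability measure, since the decomposition is convex. Assume also that $\lambda\mapsto\Cee_\lambda$ is measurable with $\int dm(\lambda)\,\|\Cee_\lambda\|^2<\infty$, so that all integrals below make sense; in the application this holds because the correlation matrices have bounded entries.

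The key computation is the expansion
\begin{align*}
    \int dm(\lambda)\,(\Cee_\lambda-\Cee)(\Cee_\lambda-\Cee)^T
    &= \int dm(\lambda)\,\Cee_\lambda\Cee_\lambda^T - \Cee\Cee^T - \Cee\Cee^T + \Cee\Cee^T \\
    &= \int dm(\lambda)\,\Cee_\lambda\Cee_\lambda^T - \Id .
\end{align*}
The cross terms are evaluated with $\int dm(\lambda)\,\Cee_\lambda=\Cee$ and $\int dm(\lambda)=1$. The orthogonality $\Cee\Cee^T=\Id$ is used only in the final step.

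For the ``only if'' direction, if $\Cee_\lambda=\Cee$ for all $\lambda$, then $\int dm(\lambda)\,\Cee_\lambda\Cee_\lambda^T=\Cee\Cee^T=\Id$ directly.

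For the ``if'' direction, the hypothesis makes the left-hand side of the identity vanish. Taking the trace gives
\begin{align*}
    \int dm(\lambda)\,\|\Cee_\lambda-\Cee\|_F^2 = 0 .
\end{align*}
The integrand is a non-negative function, so $\Cee_\lambda=\Cee$ for $m$-almost every $\lambda$.

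The main obstacle is interpretive rather than computational. The conclusion can only hold for $m$-almost every $\lambda$, because values on a null set are invisible to both integrals. So I would state ``$\forall\lambda$'' as holding almost everywhere with respect to $m$, or equivalently on the support of $m$ after discarding a null set. This is harmless for the application, since $\xi_{N+1}$ is unchanged when $\rho_\lambda$ is modified on a null set.
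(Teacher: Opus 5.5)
Your proof is correct and follows the paper's argument: you expand $\int dm(\lambda)\,(\Cee_\lambda-\Cee)(\Cee_\lambda-\Cee)^T$ in the same way, and your trace/Frobenius-norm step plays the role of the paper's observation that the diagonal entries of $MM^T$ are squared row norms. Your caveat that the conclusion holds only for $m$-almost every $\lambda$ correctly sharpens the paper's ``$\forall \lambda$'', since the paper's own positivity argument also only yields the conclusion almost everywhere.
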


\begin{proof}
    The forward direction is trivial.
    Conversely, let 
    \begin{align*}
        \int dm(\lambda)\ \Cee_\lambda\Cee_\lambda^T = \Id.
    \end{align*}
    Consider 
    \begin{align*}
        &
        \int dm(\lambda)\ (\Cee_\lambda - \Cee)(\Cee_\lambda - \Cee)^T
        \\
        &
        = \int dm(\lambda)\ 
        (\Cee_\lambda\Cee_\lambda^T
        - \Cee\Cee_\lambda^T
        - \Cee_\lambda\Cee^T
        + \Cee\Cee^T)
        \\
        &
        = \Id
        - \Id
        - \Id
        + \Id = 0.
    \end{align*}
    Since $(\Cee_\lambda - \Cee)(\Cee_\lambda - \Cee)^T$ is positive semi-definite, this is only possible if
    \begin{align*}
        \forall \lambda,\ (\Cee_\lambda - \Cee)(\Cee_\lambda - \Cee)^T = 0.
    \end{align*}
    Since $M = \Cee_\lambda - \Cee$ is a real matrix, the $i^{\mathrm{th}}$ diagonal element of $MM^T$ is the euclidean norm of the $i^{\mathrm{th}}$ row of $M$.
    This is zero is and only if the $i^{\mathrm{th}}$ row of $M$ is the zero vector.
    Thus 
    \begin{align*}
        \forall \lambda,\ \Cee_\lambda - \Cee = 0.
    \end{align*}
\end{proof}

\begin{condition}
    For every $N \in \NN$, the correlation data from the orthogonality test $\T^O_N$ satisfies 
    \begin{align*}
        \tfrac1{p(\vec \mu | \vec \zeta)^2}\int dm(\lambda)\ p_\lambda(\vec \mu | \vec \zeta)^2 \Cee^{(\vec \zeta)}_{\lambda, \vec \mu}(\Cee^{(\vec \zeta)}_{\lambda, \vec \mu})^T =  \Id.
    \end{align*}
\end{condition}

Taking $\Cee = \Cee^{(\vec \zeta)}_{\vec \mu}$ and $\Cee_\lambda = \tfrac{p_\lambda(\vec \mu | \vec \zeta)}{p(\vec \mu | \vec \zeta)}\Cee^{(\vec \zeta)}_{\lambda, \vec \mu}$ in Lemma~\ref{lem:orth_to_iid} we get 
\begin{align*}
    \forall \lambda,\ \Cee^{(\vec \zeta)}_{\vec \mu} = \tfrac{p_\lambda(\vec \mu | \vec \zeta)}{p(\vec \mu | \vec \zeta)}\Cee^{(\vec \zeta)}_{\lambda, \vec \mu}.
\end{align*}
This implies that the IID states $\rho_\lambda^{\otimes N+1}$ all reproduce the correlations of $\T_{N}$ demanded in Condition~\ref{cond:hierarchy_condition}.
Thus, without loss of generality we can assume $\xi_N = \rho^{\otimes N}$.

\begin{remark}
    Since each IID state reproduces the correlation of Condition~\ref{cond:hierarchy_condition}, Lemma~\ref{lem:prob_quarter_all} applies, and all the outcomes $\vec \mu$ must be equiprobable.
\end{remark}

%================================================================================================
% GENERALIZATIONS
%================================================================================================

\section{Obtaining the rest finite dimensional quantum theory}\label{sec:fin_dim_qm}

What remains to be shown is that the procedure to characterize the qubit can be generalized to $(\CC^d)^{\otimes n}$, i.e., $n$-qudits, for every $d,n \in \NN$.

The core idea is the same. 
We use the teleportation protocol with qudit Paulis as the correction group to certify the $n$-qudit Pauli ``observables''.
These observables form a tomographically complete set on the space of operators on $\CC^d$.
This allows us to encode the matrix elements of generators of the symmetry group of the $n$-qudit state space into the correlations as in the qubit case, and subsequently extract the generators as teleportation maps.

We split the exposition into two parts. 
First we will generalize from one to $n$ qubits, and then, separately, from qubits to qudits.

%========================================================
% Pauli operators
%========================================================

\subsection{Some notation for the Pauli operators}\label{subsec:paulis}

The Pauli operators and their qudit generalizations will play a cental role in this section.
Here we establish the required notation.

Let $\{ |q\rangle \}_{q = 0}^{d-1} \subseteq \CC^d$ be an orthonormal basis.
Let $\omega := e^{i \frac{2\pi}d}$.
The \emph{(generalized) Pauli operators} are defined by their action on such a basis
\begin{align*}
    Z |q\rangle = \omega^{q} |q\rangle, \qquad 
    X |q\rangle = |q \oplus_d 1\rangle.
\end{align*}
The corresponding \emph{Weyl operators} on $(\CC^d)^{\otimes n}$ are defined as follows:
\begin{equation}\label{eqn:weyl}
    \forall (\vec p, \vec q) \in \ZZ_d^n \times \ZZ_d^n,\ w_{\vec p, \vec q} := \bigotimes_{j = 1}^n Z^{p_j}X^{q_j}.
\end{equation}

One important distinction between the qubit and qudit Pauli operators, is that the qudit versions are \emph{not hermitian} for $d>2$.
But we will none-the-less be ``measuring'' these operators.
We will do this by measuring in the PVM given by the eigenbasis of the Pauli operator, and then use the relative frequencies as weights for the appropriate phases to get the expected value.
Concretely, let $A_s$ and $B_t$ be qudit Pauli operators, and let $\{ P^{(s)}_a \}_{a \in \ZZ_d}$ and $\{ Q^{(t)}_b \}_{b \in \ZZ_d}$ be their respective eigenbasis PVMs.
Let the eigenvalues corresponding to the projectors $P^{(s)}_a$ and $Q^{(t)}_b$ be $\alpha_a$ and $\beta_b$ respectively, i.e., 
\begin{align*}
    \tr(A_s P^{(s)}_a) = \alpha_a, \qquad 
    \tr(B_t Q^{(t)}_b) = \beta_b.
\end{align*}
Then the expected values of $A_s \otimes B_t$ with respect to the state $|\psi\rangle$ are calculated as 
\begin{equation}\label{eqn:generalized_pauli_expt_vals}
    \langle A_s \otimes B_t \rangle = \sum_{a, b\in \ZZ_d} \alpha_a \beta_b \langle \psi | P^{(s)}_a \otimes Q^{(t)}_b | \psi \rangle,
\end{equation}
which is in general complex valued.

%========================================================
% n qubits
%========================================================

\subsection{The Group Generator Experiment for $n$ qubits}\label{subsec:group_gen_expt_n_qubits}

For the experiment for $n$ qubits we build a star network with $2n$ branches, with the branches divided into two sets of $n$.
An example for $n = 3$ is depicted in Fig.~\ref{fig:n_qubits}.
We will call the cental note of the network Erwin${}^\star$.

\begin{figure}[H]
    \centering
    \includegraphics[trim={0cm 3.5cm 0cm 3.5cm}, clip, width = \linewidth, keepaspectratio]{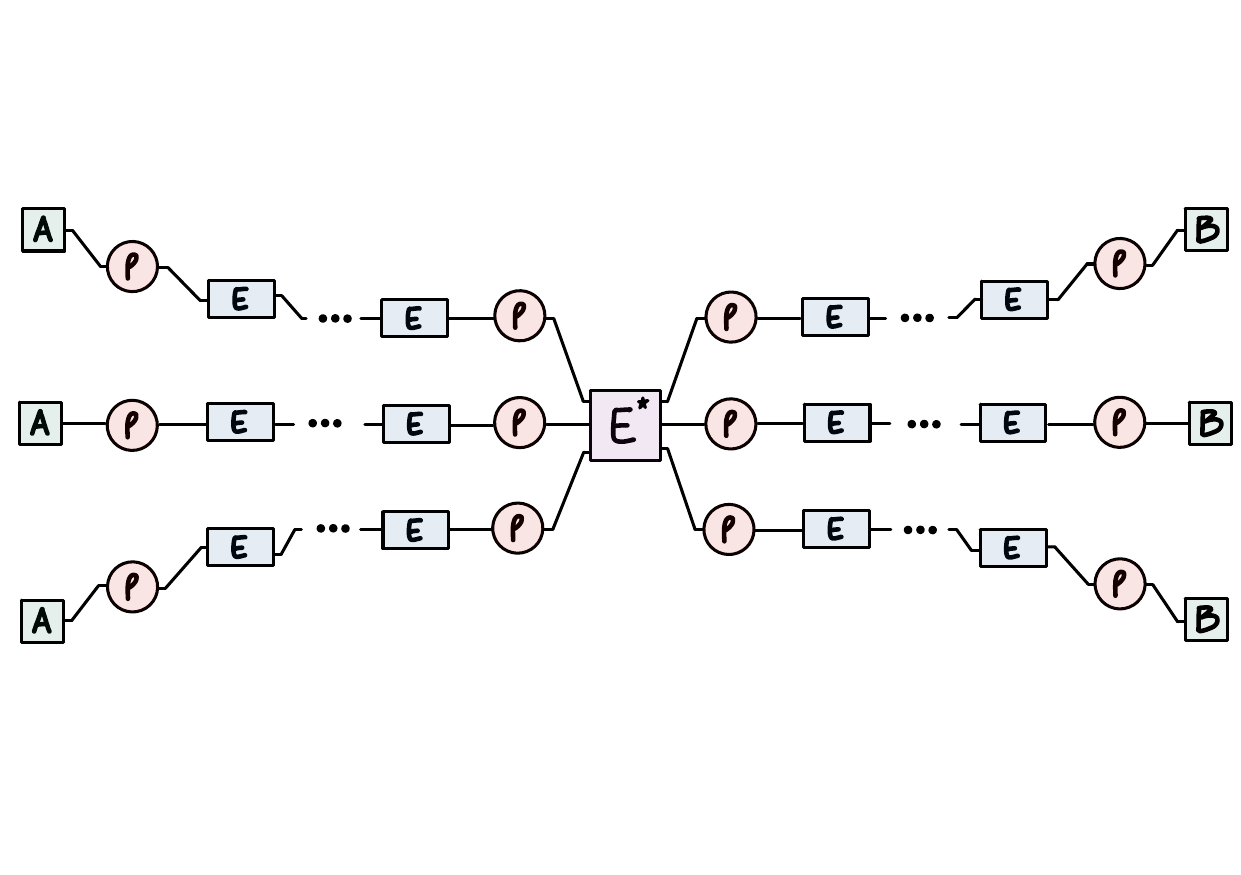}
    \caption{The setup for the teleportation hierarchy for three qubits}
    \label{fig:n_qubits}
\end{figure}

The main idea here is to first treat the bipartite star as $n$ independent chains, and independent certify qubit Pauli observables on each chain.
Then, on the central node, we couple different chains to test for the generators of the $n$-qubit symmetry group.

%========================================================
% n-qubits qm strategy
%========================================================

\subsubsection{Quantum realization}\label{subsec:quantum_realization_for_n_qubits}

For the quantum realization of the $n$ qubit group generator experiment, consider the a star graph of the kind Fig.~\ref{fig:n_qubits}, but with just Erwin${}^\star$ and no Erwins.
That is $n$ Alices, $n$ Bobs, one Erwin${}^\star$ and $2n$ copies of $\rho$.
The measurement of the Alices and Bobs, and all the states are identical to the single qubit version (Sec.~\ref{subsec:quantum_realization_for_qubits}).

Let $\xi_r$ denote the identity, and $\{\xi_r\}_r \subseteq U(2^n)$ be such that $\{ \xi_r (\cdot) \xi_r^\dag \}_r$ generate the entire symmetry group of the $n$-qubit state space.
Then the measurement basis for Erwin${}^\star$ given setting $r$ is defined by 
\begin{align*}
    \{ (w_{\vec p, \vec q}\xi_r^\dag)_{1, \dots, n} \otimes \Id_{n+1, \dots, 2n} \bigotimes_{i = 1}^n | \Phi^+ \rangle_{i, n+i} \ | \ \vec p, \vec q \in \ZZ_2^n \}.
\end{align*}

As stated, the number of outcomes of Erwin${}^\star$ and the size of the correlation matrices grow exponentially in $n$.
But a smart choice of generators mitigates this issue.
Let $\{ \eta_k \}_{k = 1,2} \subseteq U(2^2)$ be such that $\{ \eta_k (\cdot) \eta_k^\dag \}_{k = 1,2}$ generate the full symmetry group of the two-qubit state space.
Let $\eta_k^{(i, i+1)}$ denote that the operator acts on subsystems $i$ and $i+1$, i.e., 
\begin{align*}
    \eta_k^{(i, i+1)} = \Id^{\otimes i - 1} \otimes \eta_k \otimes \Id^{\otimes n - (i + 1)}.
\end{align*}
Then, it is know that the operators 
\begin{align*}
    \{\eta_k^{(i, i+1)}\ | \ k = 1,2 \ \text{and} \ i = 1, \dots, n-1  \}
\end{align*}
generate $U(2^n)$ (direct consequence of \cite{DiVincenzo1995twoqubituniversal} and \cite{breuillard2002densefreesubgroupslie}).
Thus we can separate the entire structure into $n-1$ experiment.
In experiment $i$, the cental node performs the two-qubit experiment for $A^{(i)}, A^{(i+1)}, B^{(i)}, B^{(i+1)}$ ($A^{(i)} \equiv$ the $i^{\mathrm{th}}$ Alice).
The example for three qubits is illustrated in Fig.~\ref{fig:n_qubits_n-1_expts}.
This in turn implies that the correlation matrix have constant size ($16 \times 16$) and the number of outcomes of Erwin${}^\star$ are at most $16$.

\begin{figure}[H]
    \centering
    \includegraphics[trim={0cm 3.5cm 0cm 3.5cm}, clip, width = 0.8\linewidth, keepaspectratio]{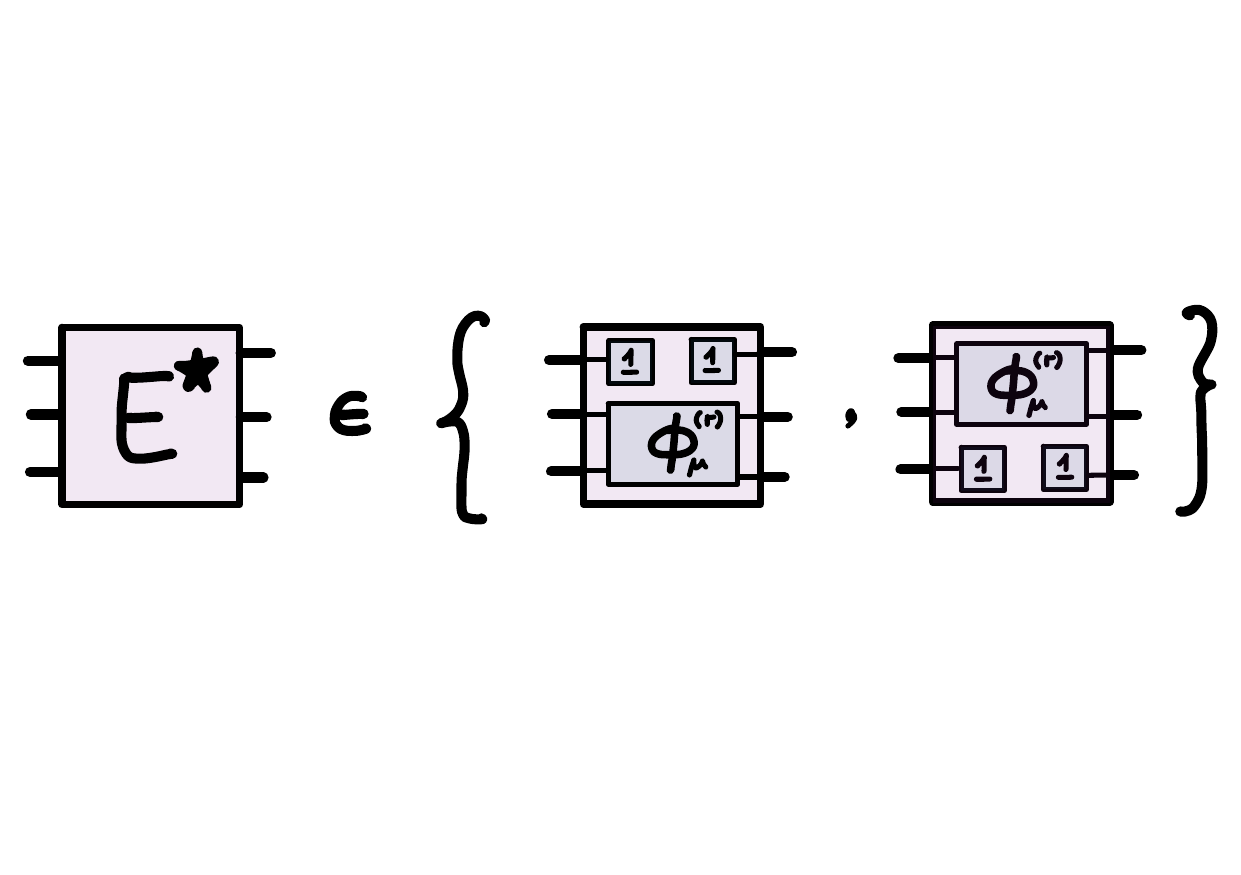}
    \caption{The two-qubit experiments sufficient for the three qubit case.}
    \label{fig:n_qubits_n-1_expts}
\end{figure}

%========================================================
% n-qubits teleportation hierarchy
%========================================================

\subsubsection{The teleportation hierarchy for $n$ qubits}\label{sec:teleportation_heirarchy_n_qubit}

The teleportation hierarchy for $n$ qubits is near identical to that of $1$ qubit, with one difference -- only Erwin${}^\star$ has the freedom to choose settings other that $0$.
The graphical structure of the hierarchy is as depicted in Fig.~\ref{fig:n_qubits}.
The condition also has the same structure Condition~\ref{cond:hierarchy_condition}.
The sixteen dimensional correlation matrices $\Cee^{(r)}_\mu$ are calculated from the quantum realization (Sec.~\ref{subsec:quantum_realization_for_n_qubits}) as it was in the case of one qubit.

%========================================================
% qudits
%========================================================

\subsection{The Group Generator Experiment for qudits}\label{subsec:group_gen_expt_qudits}

The setup for the qudit case is identical to the qubit case. 
What changes is the number of settings of Alice and Bob, and the number of outcomes of Alice, Bob and Erwin.
The setup is illustrated in Fig.~\ref{fig:qudit_gg_test}

\begin{figure}[H]
    \centering
    \includegraphics[trim={0cm 3.5cm 0cm 3.5cm}, clip, width = \linewidth, keepaspectratio]{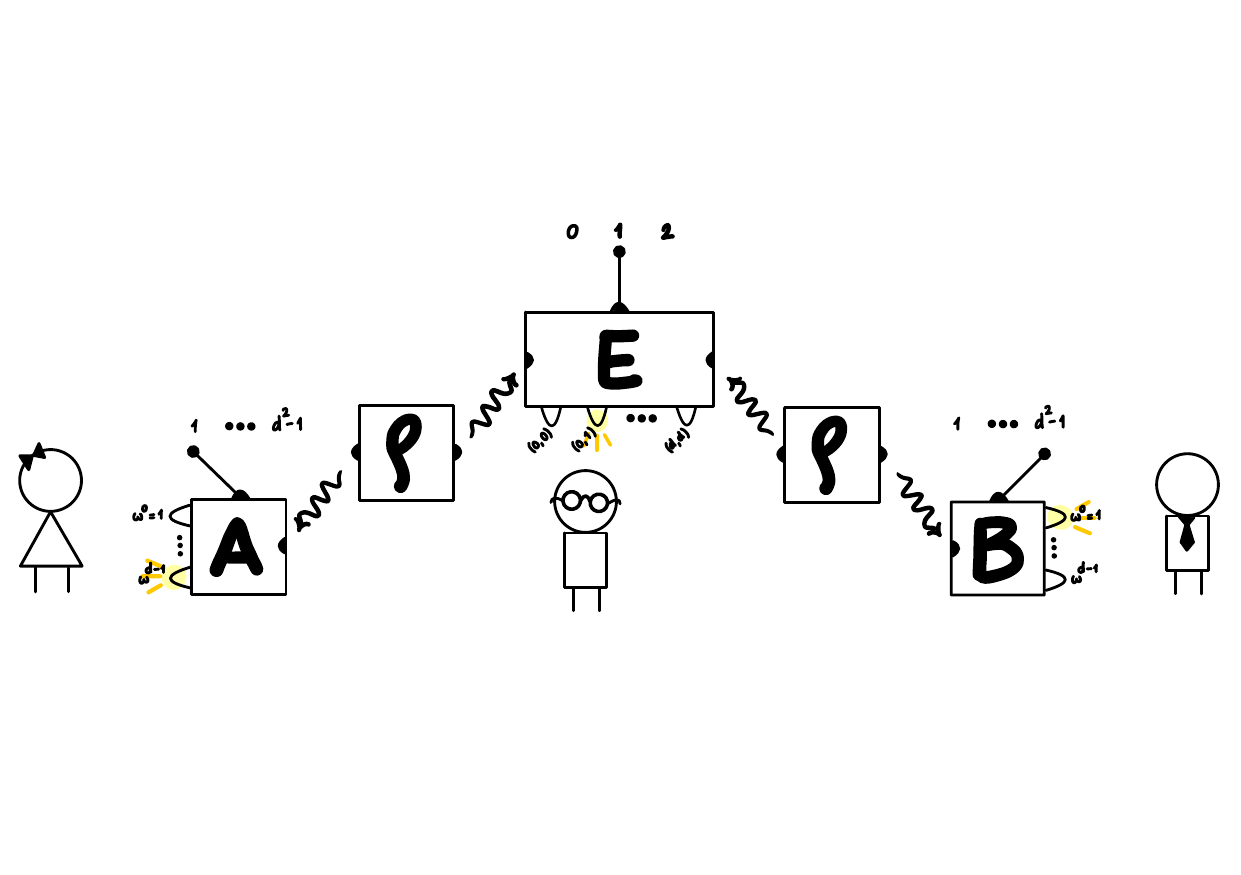}
    \caption{The entanglement swapping experiment for the qudit.}
    \label{fig:qudit_gg_test}
\end{figure}

In this case Erwin has $d^2$ outcomes, one corresponding to each generalized pauli operator (including identity).
Alice and Bob have $d^2 - 1$ settings, (Paulis minus identity) and $d$ outcomes per setting.
As discussed in Sec.~\ref{subsec:paulis}, the expectation values in this case will be calculated using Eqn.~(\ref{eqn:generalized_pauli_expt_vals}).

%========================================================
% qudits qm strategy
%========================================================

\subsubsection{Quantum realization}\label{subsec:quantum_realization_for_qudits}

Let $d \in \NN,\ d \geq 2$ be fixed.
Let $\{|k\rangle\}_{k \in \ZZ_d}$ be an ONB of $\CC^d$.
Let the bipartite state shared between Alice and Erwin, and Erwin and Bob be
\begin{align*}
    |\Phi_d^+\rangle := \tfrac1{\sqrt d} \sum_{k = 0}^{d-1} | kk\rangle.
\end{align*}
For index set $I := \ZZ_d^2 \setminus \{ (0,0) \}$, the observables of Alice and Bob are given 
\begin{align*}
    \forall (p,q) \in I,\ A_{p,q} = B_{p,q} = w_{p,q}.
\end{align*}
Let $\xi_0 = \Id$, and $\{ \xi_r \}_{r = 1,2} \subseteq U(d)$ be such that $\{ \xi_r (\cdot) \xi_r^\dag \}_{r = 1,2}$ generates the symmetry group of the qudit state space.
Then, for $r \in \ZZ_3$, the measurement bases of Erwin are given by 
\begin{align*}
    \{ w_{p,q}\xi_r^\dag \otimes \Id |\Phi_d^+\rangle \ | \ (p,q) \in \ZZ_d \times \ZZ_d \}.
\end{align*}

%========================================================
% qudits teleportation hierarchy
%========================================================

\subsubsection{The teleportation hierarchy for qudits}\label{sec:teleportation_heirarchy_qudit}

The teleportation hierarchy is once again structurally identical to the qubit case.

One of the core differences between the qubit and the qudit case is the correction group.
The adjoint action of the qudit Paulis correspond to the regular representation of the group $\ZZ_d \times \ZZ_d$ under vector addition modulo $d$.
The fact that it is regular representation leads to a straightforward generalization of the equiprobability lemma, Lem.~\ref{lem:prob_quarter_all}

\begin{lemma}\label{lem:qudit_prob_quarter_all}
    For the qudit setting ($d \in \NN$), for every $\vec \zeta \in \Z$, and every $\vec \mu \in (\ZZ_d^2)^{\len(\vec \zeta)}$, 
    \begin{align*}
        p(\vec \mu | \vec \zeta) = \big(\tfrac1{d^2}\big)^{\len(\vec \zeta)}.
    \end{align*}
\end{lemma}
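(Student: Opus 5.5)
We follow the proof of Lem.~\ref{lem:prob_quarter_all} verbatim, with the correction group $\ZZ_2^2$ replaced by $\ZZ_d\times\ZZ_d$. In the qudit realization the correlation matrices are indexed by the $d^2$ Weyl labels, with $(0,0)$ playing the role of $\1$ (the former index $3$), and Condition~\ref{cond:hierarchy_condition} has the form $\Cee^{(r)}_\mu=\K_\mu\Em^{(r)}$. Here $\K_\mu$ is the diagonal matrix of phases $\omega^{\langle\mu,\nu\rangle}$ (a symplectic form) obtained by conjugating $w_\nu$ with $w_\mu$. These $\K_\mu$ furnish the regular representation of $\ZZ_d^2$, up to the isomorphism $\ZZ_d^2\cong\widehat{\ZZ_d^2}$ given by the nondegenerate symplectic form. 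The matrix $\Em^{(r)}$ has block structure $(d^2-1)\oplus 1$ and is invertible, since it encodes $\xi_r$ acting on traceless Weyl operators composed with the transpose/conjugation coming from $\Phi^+_d$. The correlations may be complex, but everything below is linear, so this is harmless.

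For $\len(\vec\zeta)=1$ we use the measurement condition $\sum_\mu\phi^{(r)}_\mu=\1\otimes\1$ together with Eqn.~(\ref{eqn:T0}). This gives $\sum_\mu p(\mu|r)\K_\mu\Em^{(r)}=M$, where $M_{\nu\alpha}=\delta_{\nu,(0,0)}\delta_{(0,0),\alpha}$. Because of the block structure, $M(\Em^{(r)})^{-1}=M$, so $\sum_\mu p(\mu|r)\K_\mu=M$. The group average $\frac1{d^2}\sum_\mu\K_\mu$ is the projector onto the trivial isotypic component. Since the symplectic form is nondegenerate, $\sum_\mu\omega^{\langle\mu,\nu\rangle}=d^2\delta_{\nu,0}$, and this projector is exactly $M$. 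The diagonal matrices $\K_\mu$ are linearly independent, because their diagonals are the $d^2$ distinct characters of $\ZZ_d^2$. Hence $p(\mu|r)=1/d^2$.

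For the induction step we write $\vec\zeta=0\circ\vec\upsilon$ (the case $\vec\upsilon\circ 0$ is symmetric). We sum over the outcome of the padded Erwin and use the measurement condition again. The left-hand side then factorizes into $\rho(A_\nu\otimes\1)$ times the shorter chain's marginal, which by the induction hypothesis and $(\Cee^{(\vec\upsilon)}_{\vec m})_{(0,0),\alpha}=\delta_{(0,0),\alpha}$ equals $d^{-2\len(\vec\upsilon)}M$. On the right-hand side we get $\sum_{\mu_1}p(\vec\mu|\vec\zeta)\K_{\mu_1}\Cee^{(\vec\upsilon)}_{\vec m}$. The matrix $\Cee^{(\vec\upsilon)}_{\vec m}$ is a product of invertible $(d^2-1)\oplus1$ block matrices, so it is invertible and $M$ is fixed under right multiplication by its inverse. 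Linear independence of the $\K_\mu$ then gives $p(\vec\mu|\vec\zeta)=d^{-2\len(\vec\zeta)}$.

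The main obstacle is to check carefully that $\frac1{d^2}\sum_\mu\K_\mu=M$ in the qudit indexing. This requires that the trivial character occurs exactly once, i.e.\ that the diagonal phase pattern is genuinely regular, which rests on the nondegeneracy of the commutation phase $w_\mu w_\nu=\omega^{\langle\mu,\nu\rangle}w_\nu w_\mu$. A second point to check is that the row and column with index $(0,0)$ of every $\Cee$ is the unit vector. This holds because $\langle w_\nu\otimes\1\rangle=\tr(w_\nu)/d=0$ for $\nu\neq 0$ on the maximally mixed marginal.
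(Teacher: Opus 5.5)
Your proposal is correct and follows the same route as the paper, which only asserts that the proof of Lem.~\ref{lem:prob_quarter_all} carries over because the qudit Weyl corrections furnish the regular representation of $\ZZ_d^2$; you make this explicit by using nondegeneracy of the commutation phase to show both that $\tfrac{1}{d^2}\sum_\mu \K_\mu = M$ and that the $\K_\mu$ are linearly independent. The remaining steps are exactly those of the qubit proof: the measurement condition, the $(d^2-1)\oplus 1$ block structure giving $M\,\Cee^{-1}=M$, and the zero-padding induction.
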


This allows us to re-derive all the results of Sec.~\ref{subsec:comp_topo_semigrp} for qudits. 
This is because we never explicitly use the properties of the underlying group in Sec.~\ref{subsec:comp_topo_semigrp}.
The only result in Sec.~\ref{subsec:comp_topo_semigrp} that uses the properties of $\ZZ_2^2$ is Lem.~\ref{lem:reg_rep_proof}.
But the conditions of this lemma are also satisfied by $\ZZ_d^2$.

%================================================================================================
% CONCLUSIONS AND OUTLOOK
%================================================================================================

\section{Conclusions and Outlook}\label{sec:conclusion_and_outlook}

In this work, we have shown that the correlations of the group generator experiment (Secs.~\ref{sec:group_gen_expt_qubit},~\ref{subsec:group_gen_expt_n_qubits},~and~\ref{subsec:group_gen_expt_qudits}), together with the assumption that these correlations are stable under iterated teleportation, fully characterize finite-dimensional quantum theory.
More precisely, we construct a hierarchy of entanglement swapping experiments that is complete for $n$ quantum systems of dimension $d$ each.
One main outstanding question is whether, and/or under what assumptions can this hierarchy be truncated at a finite level. 
This opens up the possibility of a finitary realization of the characterization.

% For each $(\CC^d)^{\otimes n}$ we have constructed a hierarchy of entanglement swapping experiments that can certify the existence of all possible quantum correlation realizable with $(\CC^d)^{\otimes n}$.
% In particular, our hierarchy can certify the existence of correlation that have not explicitly been observed in experiment.
% Thus, the realizability of the group generator correlations Sec.~\ref{sec:group_gen_expt_qubit} (Sec.~\ref{subsec:group_gen_expt_n_qubits}, Sec.~\ref{subsec:group_gen_expt_qudits}) together with the assumption that these correlation are preserved under iterated teleportation, fully characterize finite dimensional quantum theory.

% Our results, as stated, require that we test an infinite set of correlation.
% One future direction of research is to ask whether it is truly necessary to have infinitely many tests, or, can we meaningfully draw the same conclusion from finitely many correlations.
% If it is indeed the case that finitely many steps are enough, this result can be converted into a proper self-test for finite dimensional quantum theory, with the possibility of a laboratory realization.

%================================================================================================
% ACKNOWLEDGEMENTS
%================================================================================================

\section{Acknowledgements}\label{sec:acknowledgments}

We are incredibly grateful to Johan {\AA}berg for all the discussions and feedback throughout this work.
We are also grateful to Sadra Boreiri, Nikolai Miklin, Pavel Sekatski, and Lucas Vieira for their helpful discussions.

LD and DG acknowledge support from 
Germany's Excellence Strategy -- Cluster of Excellence Matter and Light for Quantum Computing
(ML4Q) EXC 2004/2 (390534769).
XX and MOR acknowledge funding by the ANR for the JCJC grants LINKS (No. ANR-23-CE47-0003), the T-ERC QNET (No. ANR-24-ERCS-0008), and the ANR-NSF PRCI grant QUANTINT (No. ANR-25-CE48-3338). 
All authors acknowledge support from QuantERA II Programme that has received funding from the EU's H2020 research and innovation programme under the GA No.~101017733 COMPUTE.

\subsection{Artificial Intelligence usage declaration}

Artificial intelligence was \emph{not} used in any meaningful capacity in this work.

%%%%%%%%%%%%%%%%%%%%%%%%%%%%%%%%%%%%%%%%%%%%%%%%%%%%%%%%%%%%%%%%%%%%%%%%%%%%%%%%%%%%%%%%%%%%%%%%%%%%%%%%%%%%%%%%%%%%%%%%%%%%%%%%%

\bibliography{bibliography}

%%%%%%%%%%%%%%%%%%%%%%%%%%%%%%%%%%%%%%%%%%%%%%%%%%%%%%%%%%%%%%%%%%%%%%%%%%%%%%%%%%%%%%%%%%%%%%%%%%%%%%%%%%%%%%%%%%%%%%%%%%%%%%%%%

\onecolumngrid

\section{Appendix}

\subsection{Derivation of correlation matrices in the QM case}\label{app:qm_corr_mats}

Define 
\begin{align*}
    |\Phi^+_{r, \mu}\rangle\langle\Phi^+_{r, \mu}| := \sigma_\mu\xi_r^\dag \otimes \Id |\Phi^+\rangle\langle\Phi^+| (\sigma_\mu\xi_r^\dag)^\dag \otimes \Id.
\end{align*}
Given that Erwin obtained the outcome $\mu$ for setting $r$, we can calculate the state after entanglement swapping by using Eqn.~(58) from \cite{dmello2024entanglement} which reads as follows:
\begin{align*}
    & \tr_{23}\big(
    (A^\dag \otimes \Id)_{12}
    |\Phi^+ \rangle \langle \Phi^+ |_{12} 
    (A \otimes \Id)_{12}
    (B^\dag \otimes \Id)_{23}
    |\Phi^+ \rangle \langle \Phi^+ |_{23}
    (B \otimes \Id)_{23}
    (C^\dag \otimes \Id)_{34}
    |\Phi^+ \rangle \langle \Phi^+ |_{34}
    (C \otimes \Id)_{34}
    \big)
    \\
    &=
    \tfrac14 (C \overline{B}A)^\dag \otimes \Id |\Phi^+ \rangle \langle \Phi^+ | (C \overline{B}A) \otimes \Id.
\end{align*}
In our case, $A = C = \Id$, and $B = (\sigma_\mu \xi_r^\dag)^\dag = \xi_r \sigma_\mu$
Thus with probability $1/4$ (for each $\mu$), we get
\begin{align*}
    |\Psi_{r, \mu}\rangle\langle\Psi_{r, \mu}|
    :=
    \swap{|\Phi^+\rangle\langle\Phi^+|}{|\Phi^+_{r, \mu}\rangle\langle\Phi^+_{r, \mu}|}{|\Phi^+\rangle\langle\Phi^+|}
    = ( \overline{\xi_r\sigma_\mu})^\dag \otimes \Id |\Phi^+ \rangle \langle \Phi^+ | (\overline{\xi_r\sigma_\mu}) \otimes \Id.
\end{align*}
The unitary $\xi_r$ are nothing but
\begin{align*}
    \xi_0 = \Id, \qquad
    \xi_1 = e^{i\frac\theta2 \sigma_z}, \qquad
    \xi_2 = e^{i\frac\theta2 \sigma_x}.
\end{align*}
Thus 
\begin{align*}
    \overline{\xi_0} = \xi_0 = \xi_0^\dag, \qquad
    \overline{\xi_1} = \xi_1^\dag, \qquad
    \overline{\xi_2} = \xi_2^\dag,
\end{align*}
which gives 
\begin{align*}
    |\Psi_{r, \mu}\rangle\langle\Psi_{r, \mu}|
    = (\sigma_\mu\xi_r) \otimes \Id |\Phi^+ \rangle \langle \Phi^+ | (\xi_r^\dag\sigma_\mu) \otimes \Id.
\end{align*}
For this part we will use the convention $A_\mu = B_\mu = \sigma_\mu$ with the standard notation for the Pauli matrices.
It differs from the convention in the main body by a re-ordering of blocks.
Define the function
\begin{align*}
    s(\mu, \nu) 
    =
    \begin{cases}
        0 & \ \text{if} \ \mu = 0 \ \text{or} \ \nu = 0 \ \text{or} \ \mu = \nu \\
        1 & \ \text{else}.
    \end{cases},
\end{align*}
and calculate the correlation function 
\begin{align*}
    \Big(\Cee^{(r)}_{\mu}\Big)_{\nu\alpha}
    =
    \langle A_\nu B_\alpha\rangle_{r, \mu}
    &
    =
    \langle \Psi_{r, \mu}| A_\nu \otimes B_\alpha |\Psi_{r,\mu} \rangle
    \\
    &
    =
    \langle \Phi^+ | (\xi_r^\dag\sigma_\mu A_\nu \sigma_\mu\xi_r) \otimes B_\alpha  |\Phi^+ \rangle
    \\
    &
    =
    (-1)^{s(\mu, \nu)} \langle \Phi^+ | (\xi_r^\dag A_\nu \xi_r) \otimes B_\alpha  |\Phi^+ \rangle
\end{align*}
Now we use the following 
\begin{align*}
    \langle\Phi^+| \sigma_\nu \otimes \sigma_\alpha |\Phi^+\rangle = (-1)^{\delta_{2\nu}} \delta_{\nu\alpha}.
\end{align*}
First we can treat the two extreme cases.
If $\nu = 0$, then 
\begin{align*}
    \Big(\Cee^{(r)}_{\mu}\Big)_{0\alpha} 
    =
    (-1)^{s(\mu, 0)} \langle \Phi^+ | (\xi_r^\dag A_0 \xi_r) \otimes B_\alpha  |\Phi^+ \rangle
    =
    \langle \Phi^+ | \Id \otimes B_\alpha  |\Phi^+ \rangle
    =
    \delta_{0\alpha}
\end{align*}
Similarly, if $\alpha = 0$ then $\Big(\Cee^{(r)}_{\mu}\Big)_{\nu0} = \delta_{\nu0}$.

Assume $\nu \neq 0$ and $\alpha \neq 0$.
Then, 
\begin{itemize}
    \item 
    For $r = 0$ we have 

    \begin{align*}
        \Big(\Cee^{(0)}_{\mu}\Big)_{\nu\alpha} 
        =
        (-1)^{s(\mu, \nu)} \langle \Phi^+ | A_\nu \otimes B_\alpha  |\Phi^+ \rangle 
        =
        (-1)^{s(\mu, \nu)} (-1)^{\delta_{2\nu}} \delta_{\nu\alpha} 
    \end{align*}

    \item
    For $r = 1$ the rotations give 
    \begin{align*}
        \xi_1^\dag A_1 \xi_1 = A'_1 = \cos\theta A_1 + \sin\theta A_2, \quad
        \xi_1^\dag A_2 \xi_1 = A'_2 = -\sin\theta A_1 + \cos\theta A_2, \quad
        \xi_1^\dag A_3 \xi_1 = A'_3 = A_3.
    \end{align*}
    Therefore 
    \begin{align*}
        \Big(\Cee^{(1)}_{\mu}\Big)_{\nu\alpha}
        =
        (-1)^{s(\mu, \nu)} \langle \Phi^+ | A'_\nu \otimes B_\alpha  |\Phi^+ \rangle
    \end{align*}

    \item
    For $r = 2$ the rotations give 
    \begin{align*}
        \xi_2^\dag A_1 \xi_2 = A''_1 = A_1, \quad
        \xi_2^\dag A_2 \xi_2 = A''_2 = \cos\theta A_2 + \sin\theta A_3, \quad
        \xi_2^\dag A_3 \xi_2 = A''_3 = -\sin\theta A_2 + \cos\theta A_3.
    \end{align*}
    Therefore 
    \begin{align*}
        \Big(\Cee^{(2)}_{\mu}\Big)_{\nu\alpha}
        =
        (-1)^{s(\mu, \nu)} \langle \Phi^+ | A''_\nu \otimes B_\alpha  |\Phi^+ \rangle
    \end{align*}
\end{itemize}

%%%%%%%%%%%%%%%%%%%%%%%%%%%%%%%%%%%%%%%%%%%%%%%%%%%%%%%%%%%%%%%%%%%%%%%%%%%%%%%%%%%%%%%%%%%%%%%%%%%%%%%%%%%%%%%%%%%%%%%%%%%%%%%%%

\subsection{Compactness of GPT state spaces}\label{app:compact_state_sapce}

\begin{lemma*}
    The state space $\S^{(n)}$ is a compact set.
\end{lemma*}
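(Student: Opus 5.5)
The plan is to show that $\S^{(n)}$ is closed and bounded in the finite-dimensional space $\Span(\P^{(n)})^*$, and then invoke Heine--Borel. Closedness should be routine. $\S^{(n)}$ is the intersection of $\D^{(n)}$ with the affine hyperplane $\{\rho : \rho(\1^{\otimes n}) = 1\}$. Since compactness is really only used for limits of states, I will work with the closure of $\D^{(n)}$, or simply assume the state cone is closed. That closure is still contained in $(\P^{(n)})'$, which is an intersection of closed half-spaces $\{\rho : \rho(e)\ge 0\}$ over $e \in \P^{(n)}$ and hence closed. So it suffices to show that the larger set
\begin{align*}
    \{\rho \in (\P^{(n)})' \ | \ \rho(\1^{\otimes n}) = 1\}
\end{align*}
is bounded.

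Boundedness is the key step, and the main obstacle is that it requires $\1^{\otimes n}$ to lie in the interior of $\P^{(n)}$ relative to $\Span(\P^{(n)})$. For $n=1$ this is exactly the assumption that $\1$ lies in the relative interior of $\P^{(1)}$. For general $n$ I would argue as follows. The GPT closure contains all product effects, so $\P^{(1)}\otimes\cdots\otimes\P^{(1)} \subseteq \P^{(n)}$. If $\1 \pm \epsilon f_i \in \P^{(1)}$ for a basis $\{f_i\}$ of $\Span\P^{(1)}$, then expanding tensor products shows that $\1^{\otimes n} \pm \epsilon' F$ lies in the product cone for every product basis element $F$. This gives interiority with respect to the span of the product effects. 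If $\Span\P^{(n)}$ is strictly larger (for instance, because it contains entangled effects), I would instead add interiority of $\1^{\otimes n}$ as a standing assumption or derive it from the closure construction; I expect this to be the delicate point.

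Given interiority, pick a basis $\{E_j\}$ of $\Span\P^{(n)}$ and $\epsilon>0$ such that $\1^{\otimes n} \pm \epsilon E_j \in \P^{(n)}$. For any $\rho$ in the set above, positivity on $\P^{(n)}$ gives $1 \pm \epsilon\,\rho(E_j) \ge 0$, that is, $|\rho(E_j)| \le 1/\epsilon$. Since $\rho$ is determined by its values on a basis, every coordinate is uniformly bounded, and so the set is bounded.

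Finally, $\S^{(n)}$ is a closed subset of a closed, bounded set in a finite-dimensional real space, so it is compact by Heine--Borel. If the paper intends $\D^{(n)}$ to be closed, this is immediate; otherwise the statement should be read for the closed state cone, which is exactly what the argument in Lem.~\ref{lem:comp_top_sg} requires.
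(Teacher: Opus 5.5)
Your argument is correct under the same standing assumptions the paper uses. It shares the paper's skeleton: restrict to $\Span(\P^{(n)})^*$, get closedness from closedness of $\D^{(n)}$, get boundedness from interiority of $\1^{\otimes n}$, and conclude by Heine--Borel. The difference is in how you extract boundedness.

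The paper first shows that $\1^{\otimes n}$ is strictly positive on $\D^{(n)}\setminus\{0\}$: a state with $\sigma(\1^{\otimes n})=0$ could be made negative on a small perturbation of $\1^{\otimes n}$ that stays inside the cone. It then takes the minimum $\lambda_{\min}>0$ of $\1^{\otimes n}$ over the compact set $\D^{(n)}\cap S_1$, which gives $\S^{(n)}\subseteq B_{1/\lambda_{\min}}$. You instead bound coordinates directly, reading $|\rho(E_j)|\le 1/\epsilon$ off $\1^{\otimes n}\pm\epsilon E_j\in\P^{(n)}$. This is more elementary and gives an explicit radius. It also bounds the entire normalized slice of $(\P^{(n)})'$, so closedness of $\D^{(n)}$ is needed only at the final step, whereas the paper uses it inside the boundedness argument as well.

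On the two points you flag, the paper simply takes both as given. It asserts that $\D^{(n)}$ is closed and that $\1^{\otimes n}$ lies in the relative interior of $\P^{(n)}$ ``by definition,'' although Sec.~\ref{subsec:GPTs} states interiority only for $\1\in\P^{(1)}$ and does not list closedness among the properties of the state cones. Your product-effect argument therefore supplies more justification than the paper does, and it is sound: summing $\bigotimes_k(\1+s_k\epsilon f_{i_k})$ over all sign vectors $s$ with fixed $\prod_k s_k=\pm1$ gives $2^{n-1}(\1^{\otimes n}\pm\epsilon^n F)$.

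This settles the issue whenever $\P^{(1)}$ spans $V$, since product effects then already span $V^{\otimes n}\supseteq\Span\P^{(n)}$. The residual case you worry about can only arise when $\Span\P^{(1)}\subsetneq V$. There, with the definitions as stated, an additional assumption is genuinely required, and the paper's ``by definition'' is making exactly that assumption.
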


\begin{proof}
    The definition of $\D^{(n)}$ is such that the cone lies entirely in $\Span(\P^{(n)})^*$, where it is a closed, pointed, generating, convex cone (see Sec.~\ref{subsec:GPTs}).
    Thus it is sufficient for the sake of this proof to restrict the ambient spaces of $\D^{(n)}$ and $\P^{(n)}$ to be $\Span(\P^{(n)})^*$ and $\Span(\P^{(n)})$ respectively, and thus assume $\D^{(n)}$ and $\P^{(n)}$ to be closed, pointed, generating, convex cones (see footnotes in Sec.~\ref{subsec:GPTs} for definitions).

    Since we are in finite dimensions it is sufficient to show that $\S^{(n)}$ is closed and bounded.
    The cone $\D^{(n)}$ is closed by definition (Sec.~\ref{subsec:GPTs}) and $\1^{\otimes n}$ is a continuous linear functional.
    $\S^{(n)}$ arises as the intersection of $\D^{(n)}$ and the pre-image of $1$ under $\1^{\otimes n}$ (c.f.\ definition in Sec.~\ref{subsec:GPTs}) and thus is closed.

    By definition $\1^{\otimes n}$ is in the relative interior of $\P^{(n)} \subset (\D^{(n)})'$ (the polar dual of $\D^{(n)}$, see footnote in Sec.~\ref{subsec:GPTs}).
    Therefore, taking the ambient space to be $\Span(\P^{(n)})$ implies that $\1^{\otimes n}$ is in fact in the interior of $\P^{(n)}$.
    As a consequence, it is a strictly positive functional on $\D^{(n)}$, i.e., it is positive on $\D^{(n)} \setminus \{0\}$.
    Assume for the sake of reaching a contradiction that $\1^{\otimes n}$ is positive but not strictly positive on $\D^{(n)}$.
    Then, there exists some $\sigma \in \D^{(n)}$ such that $\sigma(\1^{\otimes n}) = 0$.
    Since $\1^{\otimes n}$ is a continuous functional in the interior of $(\D^{(n)})'$, there exists some perturbation of $p$ of $\1^{\otimes n}$ in the interior of $(\D^{(n)})'$ such that $\sigma(p) < 0$.
    This contradicts the fact that $p \in (\D^{(n)})'$.
    
    Now, let $S_1$ be the unit sphere with respect to the Euclidean norm.
    Because $\D^{(n)}$ is closed, the intersection $\D^{(n)} \cap S_1$ is a compact set.
    Let $\lambda_{\mathrm{min}} > 0$ be the minimum value attained by $\1^{\otimes n}$ on $\D^{(n)} \cap S_1$.
    Then the norm ball $B_{1/\lambda_{\mathrm{min}}}$ of radius $1/\lambda_{\mathrm{min}}$ contains $\S^{(n)}$.
    Thus $\S^{(n)}$ is compact.
\end{proof}

%%%%%%%%%%%%%%%%%%%%%%%%%%%%%%%%%%%%%%%%%%%%%%%%%%%%%%%%%%%%%%%%%%%%%%%%%%%%%%%%%%%%%%%%%%%%%%%%%%%%%%%%%%%%%%%%%%%%%%%%%%%%%%%%%

\end{document}